\theoremstyle{definition}
\newtheorem{theo}{Theorem}[section]
\newtheorem{defi}[theo]{Definition}
\newtheorem{prop}[theo]{Proposition}
\newtheorem{defiprop}[theo]{Definition/Proposition}
\newtheorem{cor}[theo]{Corollary}
\newtheorem{lemma}[theo]{Lemma}
\newtheorem{exa}[theo]{Example}
\newtheorem{rem}[theo]{Remark}
\numberwithin{equation}{section}
\newcommand{\DS}{\displaystyle}
\newcommand{\N}{{\mathbb N}}
\newcommand{\F}{{\mathbb F}}
\newcommand{\Q}{{\mathbb Q}}
\newcommand{\cA}{{\mathcal A}}
\newcommand{\cC}{{\mathcal C}}
\newcommand{\cF}{{\mathcal F}}
\newcommand{\cM}{{\mathcal M}}
\newcommand{\cB}{{\mathcal B}}
\newcommand{\cS}{{\mathcal S}}
\newcommand{\cU}{{\mathcal U}}
\newcommand{\cV}{{\mathcal V}}
\newcommand{\T}{\mbox{$^{\sf T}$}}
\newcommand{\subspace}[1]{\mbox{$\langle{#1}\rangle$}}
\newcommand{\inner}[2]{\mbox{$\langle{#1}\mid{#2}\rangle$}}
\newcommand{\Hom}{\mbox{\rm Hom}}
\newcommand{\Fnm}{\mbox{$\F^{n\times m}$}}
\newcommand{\rk}{{\rm rk}}
\newcommand{\rowrk}{\mbox{$\text{\rm rowrk}_{\F}$}}
\newcommand{\drk}{\mbox{${\rm d}_{\rm rk}$}}
\newcommand{\maxrk}{{\rm maxrk}}
\newcommand{\dd}{\mbox{${\rm d}$}}
\newcommand{\tr}{{\rm tr}}
\newcommand{\rr}{{\rm r}}
\newcommand{\cc}{{\rm c}}
\newcommand{\cl}{{\rm cl}}
\newcommand{\GL}{{\rm GL}}
\newcommand{\rowsp}{\mbox{\rm rowsp}}
\newcommand{\colsp}{\mbox{\rm colsp}}
\newcommand{\mmid}{\!\mid\!}
\newcommand{\qPM}{$q$-PM}
\newcommand\pperp{\protect\mathpalette{\protect\independenT}{\perp}}
\def\independenT#1#2{\mathrel{\rlap{$#1#2$}\mkern2mu{#1#2}}}
\newcommand{\Smalltwomat}[2]{\mbox{$\left(\begin{smallmatrix}{#1}\\[.4ex]{#2}\end{smallmatrix}\right)$}}
\newcounter{alp}
\newcounter{ara}
\newcounter{rom}
\newenvironment{alphalist}{\begin{list}{(\alph{alp})\hfill}{\usecounter{alp}
     \topsep-1.4ex \labelwidth.7cm \leftmargin.7cm \labelsep0cm
     \rightmargin0cm \parsep0ex \itemsep.5ex
     \partopsep-1.4ex}}{\end{list}}
\newenvironment{arabiclist}{\begin{list}{(\arabic{ara})\hfill}{\usecounter{ara}
     \topsep-1ex \labelwidth.7cm \leftmargin.7cm \labelsep0cm
     \rightmargin0cm \parsep0ex \itemsep.5ex
     \partopsep-1ex}}{\end{list}}
\newenvironment{mylist2}{\begin{list}{(\arabic{ara})\hfill}{\usecounter{ara}
     \topsep-1ex \labelwidth1cm \leftmargin1cm \labelsep0cm
     \rightmargin0cm \parsep0ex \itemsep.5ex
     \partopsep-1ex}}{\end{list}}
\newenvironment{mylist3}{\begin{list}{(\arabic{ara})\hfill}{\usecounter{ara}
     \topsep-1ex \labelwidth1.2cm \leftmargin1.2cm \labelsep0cm
     \rightmargin0cm \parsep0ex \itemsep.5ex
     \partopsep-1ex}}{\end{list}}
\let\@fnsymbol\@arabic
\begin{document}

\title{$q$-Polymatroids and Their Relation to Rank-Metric Codes}
\author{Heide Gluesing-Luerssen\thanks{Corresponding author. Department of Mathematics, University of Kentucky, Lexington KY 40506-0027, USA; heide.gl@uky.edu. HGL was partially supported by the grant \#422479 from the Simons Foundation.}\quad and Benjamin Jany\thanks{Department of Mathematics, University of Kentucky, Lexington KY 40506-0027, USA; benjamin.jany@uky.edu.}}

\date{March 10, 2022}
\maketitle
	
\begin{abstract}
\noindent It is well known that linear rank-metric codes give rise to $q$-polymatroids.
Analogously to matroid theory one may ask whether a given $q$-polymatroid is representable by a rank-metric code.
We provide an answer by presenting an example of a $q$-matroid that is not representable by any linear rank-metric code and, via a relation to paving matroids, provide examples of various $q$-matroids that are not representable by $\F_{q^m}$-linear rank-metric codes.
We then go on and introduce deletion and contraction for $q$-polymatroids and show that they are mutually dual and
correspond to puncturing and shortening of rank-metric codes.
Finally, we introduce a closure operator along with the notion of flats and show that the generalized rank weights of a
rank-metric code are fully determined by the flats of the associated $q$-polymatroid.
\end{abstract}

\textbf{Keywords:} Rank-metric codes, $q$-matroids, $q$-polymatroids, representability.

\section{Introduction}\label{S-Intro}
Rank-metric codes -- originally introduced by Delsarte~\cite{Del78} and later independently re-discovered by
Gabidulin~\cite{Gab85} as well as Roth~\cite{Ro91} -- have been in the focus of algebraic coding theory
throughout the last 15 years thanks to their suitability for communication networks.
Their coding-theoretic properties have been studied in detail, and various constructions of optimal codes, such as MRD codes,
have been found. For details we refer to the vast literature.

In this paper we focus on the algebraic and combinatorial aspects of rank-metric codes and study them with the aid of associated $q$-polymatroids.
We will focus on linear rank-metric codes, that is, subspaces of some matrix space $\F_q^{n\times m}$, endowed with
the rank metric.
On various occasions we will consider $\F_{q^m}$-linear rank-metric codes, that is,
codes that turn into $\F_{q^m}$-subspaces of $\F_{q^m}^n$ under a suitable identification of
$\F_q^{n\times m}$ with $\F_{q^m}^n$.
Not surprisingly, the algebraic and combinatorial properties of a rank-metric code depend on the `degree of linearity'.

In~\cite{JuPe18} Jurrius/Pellikaan introduce $q$-matroids and show that $\F_{q^m}$-linear rank-metric codes give rise to $q$-matroids, thus providing a vast variety of examples of $q$-matroids.
As the terminology indicates, $q$-matroids  form the $q$-analogue of matroids: instead of subsets of a finite set one considers subspaces of a finite-dimensional vector space over a finite field.
Furthermore, Jurrius/Pellikaan present various cryptomorphic definitions of $q$-matroids.
In~\cite{BCJ22} (and the precursor~\cite{BCIJS20}), Byrne and co-authors considerably extend the list of
cryptomorphic definitions.
As has been shown in \cite{JuPe18,BCIJS20,BCJ22}, the theory of $q$-matroids nicely parallels the theory of matroids.
It should be noted that $q$-matroids appeared already much earlier in the Ph.D.~thesis \cite{Cra64} but remained unnoticed in the coding community until~\cite{JuPe18}.

While $\F_{q^m}$-linear rank-metric codes give rise to $q$-matroids, this is not the case for $\F_q$-linear rank-metric codes.
However, as shown by Gorla and co-authors in~\cite{GJLR19}  as well as Shiromoto~\cite{Shi19} and
Ghorpade/Johnson~\cite{GhJo20}, $\F_q$-linear rank-metric codes induce $q$-polymatroids.
This means that the rank function attains rational values.
As for classical polymatroids this seemingly slight generality in the rank function causes $q$-polymatroids to be much less rigid than
$q$-matroids.
An even further generalization appears in~\cite{BMS20}, where Britz and co-authors study $q$-demimatroids associated with rank-metric codes.

In this paper we will make further contributions to the theory of $q$-polymatroids.
Different from \cite{GJLR19,GhJo20,Shi19} we will study $q$-polymatroids over general $n$-dimensional $\F_q$-vector spaces~$E$
rather than~$\F_q^n$.
This forces us to revisit a duality result from~\cite{GJLR19} and show that the equivalence class of the
dual $q$-polymatroid does not depend on the choice of the non-degenerate symmetric bilinear form on~$E$.
The purpose of this slight generalization becomes clear only when we study deletions and contractions as the latter naturally
lead to general ground spaces.
Special care in the choice of the bilinear forms is needed to show that deletion and contraction are
mutually dual (up to equivalence).
With all notions properly in place, we then show that deletion and contraction correspond to puncturing and shortening
of rank-metric codes.

Naturally, one may wonder whether a given $q$-polymatroid is representable in the sense that it arises from a rank-metric code.
Making use of non-representable (classical) matroids, we present some $q$-matroids that are not representable by
$\F_{q^m}$-linear rank-metric codes, thereby answering an open problem in~\cite{JuPe18} in the negative.
We then go a step further and present a $q$-matroid that is not representable by any $\F_q$-linear rank-metric code.
It remains an open question whether representability of a $q$-matroid by an $\F_q$-linear code implies representability by an
$\F_{q^m}$-linear code.

It is known from~\cite{GJLR19} that if $m\geq n$, then the column $q$-polymatroid of an MRD code is the uniform $q$-matroid of
rank $n-d+1$, where~$d$ is the rank distance of the code.
It is easily seen that this is not the case for $m<n$.
However, it turns out that for $m=n-1$ the $q$-polymatroid is fully determined by the parameters $(m,d,q)$ of the MRD code (and is not a $q$-matroid unless $d=1$), while this is not the case anymore if $m<n-1$.

Finally, we introduce a closure operator and the corresponding notion of flats for $q$-polymatroids.
They generalize the analogous notions for matroids and $q$-matroids.
However, just like for polymatroids, the lattice of flats of a $q$-polymatroid does not enjoy the same properties as for 
$q$-matroids.
Nonetheless, if the $q$-polymatroid arises from a rank-metric code, the collection of flats turns out to be closely related to the code:
the generalized weights of the code are fully determined by the flats.
In this context, we will also clarify a subtle issue between the various definitions of generalized weights for square $\F_{q^m}$-linear rank-metric codes.

\medskip
\textbf{Notation:}
Throughout, let $\F=\F_q$ be a field of order~$q$ and~$E$ be a finite-dimensional $\F$-vector space.
We write $V\leq W$ if~$V$ is a subspace of the vector space~$W$.
A $k$-dimensional subspace is called a $k$-space.
The collection of subspaces of a vector space~$X$ is denoted by~$\cV(X)$.
The notation $[n]$ is used for the set $\{1,\ldots,n\}$.
Finally, $e_i,i\in[\ell],$ and $E_{ij},i\in[\ell],j\in[m],$ denote the standard basis vectors in~$\F^\ell$ and the standard basis matrices in
$\F^{\ell\times m}$, respectively (that is, $E_{ij}$ has entry~$1$ at position~$(i,j)$ and~$0$ elsewhere).

\section{Preliminaries on $q$-Polymatroids}\label{S-Prelims}
The following definition is from \cite[Def.~4.1]{GJLR19} of Gorla et al., with the sole difference that we require rank functions to assume
rational values.

\begin{defi}\label{D-PMatroid}
Set $\cV=\cV(E)$.
A \emph{$q$-rank function} on~$E$ is a map  $\rho: \cV\longrightarrow\Q_{\geq0}$  satisfying:
\begin{mylist2}
\item[(R1)\hfill] Dimension-Boundedness: $0\leq\rho(V)\leq \dim V$  for all $V\in\cV$;
\item[(R2)\hfill] Monotonicity: $V\leq W\Longrightarrow \rho(V)\leq \rho(W)$  for all $V,W\in\cV$;
\item[(R3)\hfill] Submodularity: $\rho(V+W)+\rho(V\cap W)\leq \rho(V)+\rho(W)$ for all $V,W\in\cV$.
\end{mylist2}
A \emph{$q$-polymatroid (\qPM) on~$E$} is a pair $(E,\rho)$, where $\rho: \cV\longrightarrow\Q_{\geq0}$ is a
$q$-rank function.
The value $\rho(E)$ is called the \emph{rank} of the \qPM.
If $\rho$ is the zero map, we call the \qPM{} \emph{trivial}.
\end{defi}

The following additional notions for \qPM{}s will be useful.

\begin{defi}\label{D-PMatroidNotions}
Let $\cM=(E,\rho)$ be a \qPM{}.
A number $\mu\in\Q_{>0}$ is a \emph{denominator} of~$\rho$ (and $\cM$) if $\mu\rho(V)\in\N_0$ for all $V\in\cV(E)$.
The smallest denominator is called the \emph{principal denominator}.
A \qPM{} with principal denominator~$1$ (i.e., $\rho(V)\in\N_0$ for all~$V$) is called a \emph{$q$-matroid}.
We declare~$1$ the principal denominator of the trivial \qPM{}.
\end{defi}

Let us relate our definition  to the literature.
First of all, a $q$-matroid in the sense of Jurrius/ Pellikaan~\cite{JuPe18} is exactly a $q$-matroid as defined above.
Next, as already mentioned, our definition coincides with that in \cite[Def.~4.1]{GJLR19} by Gorla et al.\ except that our rank functions take rational values.
As we will see in Section~\ref{S-RMCMatroids}, this is indeed the case for the \qPM{}s induced by rank-metric codes.
Finally, for any $r\in\N$ a $(q,r)$-polymatroid as in \cite[Def.~2]{Shi19} by Shiromoto can be turned into a \qPM{} with denominator~$r$ by
dividing the rank function by~$r$.
Conversely, given a \qPM{} $(E,\rho)$ with denominator~$\mu$, then $(E,\mu\rho)$ is a $(q,\lceil\mu\rceil)$-polymatroid in the sense of \cite{Shi19}.
Finally, denominators arise for the definition of independent spaces of \qPM{}s; see \cite{GLJ21I}.

\begin{rem}\label{R-PropertiesPMatroid}
\begin{arabiclist}
\item Every denominator~$\mu$ of a \qPM{} $(E,\rho)$ satisfies $\mu\geq1$.
         Indeed, by (R1) $\rho(V)\leq 1$ for all $1$-spaces~$V$, and by~(R3) $\rho$ is the zero map
         if and only if $\rho(V)=0$ for all $1$-spaces~$V$.
\item Let $(E,\rho)$ be a non-trivial \qPM. For $V\in\cV(E)$ write $\rho(V)=\alpha_V/\beta_V$ with
         $\alpha_V,\,\beta_V\in\N$ relatively prime.
         Then the principal denominator is given by
         \[
             \mu=\frac{\text{lcm}\{\beta_V\mid V\in\cV(E)\}}{\gcd\{\alpha_V\mid V\in\cV(E)\}},
          \]
          and $\mu\N$ is the set of all denominators of $(E,\rho)$.
\end{arabiclist}
\end{rem}

The following $q$-matroids will occur occasionally. They can also be found at \cite[Ex.~4]{JuPe18}.
One easily verifies that the map~$\rho$ is indeed a rank function.

\begin{defi}\label{D-Uniform}
Let $\dim E=\ell$.
Fix  $k\in[\ell]$ and define $\rho(V)=\min\{k,\dim V\}$ for $V\in\cV(E)$. Then $(E,\rho)$ is a $q$-matroid.
It is called the \emph{uniform matroid on~$E$ of rank~$k$} and denoted by $\cU_{k}(E)$.
\end{defi}

Some of the basic properties for $q$-matroids derived in \cite[Sec.~3]{JuPe18} hold true for \qPM{}s as well.
We spell out the following ones, which we will need later on.
The proofs are identical to the ones in \cite[Prop.~6 and~7]{JuPe18}.

\begin{prop}\label{P-RankVx}
Let $(E,\rho)$ be a \qPM.
\begin{alphalist}
\item Let $V,W\in\cV(E)$.  Suppose $\rho(V+\subspace{x})=\rho(V)$ for all $x\in W$.
         Then $\rho(V+W)=\rho(V)$.
\item Let $V\in\cV(E)$ and $X,Y\in\cV(E)$ be $1$-spaces such that $\rho(V)=\rho(V+X)=\rho(V+Y)$.
        Then $\rho(V+X+Y)=\rho(V)$.
\end{alphalist}
\end{prop}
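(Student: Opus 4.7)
The plan is to establish part (b) first by a direct application of submodularity, and then derive part (a) by an induction on $\dim W$ that mimics the same submodularity trick.

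For (b), I would apply (R3) to the pair $V+X$ and $V+Y$. Their sum equals $V+X+Y$, and their intersection contains $V$, so by (R2) the intersection has rank at least $\rho(V)$. Feeding these two facts into the submodularity inequality yields
\[
\rho(V+X+Y)+\rho(V)\leq \rho(V+X+Y)+\rho\bigl((V+X)\cap(V+Y)\bigr)\leq \rho(V+X)+\rho(V+Y)=2\rho(V),
\]
so $\rho(V+X+Y)\leq\rho(V)$. The reverse inequality is immediate from (R2).

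For (a), I would induct on $k=\dim W$. The cases $k=0$ and $k=1$ are trivial from the hypothesis (for $k=0$ we have $V+W=V$, and for $k=1$ any nonzero $x\in W$ spans $W$). For $k\geq 2$, pick any hyperplane $W'$ of $W$ and any $z\in W\setminus W'$. The hypothesis of (a) restricts verbatim to every $x\in W'\subseteq W$, so the inductive hypothesis gives $\rho(V+W')=\rho(V)$, while $\rho(V+\subspace{z})=\rho(V)$ holds by hypothesis. Applying (R3) to $V+W'$ and $V+\subspace{z}$—whose sum is $V+W$ and whose intersection again contains $V$—produces $\rho(V+W)+\rho(V)\leq 2\rho(V)$, and (R2) closes the argument.

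I do not anticipate any genuine obstacle. The statement is purely a structural consequence of the interplay between (R2) and (R3), and the rationality of~$\rho$ (as opposed to integrality in the $q$-matroid case) plays no role anywhere; this is precisely why the proofs given in \cite[Prop.~6 and 7]{JuPe18} for $q$-matroids carry over unchanged to $q$-polymatroids.
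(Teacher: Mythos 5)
Your proof is correct, and it is essentially the argument the paper relies on: the paper gives no proof itself but refers to \cite[Prop.~6 and~7]{JuPe18}, whose proofs use exactly this combination of submodularity applied to $V+X$ and $V+Y$ (whose intersection contains~$V$) together with monotonicity, and an induction adding one generator of~$W$ at a time for part~(a). No gaps.
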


The following notion of equivalence is from  \cite[Def.~4.4]{GJLR19}.

\begin{defi}\label{D-EquivMatroid}
Two \qPM{}s $\cM_i=(E_i,\rho_i),\,i=1,2,$  are \emph{equivalent}, denoted by $\cM_1\approx\cM_2$,
if there exists an $\F$-isomorphism $\alpha\in\Hom_{\F}(E_1,E_2)$ such that $\rho_2(\alpha(V))=\rho_1(V)$ for all $V\in\cV(E_1)$.
\end{defi}

At this point we want to briefly discuss a more general notion of equivalence for \qPM{}s.

\begin{rem}\label{R-TrafoExact}
Two \qPM{}s $\cM_i=(E_i,\rho_i),\,i=1,2,$  are \emph{scaling-equivalent}
if there exists an $\F$-isomorphism $\alpha\in\Hom_{\F}(E_1,E_2)$ and $a\in\Q_{>0}$ such that $\rho_2(\alpha(V))=a\rho_1(V)$ for all $V\in\cV(E_1)$.
This notion makes sense for \qPM{}s because there exist non-trivial \qPM{}s that do not attain the upper bound in (R1) non-trivially.
We briefly elaborate.
Let us call a \qPM{} $\cM=(E,\rho)$ \emph{exact} if  there exists some nonzero space $\hat{V}\in\cV$ such that $\rho(\hat{V})=\dim\hat{V}$.
Clearly, a non-trivial $q$-matroid is exact, but there exist non-trivial non-exact \qPM{}s; see \cref{E-NotNiceMatroids} in the next section.
It follows immediately from the submodularity in (R3) that a \qPM{}~$\cM=(E,\rho)$ is exact if and only if there exists a $1$-space~$V$ such that $\rho(V)=1$.
This implies that any denominator of an exact \qPM{} is an integer.
One can turn a non-exact \qPM{} $(E,\rho)$  into an exact one using scaling-equivalence.
Indeed, suppose $\rho(V)<\dim V$ for all $V\in\cV(E)\setminus0$.
Let $a=\max\{\rho(V)/\dim V\mid V\in\cV(E)\setminus 0\}$.
Then $a\in\Q_{>0}$ and there exists $\hat{V}\in\cV(E)$ such that $a=\rho(\hat{V})/\dim\hat{V}$.
Thus $(E,a^{-1}\rho)$ is an exact \qPM.
\end{rem}

We close this section with introducing the dual \qPM.
It is a straightforward generalization of duality of matroids based on the rank function (see, e.g. \cite[Prop.~2.1.9]{Ox11}), but
requires more details when replacing set-theoretic complements by orthogonal spaces.
Since we define \qPM{}s over arbitrary ground spaces, we need to specify a non-degenerate symmetric bilinear form,
and, not surprisingly, the dual rank function depends on the choice of this form.
But as we will see, different forms lead to equivalent dual \qPM{}s.
This generality is needed in order to discuss deletions and contractions later on.
Part of the following result is from \cite[4.5--4.7]{GJLR19} (see also \cite[Thm.~42]{JuPe18} for $q$-matroids).

\begin{theo}\label{T-DualqPM}
Let $\inner{\cdot}{\cdot}$ be a non-degenerate symmetric bilinear form on~$E$.
For $V\in\cV(E)$ define $V^\perp=\{w\in E\mid \inner{v}{w}=0\text{ for all }v\in V\}$.
Let $\cM=(E,\rho)$ be a \qPM{} and set
\begin{equation}\label{e-rhodual}
    \rho^*(V)=\dim V+\rho(V^\perp)-\rho(E).
\end{equation}
Then $\rho^*$ is a $q$-rank function on~$E$ and $\cM^*=(E,\rho^*)$ is a \qPM. It is called the \emph{dual} of~$\cM$.
Furthermore, $\cM^{**}=\cM$, where $\cM^{**}=(\cM^*)^*$ is the bidual, and $\cM$ and $\cM^*$ have the same set of denominators.
Finally, the equivalence class of~$\cM^*$ does not depend on the choice of the non-degenerate symmetric bilinear form.
More precisely, if $\langle\!\inner{\cdot}{\cdot}\!\rangle$ is another non-degenerate symmetric bilinear form on~$E$ and
$\cM^{\hat{*}}=(E,\rho^{\hat{*}})$ is the resulting dual \qPM{}, then $\cM^{\hat{*}}\approx\cM^*$.
\end{theo}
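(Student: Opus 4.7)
The plan is to handle the five subclaims in the order (i) verification of the rank function axioms for~$\rho^*$, (ii) the involution $\cM^{**}=\cM$, (iii) preservation of the denominator set, and (iv) independence of the equivalence class of~$\cM^*$ from the choice of non-degenerate symmetric bilinear form.

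\textbf{Verifying axioms for $\rho^*$.} Axiom~(R1) splits in two. The upper bound $\rho^*(V)\le\dim V$ is just $\rho(V^\perp)\le\rho(E)$, i.e.\ monotonicity. For the lower bound $\rho^*(V)\ge 0$ I would pick a vector-space complement~$W$ of~$V^\perp$ in~$E$ (of dimension~$\dim V$ by non-degeneracy) and combine $\rho(E)=\rho(V^\perp+W)\le\rho(V^\perp)+\rho(W)\le\rho(V^\perp)+\dim V$, where the first inequality is the subadditivity immediate from~(R3) and $\rho(V^\perp\cap W)\ge 0$, and the second is~(R1) for~$\rho$. Axiom~(R2) reduces to the observation that $\dim-\rho$ is monotone on~$\cV$; extending a basis one vector at a time and using~(R3) shows that~$\rho$ increases by at most one per step, and applied to $W^\perp\le V^\perp$ this gives $\rho(V^\perp)-\rho(W^\perp)\le\dim W-\dim V$, whence $\rho^*(W)\ge\rho^*(V)$. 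Axiom~(R3) is a direct calculation using the identities $(V+W)^\perp=V^\perp\cap W^\perp$ and $(V\cap W)^\perp=V^\perp+W^\perp$ (valid because the form is non-degenerate), combined with submodularity of~$\rho$ on the pair $V^\perp,W^\perp$.

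\textbf{Bi-duality and denominators.} Specializing the defining formula gives $\rho^*(E)=\dim E-\rho(E)$ and $\rho^*(0)=0$; substituting once more and using $V^{\perp\perp}=V$ collapses to $\rho^{**}(V)=\rho(V)$. For the denominator claim I would inspect $\mu\rho^*(V)=\mu\dim V+\mu\rho(V^\perp)-\mu\rho(E)$: any denominator~$\mu$ of~$\rho$ which turns all three summands into integers is a denominator of~$\rho^*$ by the nonnegativity already built into~(R1) for~$\rho^*$, and bi-duality furnishes the reverse inclusion.

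\textbf{Independence from the bilinear form.} This is the genuinely new piece. Given a second non-degenerate symmetric form $\langle\!\inner{\cdot}{\cdot}\!\rangle$, the two canonical isomorphisms $E\to E^*$ attached to the forms differ by a unique $\F$-automorphism $\phi\colon E\to E$ characterized by $\langle\!\inner{v}{w}\!\rangle=\inner{\phi(v)}{w}$ for all $v,w\in E$. Writing $V^{\hat{\perp}}$ for the orthogonal of~$V$ with respect to $\langle\!\inner{\cdot}{\cdot}\!\rangle$, this identity immediately yields $V^{\hat{\perp}}=\phi(V)^\perp$ for every $V\in\cV(E)$. Substituting into~\eqref{e-rhodual} gives $\rho^{\hat{*}}(V)=\dim V+\rho(\phi(V)^\perp)-\rho(E)=\rho^*(\phi(V))$, since $\dim\phi(V)=\dim V$. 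Because $\phi$ is an $\F$-automorphism of~$E$, this is precisely the condition in \cref{D-EquivMatroid} for $\cM^{\hat{*}}\approx\cM^*$.

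The part I expect to require the most care is the denominator claim, where \cref{D-PMatroidNotions} allows $\mu\in\Q_{>0}$ rather than only integers, so one has to be careful that $\mu\dim V\in\Z$ for all $V\in\cV(E)$; modulo that bookkeeping everything reduces to a direct computation once the linear map~$\phi$ intertwining the two forms has been identified.
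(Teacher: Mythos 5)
Your proposal is correct, and on the one part of the theorem that the paper actually proves rather than cites --- the independence of the equivalence class of $\cM^*$ from the choice of form --- it follows essentially the same route: both arguments produce the $\F$-automorphism $\phi$ intertwining the two forms and verify $V^{\pperp}=\phi(V)^\perp$, whence $\rho^{\hat{*}}(V)=\rho^*(\phi(V))$. The only difference is presentational: the paper realizes $\phi$ concretely through Gram matrices as $v\mapsto\psi^{-1}(\psi(v)\hat{Q}Q^{-1})$, whereas you characterize it coordinate-free by $\langle\!\inner{v}{w}\!\rangle=\inner{\phi(v)}{w}$; these are the same map. Your verifications of (R1)--(R3) for $\rho^*$ and of biduality are also correct, but the paper simply imports them from \cite{GJLR19}.

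Your closing worry about the denominator claim is well-founded and is more than bookkeeping. The paper dismisses this claim as obvious, and it is obvious for \emph{integer} denominators (then $\mu\dim V\in\N_0$ automatically and biduality supplies the reverse inclusion, exactly as you argue). But \cref{D-PMatroidNotions} allows $\mu\in\Q_{>0}$, and then the claim can genuinely fail: on $E=\F^2$ the map $\rho(V)=\tfrac23\dim V$ is a $q$-rank function admitting the denominator $\tfrac32$, while $\rho^*(V)=\tfrac13\dim V$ and $\tfrac32\cdot\tfrac13\notin\N_0$, so the two sets of denominators ($\tfrac32\N$ versus $3\N$) differ. Thus your proof is incomplete precisely where the statement itself is too strong; once one restricts to integer denominators (which by \cref{R-TrafoExact} covers every exact \qPM{}) both the claim and your argument go through.
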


\begin{proof}
The fact that $\rho^*$ is a $q$-rank function and the identity $\rho^{**}=\rho$ have been proven in \cite[Thms.~4.6, 4.7]{GJLR19}.
The statement about the denominators is obvious.
It remains to show the very last statement.
Thus, let $\langle\!\inner{\cdot}{\cdot}\!\rangle$ be another non-degenerate symmetric bilinear form on~$E$.
For $V\in\cV(E)$ denote by $V^\perp$ and $V^{\pperp}$ the orthogonal spaces of~$V$ with respect to $\inner{\cdot}{\cdot}$ and
$\langle\!\inner{\cdot}{\cdot}\!\rangle$, respectively.
Let $v_1,\ldots,v_\ell$ be a basis of~$E$ and $\psi:E\longrightarrow\F^\ell$ be the associated coordinate map.
Let $Q=(\inner{v_i}{v_j}),\,\hat{Q}=(\langle\!\inner{v_i}{v_j}\!\rangle)\in\F^{\ell\times\ell}$ be the Gram matrices associated to the bilinear forms.
Then $Q,\,\hat{Q}$ are symmetric and nonsingular and we have
$\inner{v}{w}=\psi(v)Q\psi(w)\T$ and $\langle\!\inner{v}{w}\!\rangle=\psi(v)\hat{Q}\psi(w)\T$ for all $v,w\in E$.
Define the automorphism
\begin{equation}\label{e-phi}
    \phi:E\longrightarrow E,\ v\longmapsto \psi^{-1}(\psi(v)\hat{Q}Q^{-1}).
\end{equation}
Now we have for any $V\in\cV(E)$ and $w\in E$
\begin{align*}
  w\in\phi(V)^\perp&\Longleftrightarrow \psi(\phi(v))Q\psi(w)\T=0\text{ for all }v\in V\\
       &\Longleftrightarrow \psi(v)\hat{Q}Q^{-1}Q\psi(w)\T=0\text{ for all }v\in V\\
       &\Longleftrightarrow w\in V^{\pperp}.
\end{align*}
Hence $V^{\pperp}=\phi(V)^\perp$ and thus $\rho^*(\phi(V))=\rho^{\hat{*}}(V)$ for all $V\in\cV(E)$.
This shows that $\cM^*$ and $\cM^{\hat{*}}$ are equivalent.
\end{proof}

The next result has been proven in~\cite{GJLR19} for \qPM{}s on $\F^\ell$, endowed with the standard dot product.
Thanks to the just proven invariance of the dual, it generalizes as follows without the need to specify bilinear forms.

\begin{prop}[\mbox{\cite[Prop.~4.7]{GJLR19}}]\label{P-EquivDual}
Let $\cM=(E,\rho)$ and $\hat{\cM}=(\hat{E},\hat{\rho})$ be \qPM{}s.
Then $\cM\approx\hat{\cM}$ implies $\cM^*\approx\hat{\cM}^*$.
\end{prop}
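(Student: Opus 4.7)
The plan is to exploit the last part of \cref{T-DualqPM}, which says that the equivalence class of the dual \qPM{} does not depend on the choice of non-degenerate symmetric bilinear form. Thanks to this, we are free to pick a convenient form on $\hat{E}$ once we have fixed one on $E$.

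More concretely, let $\alpha\in\Hom_{\F}(E,\hat{E})$ be an $\F$-isomorphism witnessing $\cM\approx\hat{\cM}$, so that $\hat{\rho}(\alpha(V))=\rho(V)$ for all $V\in\cV(E)$. Fix an arbitrary non-degenerate symmetric bilinear form $\inner{\cdot}{\cdot}$ on~$E$ and define
\[
   \langle\!\inner{\alpha(v)}{\alpha(w)}\!\rangle:=\inner{v}{w}\quad\text{for all }v,w\in E.
\]
Since $\alpha$ is bijective, this is a well-defined symmetric bilinear form on~$\hat{E}$, and it is non-degenerate because $\inner{\cdot}{\cdot}$ is and $\alpha$ is an isomorphism. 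By construction $\alpha$ is an isometry, and hence for every $V\in\cV(E)$ one has $\alpha(V)^{\pperp}=\alpha(V^\perp)$, where $\pperp$ denotes the orthogonal with respect to $\langle\!\inner{\cdot}{\cdot}\!\rangle$.

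Let $\hat{\rho}^{\,\pperp}$ denote the dual rank function on $\hat{E}$ computed with the form $\langle\!\inner{\cdot}{\cdot}\!\rangle$ via~\eqref{e-rhodual}. Using the isometry property together with $\hat{\rho}\circ\alpha=\rho$ and $\alpha(E)=\hat{E}$, we compute
\[
   \hat{\rho}^{\,\pperp}(\alpha(V))=\dim\alpha(V)+\hat{\rho}(\alpha(V)^{\pperp})-\hat{\rho}(\hat{E})
   =\dim V+\hat{\rho}(\alpha(V^\perp))-\hat{\rho}(\alpha(E))=\dim V+\rho(V^\perp)-\rho(E)=\rho^*(V)
\]
for all $V\in\cV(E)$. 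Thus $\alpha$ is an equivalence between $\cM^*=(E,\rho^*)$ and $(\hat{E},\hat{\rho}^{\,\pperp})$. By the last statement of \cref{T-DualqPM}, the latter \qPM{} is equivalent to $\hat{\cM}^*$ (the dual computed with any other choice of non-degenerate symmetric bilinear form on $\hat{E}$). Chaining these two equivalences yields $\cM^*\approx\hat{\cM}^*$.

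The proof is essentially bookkeeping; the only conceptual point is recognizing that the freedom granted by \cref{T-DualqPM} lets us transport the bilinear form along~$\alpha$, thereby turning $\alpha$ into an isometry and making the computation of $\hat{\rho}^{\,\pperp}\circ\alpha$ collapse to $\rho^*$. Without that invariance one would have to check compatibility with a fixed pre-specified form on $\hat{E}$, which would require an additional argument; here there is no real obstacle.
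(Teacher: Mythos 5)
Your proof is correct and is essentially the argument the paper intends: the paper itself gives no proof, merely citing \cite[Prop.~4.7]{GJLR19} for the case $(\F^\ell,\text{dot product})$ and remarking that the general case follows from the invariance statement in \cref{T-DualqPM}. Your transport-of-the-form construction makes that reduction explicit and self-contained — the form is well-defined and non-degenerate, the isometry identity $\alpha(V)^{\pperp}=\alpha(V^\perp)$ holds, and the computation of $\hat{\rho}^{\,\pperp}\circ\alpha$ is exactly right — so there is nothing to correct.
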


\begin{exa}[\mbox{\cite[Ex.~47]{JuPe18}}]\label{E-DualUnif}
It is easy to see that $\cU_{k}(E)^*=\cU_{\dim E-k}(E)$.
\end{exa}

In Section~\ref{S-DelContr} we will show that duality of \qPM{}s corresponds to duality of rank-metric
codes and is compatible with duality of puncturing and shortening of codes.
In \cite{GLJ21I} we  study independent spaces and bases and their behavior under dualization.

\section{Rank-Metric Codes and the Induced $q$-Polymatroids}\label{S-RMCMatroids}
In this section we study \qPM{}s associated to linear rank-metric codes as introduced in~\cite{GJLR19}.
We first collect some well-known facts for codes in $\Fnm$.
As usual, we endow $\Fnm$ with the rank-metric, defined as $\dd(A,B)=\rk(A-B)$.
Hence $\dd(A,0)=\rk(A)$ for all $A\in\Fnm$.

Throughout, a  \emph{rank-metric code} is meant to be linear, that is, it is a subspace of the metric space $(\Fnm,\dd)$.
Part~(a)--(c) of the following proposition is standard knowledge on rank-metric codes, see for instance~\cite{Gor19},
and Part~(d) can be found in \cite[Lem.~28]{Ra16a}.
For $V\leq\F^\ell$ denote by $V^\perp\leq\F^\ell$ the orthogonal space with respect to the standard dot product.

\begin{defiprop}\label{D-RMCBasics}
Let  $\cC\leq \Fnm$ be a rank-metric code.
\begin{alphalist}
\item The \emph{rank distance} of~$\cC$ is defined as $\drk(\cC)=\min\{\rk(M)\mid M\in\cC\setminus0\}$.
\item If $d=\drk(\cC)$, then $\dim(\cC)\leq \max\{m,n\}(\min\{m,n\}-d+1)$, which is known as the \emph{Singleton bound}.
        If  $\dim(\cC)=\max\{m,n\}(\min\{m,n\}-d+1)$, then $\cC$ is called an \emph{MRD code}.
\item The \emph{dual code} of~$\cC$ is defined as $\cC^\perp=\{M\in\Fnm\mid \tr(MN\T)=0\text{ for all }N\in\cC\}$, where
        $\tr(\cdot)$ denotes the trace of the given matrix.
       If $\cC$ is an MRD code with rank distance~$d$, then $\cC^\perp$ is an MRD code with rank distance $\min\{m,n\}-d+2$.
\item For $V\in\cV(\F^n)$ and $W\in\cV(\F^m)$ we define the following subspaces of~$\cC$, which are known as \emph{shortenings}:
\[
   \cC(V,\cc)=\{M\in\cC\mid \colsp(M)\leq V\}\ \text{ and }\
   \cC(W,\rr)=\{M\in\cC\mid \rowsp(M)\leq W\},
\]
where $\colsp(M)$ and $\rowsp(M)$ denote the column space and row space of~$M$, respectively.
Then $\Fnm(V,\cc)^\perp=\Fnm(V^\perp,\cc)$ and $ \Fnm(W,\rr)^\perp=\Fnm(W^\perp,\rr)$ and
\begin{align}
  \dim\cC(V^\perp,\cc)&=\dim\cC-m\dim V+\dim\cC^\perp(V,\cc), \label{e-DimCV}\\
  \dim\cC(W^\perp,\rr)&=\dim\cC-n\dim W+\dim\cC^\perp(W,\rr).\nonumber
\end{align}
\end{alphalist}
\end{defiprop}

Now we are ready to introduce \qPM{}s associated to a rank-metric code.
The following definition and the first statement are from~\cite{GJLR19}.
The statements in \eqref{e-rhoV} are immediate consequences of Proposition~\ref{D-RMCBasics}(d).

\begin{prop}[\mbox{\cite[Thm.~5.3]{GJLR19}}]\label{P-RMCMatroid}
Let $\cC\leq\F^{n\times m}$ be a nonzero rank-metric code.  Define the maps
\begin{align*}
   \rho_\cc:&\cV(\F^n)\longrightarrow \Q_{\geq0},\quad V\longmapsto \frac{\dim\cC-\dim\cC(V^\perp,\cc)}{m},\\[1ex]
   \rho_\rr:&\cV(\F^m)\longrightarrow \Q_{\geq0},\quad W\longmapsto \frac{\dim\cC-\dim\cC(W^\perp,\rr)}{n}.
\end{align*}
Then  $\rho_\cc$ and $\rho_\rr$ are  $q$-rank functions with denominators~$m$ and~$n$, respectively.
Furthermore,
\begin{equation}\label{e-rhoV}
   \rho_\cc(V)=\dim V -\frac{1}{m}\dim\cC^\perp(V,\cc)
     \ \text{ and }\
   \rho_\rr(W)=\dim W-\frac{1}{n}\dim\cC^\perp(W,\rr).
\end{equation}
\end{prop}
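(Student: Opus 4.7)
The proposition is essentially organizational: both halves follow from results already stated on the page, so the plan is to identify the right citations and carry out one routine substitution. First, for the claim that $\rho_\cc$ and $\rho_\rr$ are $q$-rank functions, I would simply invoke \cite[Thm.~5.3]{GJLR19}, which verifies (R1)--(R3) for exactly these two maps. The only mild subtlety is that our \cref{D-PMatroid} allows rational values whereas \cite{GJLR19} is phrased slightly differently; this causes no issue because the numerators $\dim\cC-\dim\cC(V^\perp,\cc)$ are manifestly nonnegative integers (using $\cC(V^\perp,\cc)\leq\cC$), so the values lie in $\Q_{\geq 0}$ as required.

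For the denominator claim I would argue directly from the definition: for every $V\in\cV(\F^n)$ the quantity
\[
  m\,\rho_\cc(V)\;=\;\dim\cC-\dim\cC(V^\perp,\cc)
\]
is a nonnegative integer, since both terms are $\F$-dimensions, so~$m$ is a denominator of $\rho_\cc$ in the sense of \cref{D-PMatroidNotions}. The identical computation with the roles of rows and columns exchanged shows that $n$ is a denominator of $\rho_\rr$.

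For the reformulation~\eqref{e-rhoV} the only input needed is the shortening identity~\eqref{e-DimCV} recorded in \cref{D-RMCBasics}(d). Substituting
\[
  \dim\cC(V^\perp,\cc)\;=\;\dim\cC-m\dim V+\dim\cC^\perp(V,\cc)
\]
into the definition of $\rho_\cc(V)$ cancels the $\dim\cC$ term and rescales by $m^{-1}$, yielding $\rho_\cc(V)=\dim V-m^{-1}\dim\cC^\perp(V,\cc)$ after one line of arithmetic. The row-version is obtained verbatim from the companion identity for $\cC(W^\perp,\rr)$ stated in the same definition/proposition.

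Since all the heavy lifting has been done elsewhere, there is no real obstacle; the substantive content hides in \cite[Thm.~5.3]{GJLR19} (the submodularity check) and in the shortening/duality identity~\eqref{e-DimCV}. The present proposition is the place where these two ingredients are packaged together to produce the \qPM{} associated with a rank-metric code, and the write-up should consist of little more than the two displays above together with the appropriate pointers.
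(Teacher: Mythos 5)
Your proposal is correct and follows exactly the route the paper takes: it cites \cite[Thm.~5.3]{GJLR19} for the $q$-rank-function property, observes that $m\rho_\cc(V)=\dim\cC-\dim\cC(V^\perp,\cc)$ is a nonnegative integer to get the denominator claim, and obtains~\eqref{e-rhoV} by substituting the shortening identity~\eqref{e-DimCV} from \cref{D-RMCBasics}(d). Nothing is missing.
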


The denominators~$m$ and~$n$ are in general not principal.
Note that in the notation we suppress the dependence of these maps on the code~$\cC$.

\begin{defi}\label{D-RMCMatroid}
Let $\cC\leq\F^{n\times m}$ be a nonzero rank-metric code.
The \qPM{}s  $\cM_\cc(\cC):=(\F^n,\rho_\cc)$ and $\cM_\rr(\cC):=(\F^m,\rho_\rr)$ are called the
\emph{column $q$-polymatroid} and \emph{row $q$-polymatroid} of~$\cC$, respectively.
Their ranks are $\dim\cC/m$ and $\dim\cC/n$, respectively.
\end{defi}

The expressions in~\eqref{e-rhoV} show immediately that if $ \cC_1\leq\cC_2$, then $\rho_{\cc,1}(V)\leq \rho_{\cc,2}(V)$
for all $V\in\cV(\F^n)$,  where $\rho_{\cc,i}$ is the column rank function of~$\cC_i$. Similarly for the row \qPM.
This has also been proven in \cite[Lem.~22]{GhJo20}.

Equivalence of codes, in the following (standard) sense, translates into equivalence of the associated \qPM{}s.

\begin{defi}\label{D-EquivCodes}
Let $\cC,\,\cC'\leq\Fnm$ be rank-metric codes. We call $\cC$ and $\cC'$ \emph{equivalent} if there exist
matrices $X\in\GL_n(\F),\,Y\in\GL_m(\F)$ such that $\cC'=X\cC Y:=\{XMY\mid M\in\cC\}$.
If $n=m$, we call~$\cC,\,\cC'$ \emph{transposition-equivalent} if there exist
matrices $X,Y\in\GL_n(\F)$ such that $\cC'=X\cC\T Y:=\{XM\T Y\mid M\in\cC\}$.
\end{defi}

The proof of the next result is straightforward with~\eqref{e-rhoV} by noting that $\cC'=X\cC Y$ implies
$(\cC')^\perp=(X^{-1})\T\cC^\perp(Y^{-1})\T$ and $\cC'=X\cC\T Y$ implies $(\cC')^\perp=(X^{-1})\T(\cC^\perp)\T(Y^{-1})\T$.
For an alternative proof see \cite[Prop.~6.7]{GJLR19}.

\begin{prop}\label{P-EquivCodeMatroid}
Let  $\cC,\,\cC'\leq\Fnm$ be rank-metric codes.
\begin{alphalist}
\item Suppose $\cC,\,\cC'$ are equivalent, say $\cC'=X\cC Y$ for some $X\in\GL_n(\F),\,Y\in\GL_m(\F)$.
        Then $\cM_\cc(\cC)$ and $\cM_\cc(\cC')$ are equivalent via $\beta\in\Hom_{\F}(\F^n,\F^n)$
        given by $x\mapsto (X\T)^{-1}x$.
        Similarly, $\cM_\rr(\cC)$ and $\cM_\rr(\cC')$ are equivalent via the isomorphism $\alpha\in\Hom_{\F}(\F^m,\F^m)$
        given by $x\mapsto x(Y\T)^{-1}$.
\item Let $n=m$ and suppose $\cC,\,\cC'$ are transposition-equivalent, say  $\cC'=X\cC\T Y$ for $X,Y\in\GL_n(\F)$.
       Then $\cM_\cc(\cC)$ and $\cM_\rr(\cC')$ are equivalent via~$\alpha$ and
       $\cM_\rr(\cC)$ and $\cM_\cc(\cC')$ are equivalent via~$\beta$ with $\alpha,\beta$ as in~(a).
\end{alphalist}
\end{prop}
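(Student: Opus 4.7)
The plan is to verify directly the equality of rank functions on corresponding subspaces, using the alternative expressions from \eqref{e-rhoV}, namely $\rho_\cc(V)=\dim V-\tfrac{1}{m}\dim\cC^\perp(V,\cc)$ and the analogous formula for $\rho_\rr$. Since $\alpha$ and $\beta$ are isomorphisms, the dimension terms automatically agree, so the whole task reduces to comparing $\dim(\cC')^\perp(\phi(V),\bullet)$ with $\dim\cC^\perp(V,\bullet)$ for the appropriate $\phi\in\{\alpha,\beta\}$ and $\bullet\in\{\cc,\rr\}$. The two supplied duality identities for $(\cC')^\perp$ (which follow in one line from the trace bilinear form and $\tr(AB)=\tr((AB)\T)$) provide the necessary explicit description of $(\cC')^\perp$ to push this through.

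For part (a), I set $W=\beta(V)=(X\T)^{-1}V$ and write an arbitrary $N'\in(\cC')^\perp$ as $(X^{-1})\T M(Y^{-1})\T$ with $M\in\cC^\perp$. Since right multiplication by the invertible $(Y^{-1})\T$ preserves the column space, $\colsp(N')=(X^{-1})\T\colsp(M)$, and hence $\colsp(N')\leq W$ is equivalent to $\colsp(M)\leq V$. Thus the map $M\mapsto (X^{-1})\T M(Y^{-1})\T$ restricts to an $\F$-isomorphism $\cC^\perp(V,\cc)\to(\cC')^\perp(W,\cc)$, so the two shortenings have equal dimension and $\rho'_\cc(\beta(V))=\rho_\cc(V)$. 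The row-side statement via $\alpha$ is handled in the same way, using instead that left multiplication by an invertible matrix preserves the row space and that right multiplication by $Q$ sends row space to $\rowsp(\cdot)\cdot Q$.

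For part (b), with $n=m$, the transposition swaps the roles of rows and columns. An element $N'\in(\cC')^\perp$ is now written as $(X^{-1})\T M\T(Y^{-1})\T$ with $M\in\cC^\perp$. Under the natural identification of row and column vectors in $\F^n$, $\rowsp(M\T)=\colsp(M)$, so right multiplication by $(Y^{-1})\T$ on the row space gives $\rowsp(N')=\colsp(M)(Y\T)^{-1}$, while left multiplication by $(X^{-1})\T$ leaves the row space unchanged. Matching this against $\alpha(V)=V(Y\T)^{-1}$ shows that the inclusion $\rowsp(N')\leq\alpha(V)$ is equivalent to $\colsp(M)\leq V$, so $M\mapsto(X^{-1})\T M\T(Y^{-1})\T$ is an $\F$-isomorphism $\cC^\perp(V,\cc)\to(\cC')^\perp(\alpha(V),\rr)$. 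Since $n=m$, the normalizing factor in~\eqref{e-rhoV} is the same on both sides and we conclude $\rho'_\rr(\alpha(V))=\rho_\cc(V)$. The statement $\cM_\rr(\cC)\approx\cM_\cc(\cC')$ via $\beta$ is proved by the symmetric argument, interchanging the roles of $X,Y$ and of rows, columns throughout.

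The only real obstacle is careful bookkeeping of how column and row spaces transform under one-sided multiplications and transposition; once the identities $\colsp(MQ)=\colsp(M)$, $\rowsp(PM)=\rowsp(M)$, $\rowsp(MQ)=\rowsp(M)Q$ and $\rowsp(M\T)=\colsp(M)$ are invoked in the right places, each case collapses to an explicit $\F$-linear bijection between the relevant shortened codes, immediately delivering the desired dimension equality.
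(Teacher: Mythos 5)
Your proof is correct and follows exactly the route the paper indicates: it applies the expressions in~\eqref{e-rhoV} together with the duality identities $(\cC')^\perp=(X^{-1})\T\cC^\perp(Y^{-1})\T$ and $(\cC')^\perp=(X^{-1})\T(\cC^\perp)\T(Y^{-1})\T$, and tracks how column and row spaces transform under one-sided multiplication and transposition. The bookkeeping in all cases is accurate, so this matches the paper's (sketched) argument.
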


Equivalence allows us to easily introduce $\F_{q^m}$-linear rank-metric codes.
Recall that $\F=\F_q$.
Let~$\omega$ be a primitive element of the field extension~$\F_{q^m}$ and $f=x^m-\sum_{i=0}^{m-1}f_ix^i\in\F[x]$ be its minimal polynomial over~$\F$.
We define the companion matrix
\begin{equation}\label{e-Deltaf}
    \Delta_f=
    \begin{pmatrix} &1& & \\ & &\ddots & \\ & & & 1\\ f_0&f_1&\cdots&f_{m-1}\end{pmatrix}\in\GL_m(\F).
\end{equation}
Let $\psi:\F_{q^m}\longrightarrow\F^m$ be the coordinate map with respect to the basis $(1,\omega,\ldots,\omega^{m-1})$.
Extending this map entry-wise, we obtain, for any~$n$, an isomorphism
\begin{equation}\label{e-CoordMap}
      \Psi:\F_{q^m}^n\longrightarrow\F^{n\times m},\
      \begin{pmatrix}x_1& \cdots& x_n\end{pmatrix}\longmapsto \begin{pmatrix}\psi(x_1)\\ \vdots\\ \psi(x_n)\end{pmatrix}.
\end{equation}
Now multiplication of  $c:=\sum_{i=0}^{m-1}c_i\omega^i\in\F_{q^m}$ by~$\omega$ corresponds to
right multiplication of its coordinate vector $\psi(c)=(c_0,\ldots,c_{m-1})\in\F^m$ by $\Delta_f$.
Therefore, an $\F$-linear subspace~$\cC$ of $\F_{q^m}^n$ is $\F_{q^m}$-linear if and only if its image
$\Psi(\cC)\leq\F^{n\times m}$
is invariant under right multiplication by~$\Delta_f$.
Recall further that $\F[\Delta_f]$ is a subfield of order~$q^m$ of $\F^{m\times m}$, and more generally,
if $s$ is a divisor of~$m$, then  $\F[\Delta_f^{(q^m-1)/(q^s-1)}]$ is a subfield of order~$q^s$.
Allowing different bases for the coordinate map, we arrive at the following definition.

\begin{defi}\label{D-Fqn-linear}
Let $\cC\leq\F^{n\times m}$ be a rank-metric code (hence an $\F$-linear subspace).
Let $s$ be a divisor of~$m$ and set $M=(q^m-1)/(q^s-1)$.
Then $\cC$ is \emph{right $\F_{q^s}$-linear} if there exists an $X\in\GL_m(\F)$ such that
the code $\cC X$ is invariant under right multiplication by~$\Delta_f^M$.
\emph{Left linearity} over $\F_{q^n}$ and its subfields is defined analogously.
\end{defi}

Clearly, the qualifiers right/left are needed only in the case where $\F_{q^s}$ is a subfield of both $\F_{q^m}$ and $\F_{q^n}$.
Note that for $\tilde{\cC}:=\cC X$ we have $\tilde{\cC}(V,\cc)=\cC(V,\cc)X$ for all $V\in\cV(\F^n)$.
Furthermore, if~$\tilde{\cC}$  is invariant under right multiplication by~$\Delta_f^M$, then so is $\tilde{\cC}(V,\cc)$.
Hence $\tilde{\cC}(V,\cc)$ is right $\F_{q^s}$-linear (see also \cite[Lem.~19]{JuPe18} for the case $s=m$),
and thus its dimension over~$\F$ is a multiple of~$s$.
All of this leads to the following remark.

\begin{rem}\label{R-rhoFqnlinear}
Let $\cC\leq\F^{n\times m}$  be a right $\F_{q^s}$-linear rank-metric code for some subfield~$\F_{q^s}$ of $\F_{q^m}$.
Then $\mu=m/s$ is a denominator of the column \qPM{} $\cM_\cc(\cC)$.
In particular, for $s=m$ the \qPM{} $\cM_\cc(\cC)$ is a $q$-matroid.
These are exactly the $q$-matroids studied in \cite{JuPe18}.
Of course, the column \qPM{} of a code $\cC\leq\F^{n\times m}$ may be a $q$-matroid
even if~$\cC$ is not right $\F_{q^m}$-linear.
This is for instance the case for MRD codes in $\F^{n\times m}$ if $m\geq n$ (see Prop\-osition~\ref{P-MRDColMatroid}).
Analogous statements hold for the row \qPM.
\end{rem}

Let us return to general rank-metric codes in $\Fnm$.
From now on we will focus on the associated column \qPM{}s.
The discussion of the row \qPM{}s is analogous.
On several occasions we will have to pay close attention to the cases $n\leq m$ versus $n>m$.
We first record the following simple fact, which is immediate with Proposition~\ref{P-RMCMatroid}
(see also \cite[Prop.~6.2]{GJLR19} and \cite[Lem.~30]{JuPe18}).
Let $\cC\leq\F^{n\times m}$ be a nonzero rank-metric code with rank distance~$d$ and let~$d^\perp$ be the rank distance of~$\cC^\perp$.
Then for any $V\in\cV(\F^n)$
\begin{equation}\label{e-Rhocc}
   \rho_\cc(V)=\begin{cases} {\DS\frac{\dim\cC}{m}}, &\text{if }\dim V>n-d,\\[1ex] \dim V,&\text{if }\dim V<d^\perp.\end{cases}
\end{equation}

Together with \cref{D-RMCBasics}(b),(c) this immediately leads to the following result for MRD codes if $n\leq m$.
Note the similarity to the well-known correspondence between MDS block codes and uniform matroids.

\begin{prop}[\mbox{\cite[Cor.~6.6]{GJLR19}}]\label{P-MRDColMatroid}
Let $n\leq m$ and $\cC\leq\Fnm$ be an MRD code with $\drk(\cC)=d$.
Let~$\rho_\cc$ be the rank function of the associated column \qPM{} $\cM_\cc(\cC)$. Then
\[
    \rho_\cc(V)=\min\{n-d+1,\dim V\} \ \text{ for all } V\in\cV(\F^n).
\]
Hence $\cM_\cc(\cC)$ is the uniform matroid $\cU_{n-d+1}(\F^n)$.
In particular, $\cM_\cc(\cC)$ is a $q$-matroid, i.e., its principal denominator is~$1$.
\end{prop}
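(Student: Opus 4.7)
The plan is to read off the two pieces of the rank function from equation~\eqref{e-Rhocc} and observe that, for an MRD code with $n\leq m$, they join up exactly at the switchover to produce the claimed formula $\min\{n-d+1,\dim V\}$.

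First I would compute the constants. Since $n\leq m$, the Singleton bound in \cref{D-RMCBasics}(b) becomes $\dim\cC=m(n-d+1)$, whence $\dim\cC/m=n-d+1$. By \cref{D-RMCBasics}(c), $\cC^\perp$ is MRD with rank distance $d^\perp=\min\{m,n\}-d+2=n-d+2$.

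Now I would apply \eqref{e-Rhocc}. If $\dim V>n-d$, i.e.\ $\dim V\geq n-d+1$, then the first case gives $\rho_\cc(V)=\dim\cC/m=n-d+1=\min\{n-d+1,\dim V\}$. If $\dim V<d^\perp=n-d+2$, i.e.\ $\dim V\leq n-d+1$, the second case gives $\rho_\cc(V)=\dim V=\min\{n-d+1,\dim V\}$. The two regimes together cover all subspaces of $\F^n$, and they overlap consistently at $\dim V=n-d+1$, both yielding $n-d+1$. Hence $\rho_\cc(V)=\min\{n-d+1,\dim V\}$ for every $V\in\cV(\F^n)$, which by \cref{D-Uniform} is exactly the rank function of the uniform $q$-matroid $\cU_{n-d+1}(\F^n)$. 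Its principal denominator is~$1$ since its values are integers.

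There is no real obstacle: the content sits in \eqref{e-Rhocc} and the MRD duality \cref{D-RMCBasics}(b),(c). The only thing that needs care is that the two cases of \eqref{e-Rhocc} actually partition (or rather exhaust) the possibilities — and this is precisely what the MRD hypothesis buys us, because the gap between $n-d$ and $d^\perp=n-d+2$ leaves no dimension uncovered.
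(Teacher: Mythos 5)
Your proof is correct and follows exactly the route the paper intends: the paper derives this proposition "immediately" from \eqref{e-Rhocc} together with \cref{D-RMCBasics}(b),(c), which is precisely your computation of $\dim\cC/m=n-d+1$ and $d^\perp=n-d+2$ and the observation that the two cases of \eqref{e-Rhocc} exhaust all dimensions and agree on the overlap. Nothing is missing.
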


In order to discuss the \qPM{} associated to MRD codes for $m\leq n$ we need the following lemma.

\begin{lemma}\label{L-DimCV}
Let  $\cC\leq\Fnm$ be a rank metric code with $\drk(\cC)=d$.
Let $V\in\cV(\F^n)$ and $\dim V=v$. Then $\dim\cC(V,\cc)\leq\max\{m,v\}(\min\{m,v\}-d+1)$.
\end{lemma}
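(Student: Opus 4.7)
The plan is to reduce the bound to the Singleton bound from \cref{D-RMCBasics}(b) by embedding $\cC(V,\cc)$ isometrically into a smaller matrix space $\F^{v\times m}$. The basic observation is that every matrix in $\cC(V,\cc)$ has column space inside a fixed $v$-dimensional subspace, so up to a change of basis in~$\F^n$ we can replace~$V$ by the span of the first~$v$ coordinate vectors and then drop the~$n-v$ identically zero rows.

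First I would pick $X\in\GL_n(\F)$ such that $XV=\langle e_1,\ldots,e_v\rangle$. For any $M\in\Fnm$, the columns of $XM$ are the $X$-images of the columns of~$M$, so $\colsp(M)\leq V$ if and only if $\colsp(XM)\leq\langle e_1,\ldots,e_v\rangle$, which in turn holds if and only if the last $n-v$ rows of $XM$ are zero. Thus the map $M\longmapsto(XM)_{\text{top }v\text{ rows}}$ is an $\F$-linear isomorphism from $\cC(V,\cc)$ onto a subspace $\cC'\leq\F^{v\times m}$, and since left multiplication by the invertible matrix~$X$ preserves the rank of every matrix while deletion of zero rows does the same, we get $\rk(M)=\rk((XM)_{\text{top}})$. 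In particular $\dim_{\F}\cC'=\dim_{\F}\cC(V,\cc)$ and every nonzero element of $\cC'$ has rank at least $d=\drk(\cC)$.

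Now I would finish by applying the Singleton bound to $\cC'$ inside $\F^{v\times m}$. If $d\leq\min\{v,m\}$, then \cref{D-RMCBasics}(b) yields
\[
    \dim\cC(V,\cc)=\dim\cC'\leq\max\{v,m\}\bigl(\min\{v,m\}-d+1\bigr),
\]
which is the claim. If instead $d>\min\{v,m\}$, then no nonzero matrix in $\F^{v\times m}$ has rank at least~$d$, forcing $\cC'=0$ and hence $\dim\cC(V,\cc)=0$, in agreement with the stated bound (read as ``at most zero'' in this degenerate regime).

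I do not anticipate a genuine obstacle here: the only mild care is in verifying that the column-space condition really translates into a row-vanishing condition after left multiplication by~$X$, and that the ranks of the truncated matrices coincide with the ranks of the originals, both of which are immediate from the invertibility of~$X$.
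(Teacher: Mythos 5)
Your proof is correct and follows essentially the same route as the paper's: conjugate by an element of $\GL_n(\F)$ to put $V$ into standard position, observe that deleting the resulting zero rows gives a rank-preserving injection of $\cC(V,\cc)$ into $\F^{v\times m}$ with minimum rank at least $d$, and invoke the Singleton bound. (Your explicit handling of the degenerate regime $d>\min\{v,m\}$ is if anything slightly more careful than the paper's.)
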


\begin{proof}
First assume that $V=\colsp(I_v\mmid 0)\T=\subspace{e_1,\ldots,e_v}$, where $e_i\in\F^n$ is the $i$-th standard basis vector.
Then we have a rank-preserving, thus injective, linear map
\[
    \pi:\cC(V,\cc)\longrightarrow\F^{v\times m},\ \begin{pmatrix}M\\0\end{pmatrix}\longmapsto M.
\]
Hence $\text{im}(\pi)$ is a rank-metric code in $\F^{v\times m}$ of rank distance at least~$d$, and
the upper bound for $\dim\cC(V,\cc)$ follows from the Singleton bound.
For the general case where $V=\subspace{x_1,\ldots,x_v}$, choose $Y\in\GL_n(\F)$ such that $Yx_i=e_i$ for $i\in[v]$ and
set $\cC'=Y\cC $.
Then $\cC'$ has rank distance~$d$ as well and $Y\cC(V,\cc)=\cC'(YV,\cc)$.
Since $YV=\subspace{e_1,\ldots,e_v}$, the upper bound on $\dim\cC(V,\cc)$ follows from the first part of this proof.
\end{proof}

Now we obtain the following information about the column \qPM{} of an MRD code if $m\leq n$.

\begin{theo}\label{T-MRDRowMatroid}
Let $m\leq n$ and $\cC\leq\Fnm$ be an MRD code with $\drk(\cC)=d$.
Furthermore, denote the dimension of $V\in\cV(\F^n)$ by~$v$.
Then the rank function~$\rho_\cc$ of the associated column \qPM{} $\cM_\cc(\cC)$ satisfies
\[
  \rho_\cc(V)=\left\{\begin{array}{cl}  v,&\text{if }v\leq m-d+1,\\[.7ex] \frac{n(m-d+1)}{m},&\text{if }v\geq n-d+1.\end{array}\right.
\]
Furthermore $\rho_\cc(V)\geq\max\{1,v/m\}(m-d+1)$ if $v\in[m-d+2,n-d]$.
\end{theo}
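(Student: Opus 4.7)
The plan is to work throughout with the identity $\rho_\cc(V)=v-\tfrac{1}{m}\dim\cC^\perp(V,\cc)$ from \eqref{e-rhoV}, together with the observation that under the hypothesis $m\le n$ the dual code $\cC^\perp$ is MRD (by \cref{D-RMCBasics}(c)) with rank distance $d^\perp=m-d+2$, and that $\dim\cC=n(m-d+1)$ by the Singleton bound (since $\max\{m,n\}=n$).

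The two extreme cases follow immediately from the two cases in \eqref{e-Rhocc}. For $v\le m-d+1=d^\perp-1$ I would invoke the second clause of \eqref{e-Rhocc} to conclude $\rho_\cc(V)=v$. For $v\ge n-d+1$ the first clause of \eqref{e-Rhocc} gives $\rho_\cc(V)=\dim\cC/m=n(m-d+1)/m$.

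The substantive step is the middle range $v\in[m-d+2,n-d]$. Here I would apply \cref{L-DimCV} to the code $\cC^\perp$ (which has rank distance $d^\perp=m-d+2$), yielding
\[
\dim\cC^\perp(V,\cc)\le \max\{m,v\}\bigl(\min\{m,v\}-d^\perp+1\bigr)=\max\{m,v\}\bigl(\min\{m,v\}-m+d-1\bigr),
\]
and then split along $v\le m$ versus $v>m$. If $v\le m$ the right-hand side equals $m(v-m+d-1)$ (which is nonnegative thanks to $v\ge m-d+2$), and substitution into \eqref{e-rhoV} yields $\rho_\cc(V)\ge v-(v-m+d-1)=m-d+1$, matching $\max\{1,v/m\}(m-d+1)=m-d+1$ since $v/m\le 1$. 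If $v>m$ the bound becomes $v(d-1)$, and substitution gives $\rho_\cc(V)\ge v-\tfrac{v(d-1)}{m}=\tfrac{v(m-d+1)}{m}$, matching $\max\{1,v/m\}(m-d+1)=(v/m)(m-d+1)$ since $v/m>1$.

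The main obstacle, if any, is purely bookkeeping: keeping track of $\min$/$\max$ correctly in \cref{L-DimCV} applied to the dual code, and verifying that the resulting bound on $\dim\cC^\perp(V,\cc)$ is nonnegative (which is where the lower endpoint $v\ge m-d+2$ of the middle range is needed). No further ideas beyond \eqref{e-rhoV}, \eqref{e-Rhocc}, \cref{L-DimCV}, and the MRD structure of $\cC^\perp$ are required.
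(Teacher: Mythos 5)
Your proposal is correct and follows essentially the same route as the paper: the two extreme cases via \eqref{e-Rhocc} (equivalently, \eqref{e-rhoV} with $\drk(\cC^\perp)=m-d+2$), and the middle range by applying \cref{L-DimCV} to $\cC^\perp$ and splitting on $v\leq m$ versus $v\geq m$, which yields exactly the bounds $m(v-m+d-1)$ and $v(d-1)$ used in the paper. The bookkeeping in your two subcases checks out.
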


\begin{proof}
For $v\geq n-d+1$ we have $\rho_\cc(V)=m^{-1}\dim\cC=nm^{-1}(m-d+1)$ thanks to \eqref{e-Rhocc}.
Next, using~\eqref{e-rhoV} and the fact that $\drk(\cC^\perp)=m-d+2$ we arrive immediately at $\rho_\cc(V)=v$ if $v\leq m-d+1$.
The remaining statement follows from~\eqref{e-rhoV} along with Lemma~\ref{L-DimCV}, which applied to~$\cC^\perp$ yields
$\dim\cC^\perp(V,\cc)\leq m(v-m+d-1)$ if $v\leq m$ and $\dim\cC^\perp(V,\cc)\leq v(d-1)$ if $v\geq m$.
\end{proof}

If $m=n-1$, the interval $[m-d+2,n-d]$ is empty for any~$d$. Thus we have

\begin{cor}\label{C-MRDRowMatroid}
Let $\cC\leq\F^{n\times(n-1)}$ be an MRD code. Then $\cM_\cc(\cC)$ is fully determined by the parameters
$(n,\drk(\cC),|\F|)$ and is not a $q$-matroid (unless $\drk(\cC)=1$).
\end{cor}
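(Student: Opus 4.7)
The plan is to apply \cref{T-MRDRowMatroid} with $m=n-1$ and exploit the fact that the ``intermediate'' interval $[m-d+2,n-d]$ collapses to $[n-d+1,n-d]=\emptyset$. As a consequence, \cref{T-MRDRowMatroid} already determines $\rho_\cc(V)$ for \emph{every} $V\in\cV(\F^n)$, with no gap in between, namely
\[
   \rho_\cc(V)=\begin{cases} \dim V, & \text{if }\dim V\leq n-d,\\[.5ex] \dfrac{n(n-d)}{n-1}, & \text{if }\dim V\geq n-d+1. \end{cases}
\]
Since the right-hand side depends only on $\dim V$ and on $(n,d,q)$, I would conclude that $\rho_\cc$ is literally the same function for any two MRD codes in $\F^{n\times(n-1)}$ with rank distance~$d$. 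Together with the fact that the ambient space~$\F^n$ is determined by~$n$ and~$q$, this yields the first assertion that $\cM_\cc(\cC)$ is fully determined by $(n,\drk(\cC),|\F|)$.

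For the second assertion I would show that $\rho_\cc$ attains a non-integer value whenever $d\geq 2$. A non-trivial MRD code in $\F^{n\times(n-1)}$ has $d\leq\min\{m,n\}=n-1$, so $d\in\{2,\ldots,n-1\}$. Pick any $V\in\cV(\F^n)$ with $\dim V=n-d+1\leq n$ (such a~$V$ exists); then $\rho_\cc(V)=n(n-d)/(n-1)$. Using $\gcd(n,n-1)=1$, this value is an integer if and only if $(n-1)\mid(n-d)$, which fails because $1\leq n-d\leq n-2$. Hence $\rho_\cc$ is not integer-valued and $\cM_\cc(\cC)$ is not a $q$-matroid. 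For the excluded case $d=1$ the MRD condition forces $\cC=\F^{n\times(n-1)}$, and the displayed formula collapses to $\rho_\cc(V)=\dim V$, so that $\cM_\cc(\cC)=\cU_n(\F^n)$ is a $q$-matroid.

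There is no substantial obstacle: once the degeneracy $[m-d+2,n-d]=\emptyset$ is observed, both parts of the corollary follow directly from \cref{T-MRDRowMatroid}, and the non-integrality reduces to the elementary divisibility statement $(n-1)\nmid(n-d)$ for $d\in\{2,\ldots,n-1\}$.
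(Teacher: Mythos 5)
Your proposal is correct and follows essentially the same route as the paper, which likewise derives the corollary from \cref{T-MRDRowMatroid} by observing that for $m=n-1$ the interval $[m-d+2,n-d]$ is empty, so the two explicit cases of the theorem cover all of $\cV(\F^n)$. Your explicit divisibility argument $(n-1)\nmid(n-d)$ for $d\in\{2,\dots,n-1\}$ just spells out the non-integrality that the paper leaves implicit.
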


\begin{rem}\label{R-RowMRDDenom}
In the situation of \cref{T-MRDRowMatroid} the denominator~$m$ is not necessarily principal (even if the code is not linear over a subfield of $\F_{q^m}$).
This is for instance the case for a $[7\times6;4]$-MRD code: $n(m-d+1)/m=7/2$, and thus the principal denominator is~$2$.
\end{rem}

If $m< n-1$, the column \qPM{} of an MRD code~$\cC$ in $\Fnm$ is not fully determined by the parameters $(n,m,\drk(\cC),|\F|)$.
This is illustrated by the following example.

\begin{exa}\label{E-RowMatMRD}
In $\F_2^{5\times 2}$ consider the codes~$\cC_1=\subspace{A_1,\dots,A_5}$ and $\cC_2=\subspace{B_1,\ldots,B_5}$, where
\begin{align*}
  &A_1=\begin{pmatrix}1&1\\1&0\\0&0\\1&0\\0&0\end{pmatrix},\
    A_2=\begin{pmatrix}1&1\\1&1\\1&0\\0&1\\0&0\end{pmatrix},\
    A_3=\begin{pmatrix}0&0\\0&0\\1&1\\0&0\\0&1\end{pmatrix},\
    A_4=\begin{pmatrix}0&0\\0&1\\0&0\\0&0\\1&1\end{pmatrix},\
    A_5=\begin{pmatrix}1&0\\0&1\\1&1\\0&0\\0&1\end{pmatrix},\\[3ex]
  & B_1=\begin{pmatrix}1&0\\0&1\\0&0\\0&0\\0&0\end{pmatrix},\
     B_2=\begin{pmatrix}0&0\\1&0\\0&1\\0&0\\0&0\end{pmatrix},\
     B_3=\begin{pmatrix}0&0\\0&0\\1&0\\0&1\\0&0\end{pmatrix},\
     B_4=\begin{pmatrix}0&0\\0&0\\0&0\\1&0\\0&1\end{pmatrix},\
     B_5=\begin{pmatrix}0&1\\0&0\\0&1\\0&0\\1&0\end{pmatrix}.
\end{align*}
Both codes are MRD with rank distance~$d=2$; in fact,~$\cC_2$ is a ($\F_{2^5}$-linear) Gabidulin code.
The \qPM{}s $\cM_\cc(\cC_1)$ and $\cM_\cc(\cC_2)$ are not equivalent.
Indeed, denote the two rank functions by $\rho_\cc^i,\,i=1,2,$ and consider $\rho_\cc^i(V)$ for
$\dim V\in\{2,3\}=[m-d+2,n-d]$.
For the $155$ subspaces of $\F_2^5$ of dimension~$2$ the map $\rho_\cc^1$ assumes the value~$1$ exactly once and the values
$3/2$ and $2$ exactly 28 and 126 times, respectively.
On the other hand, $\rho_\cc^2$ assumes the values~$3/2$ and~$2$ exactly 31 and 124 times, respectively, and never takes the value~$1$.
\end{exa}

We briefly return to the notion of scaling equivalence and non-exactness from \cref{R-TrafoExact}.

\begin{exa}\label{E-NotNiceMatroids}
For \qPM{}s induced by rank-metric codes non-exactness  is a very restrictive property.
Indeed, the first identity in \eqref{e-rhoV} shows that
\begin{equation}\label{e-nonexact}
   \cM_\cc(\cC)\text{ is not exact }\Longleftrightarrow \cC^\perp(V,\cc)\neq0\text{ for all }V\in\cV(\F^n)\setminus0.
\end{equation}
In particular, non-exactness of $\cM_\cc(\cC)$ implies $\drk(\cC^\perp)=1$.
An example of a code satisfying~\eqref{e-nonexact} is for instance
$\cC=\subspace{E_{13},\,E_{23},\,E_{33},\,E_{12}+E_{22}+E_{41}+E_{43}}\leq\F_2^{3\times4}$
(recall the notation from \cref{S-Intro}) with dual code
$\cC^\perp=
\subspace{E_{11},\,E_{21},\,E_{31},\,E_{41}+E_{43},\,E_{12}+E_{43},\,E_{22}+E_{43},\,E_{32},E_{42}}$.
The column \qPM{} $\cM_\cc(\cC)$ can be rescaled with factor $3/2$, leading to an exact \qPM{}~$\cM'$.
It turns out that $\cM'=\cM_{\cc}(\cC')$ for the code $\cC'\leq\F_2^{4\times 4}$ given by
$\cC'=\langle E_{11},E_{12},E_{21},E_{22},E_{31},E_{32},E_{13}+E_{23}+E_{44}$,
$E_{14}+E_{24}+E_{43}+E_{44}\rangle$.
It is not clear to us whether representability of \qPM{}s by rank-metric codes (see the next section) is preserved under rescaling.
\end{exa}

We close this section with the following important result showing that duality of \qPM{}s corresponds to trace-duality of codes.
Recall duality of \qPM{}s from \cref{T-DualqPM}.

\begin{theo}[\mbox{\cite[Thm.~8.1]{GJLR19}}]\label{T-TraceDual}
Let $\cC\leq\F^{n\times m}$ be a rank-metric code and $\cC^\perp\leq\F^{n\times m}$ be dual.
Then $\cM_\cc(\cC)^*=\cM_\cc(\cC^\perp)$, where $\cM_\cc(\cC)^*$ is the dual of $\cM_\cc(\cC)$ w.r.t.\ the standard dot product on~$\F^n$.
Analogously, $\cM_\rr(\cC)^*=\cM_\rr(\cC^\perp)$.
\end{theo}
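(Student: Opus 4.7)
The plan is to verify the pointwise identity $\rho_\cc^* = \rho_\cc^{\cC^\perp}$ on $\cV(\F^n)$, where $\rho_\cc^{\cC^\perp}$ denotes the column rank function associated to the dual code $\cC^\perp$. The argument is essentially a direct computation, provided the two equivalent expressions for $\rho_\cc$ in~\eqref{e-rhoV} are used judiciously, once on each side.

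First I would expand the left-hand side using the definition~\eqref{e-rhodual} of the dual rank function with respect to the standard dot product on~$\F^n$:
\[
   \rho_\cc^*(V) = \dim V + \rho_\cc(V^\perp) - \rho_\cc(\F^n).
\]
Substituting the original expression from \cref{P-RMCMatroid} for $\rho_\cc(V^\perp)$ (using $(V^\perp)^\perp = V$) together with $\rho_\cc(\F^n) = \dim\cC/m$ (immediate from $\cC(\{0\},\cc)=0$), the three summands collapse to $\rho_\cc^*(V) = \dim V - \frac{1}{m}\dim\cC(V,\cc)$. For the right-hand side, I would instead apply the \emph{second} expression in~\eqref{e-rhoV} to the code $\cC^\perp$: since $(\cC^\perp)^\perp = \cC$, this yields $\rho_\cc^{\cC^\perp}(V) = \dim V - \frac{1}{m}\dim\cC(V,\cc)$, matching the expression obtained above. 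The analogous identity for the row \qPM{} follows from the same calculation with $m,n$ and rows, columns interchanged.

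The computation is mechanical, so there is no genuine obstacle; the only point that deserves care is the choice of bilinear form. The dual \qPM{} defined in \cref{T-DualqPM} depends on this choice, and the asserted \emph{equality} (as opposed to mere equivalence of \qPM{}s, which is all one would get from a different form via the automorphism $\phi$ constructed in the proof of \cref{T-DualqPM}) requires using precisely the standard dot product on $\F^n$ — which is also the bilinear form implicit in the definition of $\cC(V^\perp,\cc)$ in \cref{D-RMCBasics}(d), so the conventions align automatically. Everything else is powered by the dimension identity~\eqref{e-DimCV} that underwrites the equivalence of the two formulas for $\rho_\cc$ in the first place.
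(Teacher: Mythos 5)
Your computation is correct: expanding $\rho_\cc^*(V)=\dim V+\rho_\cc(V^\perp)-\rho_\cc(\F^n)$ with the definitional formula from \cref{P-RMCMatroid} (using $(V^\perp)^\perp=V$ and $\cC(0,\cc)=0$) gives $\dim V-\frac{1}{m}\dim\cC(V,\cc)$, which matches the alternative formula of~\eqref{e-rhoV} applied to $\cC^\perp$ via $(\cC^\perp)^\perp=\cC$; and you correctly flag that the equality (not just equivalence) hinges on taking the dual with respect to the same standard dot product that defines $\cC(V^\perp,\cc)$. The paper itself gives no proof — it cites the result from \cite{GJLR19} — but your argument is the standard one, resting exactly on the duality identity~\eqref{e-DimCV} that underlies~\eqref{e-rhoV}.
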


\section{Representability of $q$-Polymatroids}\label{S-Repr}
In this section we discuss representability of \qPM{}s via rank-metric codes.
We will present various examples of $q$-matroids that are not representable via $\F_{q^m}$-linear rank-metric codes (thereby answering a question  from \cite[Sec.~11]{JuPe18}).
A smaller example of such a non-representable $q$-matroid appeared in \cite{CeJu21}, which came out after the first version
of the paper at hand.
Later in this section we will show that that $q$-matroid is not even representable via any $\F_q$-linear rank-metric code.
We first fix the following notions of representability.
Recall $\F_{q^m}$-linearity from \cref{D-Fqn-linear}.
As before, $\F=\F_q$.

\begin{defi}\label{D-Repr}
Let $E$ be an $n$-dimensional $\F$-vector space and $\cM=(E,\rho)$ be a \qPM.
\begin{alphalist}
\item $\cM$ is said to be $\Fnm$-\emph{representable} if there exists a rank-metric code $\cC\leq\Fnm$ such that
        $\cM\approx\cM_\cc(\cC)$.
\item Suppose $\cM$ is a $q$-matroid.
        Then $\cM$ is said to be $\F_{q^m}$-\emph{representable}  if there exists a right $\F_{q^m}$-linear code~$\cC\leq\Fnm$
        such that $\cM\approx\cM_\cc(\cC)$.
\end{alphalist}
\end{defi}

We start with $\F_{q^m}$-representability of $q$-matroids.
It will be necessary to consider $\F_{q^m}$-linear codes as subspaces of $\F_{q^m}^n$, as is indeed common.
Consider the isomorphism~$\Psi:\F_{q^m}^{n}\longrightarrow \F^{n\times m}$ from~\eqref{e-CoordMap}.
Then with any $\F_{q^m}$-subspace~$\cC$ of $\F_{q^m}^{n}$ we can associate the column \qPM{} $\cM_\cc(\Psi(\cC))$,
which is in fact a $q$-matroid.
We denote this $q$-matroid by~$\cM_G$, where $G\in\F_{q^m}^{k\times n}$ is a generator matrix of~$\cC$
(that is, its rows form a basis of~$\cC$).
This is  the approach taken in \cite{JuPe18}.
The following lemma has been proven in \cite[Sec.~5]{JuPe18}.
It determines the rank function of $\cM_G$ with the aid of~$G$.
For self-containment we include a short proof using our notation.

\begin{lemma}\label{L-GYT}
Let $\cC\leq\F_{q^m}^n$ be an $\F_{q^m}$-linear rank-metric code with generator matrix~$G\in\F_{q^m}^{k\times n}$, where
$k=\dim_{\F_{q^m}}\cC$.
Consider the associated $q$-matroid $\cM_G=(\F^n,\rho_\cc)$.
Let $V\in\cV(\F^n)$ with $\dim V=t$, and let $Y\in\F^{n\times t}$ be such that $V=\colsp(Y)$.
Then
\[
   \rho_\cc(V)=\rk_{\F_{q^m}}(G Y).
\]
\end{lemma}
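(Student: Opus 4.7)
The plan is to unwind the definition of $\rho_\cc$ and push the computation through the coordinate isomorphism $\Psi$ until it becomes a statement about the $\F_{q^m}$-rank of $GY$.

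First I would write
\[
  \rho_\cc(V) = \frac{\dim_\F \Psi(\cC) - \dim_\F \Psi(\cC)(V^\perp,\cc)}{m} = \frac{mk - \dim_\F \Psi(\cC)(V^\perp,\cc)}{m},
\]
using that $\dim_\F \Psi(\cC) = mk$. So the task reduces to showing that $\dim_\F \Psi(\cC)(V^\perp,\cc) = m\bigl(k - \rk_{\F_{q^m}}(GY)\bigr)$.

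Next, since $V = \colsp(Y)$, we have $V^\perp = \ker(Y\T)$, so a matrix $M \in \F^{n\times m}$ lies in $\Psi(\cC)(V^\perp,\cc)$ precisely when $Y\T M = 0$. The key linear-algebra step is to translate this condition back to $\F_{q^m}^n$. For any $c = (c_1,\ldots,c_n) \in \F_{q^m}^n$, the $i$-th row of $Y\T \Psi(c)$ equals $\sum_k Y_{ki}\, \psi(c_k) = \psi\bigl(\sum_k Y_{ki} c_k\bigr)$, where I use that $Y$ has entries in $\F$ and $\psi$ is $\F$-linear. Since $\psi$ is injective, $Y\T \Psi(c) = 0$ if and only if $cY = 0$ in $\F_{q^m}^t$. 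Thus
\[
  \Psi(\cC)(V^\perp,\cc) = \Psi\bigl(\{c\in\cC \mid cY = 0\}\bigr).
\]

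Writing $c = xG$ with $x\in\F_{q^m}^k$ (possible since $G$ is a generator matrix, i.e., $x\mapsto xG$ is $\F_{q^m}$-linear and injective), the condition $cY=0$ becomes $x(GY)=0$, i.e., $x$ lies in the left kernel of $GY \in \F_{q^m}^{k\times t}$. Hence $\{c\in\cC\mid cY=0\}$ is $\F_{q^m}$-isomorphic to that left kernel, whose $\F_{q^m}$-dimension is $k - \rk_{\F_{q^m}}(GY)$. Multiplying by $m$ to pass to $\F$-dimension and substituting back yields
\[
  \rho_\cc(V) = \frac{mk - m\bigl(k - \rk_{\F_{q^m}}(GY)\bigr)}{m} = \rk_{\F_{q^m}}(GY),
\]
as desired.

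The only subtle point — and the one I would write out most carefully — is the commutation step $Y\T \Psi(c) = \Psi(cY)$ (viewed row by row), which relies crucially on $Y$ being defined over the prime subfield $\F$ so that $\F$-linearity of $\psi$ suffices. Everything else is bookkeeping once that identification is in place.
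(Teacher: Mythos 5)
Your proof is correct and follows essentially the same route as the paper's: both hinge on the observation that, because $Y$ has entries in $\F$, the condition $\colsp(\Psi(c))\leq V^\perp$ translates to $cY=0$ in $\F_{q^m}^t$, after which the dimension count is rank--nullity for the map $c\mapsto cY$ (which you phrase equivalently via the left kernel of $GY$ under the parametrization $c=xG$). No gaps.
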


\begin{proof}
Clearly $\rk_{\F_{q^m}}(G Y)$ does not depend on the choice of~$Y$.
Since $Y$ has entries in~$\F$, any $x\in\F_{q^m}^n$ satisfies
$xY=0\Longleftrightarrow Y\T x\T=0\Longleftrightarrow Y\T\Psi(x)=0\Longleftrightarrow\colsp(\Psi(x))\leq V^\perp$.
Set $\tilde{\cC}=\Psi(\cC)\leq\F^{n\times m}$.
Then the space $\tilde{\cC}(V^\perp,\cc)$  satisfies
$\Psi^{-1}(\tilde{\cC}(V^\perp,\cc))=\{x\in\cC\mid xY=0\}$.
Let $\pi_Y:\cC\longrightarrow \F_{q^m}^t$ be the $\F_{q^m}$-linear map given by $x\longmapsto x Y$.
Then $\ker\pi_Y= \Psi^{-1}(\tilde{\cC}(V^\perp,\cc))$  and $\text{im}\,\pi_Y=\rowsp_{\F_{q^m}}(GY)$.
Now the desired statement follows from
\[
   \rho_\cc(V)=\frac{\dim_{\F}\tilde{\cC}-\dim_{\F}\tilde{\cC}(V^\perp,\cc)}{m}= \
   \dim_{\F_{q^m}}\tilde{\cC}-\dim_{\F_{q^m}}\Psi^{-1}(\tilde{\cC}(V^\perp,\cc))=\rk_{\F_{q^m}}(GY).
   \qedhere
\]
\end{proof}

\begin{rem}\label{R-RhoGenMat}
The above lemma generalizes as follows.
Consider a general rank-metric code $\cC\leq\F^{n\times m}$ with $\F$-dimension~$k$.
Taking the pre-image under~$\Psi$ of a basis of~$\cC$, we obtain a matrix $G\in\F_{q^m}^{k\times n}$ such that
$\Psi^{-1}(\cC)=\rowsp_{\F}(G)$, where the latter is defined as the $\F$-subspace of~$\F_{q^m}^n$
generated by the rows of~$G$.
Denote its $\F$-dimension by $\rowrk(G)$.
Then it is  easy to see that for $V$ as in \cref{L-GYT} we have
$\rho_\cc(V)=\rowrk(GY)/m$.
\end{rem}

We return to $\F_{q^m}$-representability of $q$-matroids. Of course,  a $q$-matroid $\cM=(\F^n,\rho)$ may be representable over some field $\F_{q^m}$, but not over any smaller field extension of $\F$. This is for instance the case for $\cM=\cM_G$, where
\[
   G=\begin{pmatrix}1&0&0&\alpha \\ 0&1&\alpha^2&\alpha^4\end{pmatrix}\in\F_{2^4}^{2\times 4},
\]
where $\alpha^4+\alpha+1=0$.
It can be verified with SageMath or any other computer algebra system  that $\cM_G=(\F_2^4,\rho_\cc)$ is not representable over $\F_{2^l}$ for $l\leq3$.
In fact,~$\cM_G$ is not even $\F_2^{4\times l}$-representable for any $l\leq3$.

We now construct $q$-matroids that are not $\F_{q^m}$-representable for any $m\in\N$.
In order to do so, we will make use of non-representable (classical) matroids.
Recall that a matroid is a pair $M=(X,r)$, where $X$ is a finite set and $r:2^{X}\longrightarrow\N_0$ satisfies (R1)--(R3) from \cref{D-PMatroid} if we replace the dimension, resp.\ sum, of subspaces by the cardinality, resp.\ union, of subsets.
The rank of~$M$ is defined as $r(X)$.
A matroid $M=(X,r)$ is called \emph{representable} over the field~$F$ if there exists $k\in\N$ and a matrix $G\in F^{k\times |X|}$
with columns indexed by the elements of~$X$,
such that for any subset $A\subseteq X$ we have $r(A)=\rk(G_A)$, where $G_A\in F^{k\times|A|}$ is the submatrix of~$G$ consisting of the columns with indices in~$A$.
It is easy to see that if such a matrix exists then we may choose $k=r(X)$.

The next result will provide us with a crucial link between $q$-matroids and matroids.

\begin{theo}\label{T-qPMClassMatr}
Let  $\cM=(\F^n,\rho)$ be a $q$-matroid and $\cB=\{v_1,\ldots,v_n\}$ be a basis of~$\F^n$.
Define
\[
        r:2^{\cB}\longrightarrow\N_0,\quad A\longmapsto\rho(\subspace{A}).
\]
Then $(\cB,r)$ is a  matroid, denoted  by $M(\cM,\cB)$.
Furthermore, if~$\cM$ is $\F_{q^m}$-representable, then $M(\cM,\cB)$ is $\F_{q^m}$-representable.
\end{theo}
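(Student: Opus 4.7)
The plan is to prove the two statements separately: first, that $(\cB,r)$ is a matroid, and second, that representability transfers.

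For the matroid axioms, I want to verify (R1)--(R3) for $r$ by transporting them from $\rho$. The cardinality-dimension bound (R1) follows because $\cB$ is a basis, so any $A \subseteq \cB$ is linearly independent, giving $\dim\subspace{A}=|A|$, and (R1) for $\rho$ yields $0\leq \rho(\subspace{A})\leq|A|$. Integrality of $r$ follows from $\cM$ being a $q$-matroid. Monotonicity (R2) is immediate since $A\subseteq B$ implies $\subspace{A}\leq\subspace{B}$. The key point for submodularity (R3) is the identities $\subspace{A\cup B}=\subspace{A}+\subspace{B}$ (always true) and $\subspace{A\cap B}=\subspace{A}\cap\subspace{B}$ (requires that $A,B$ sit inside a common basis, which they do). Given these two identities, (R3) for $r$ transfers directly from (R3) for $\rho$. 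I would prove the intersection identity by a short linear independence argument: any $v\in\subspace{A}\cap\subspace{B}$ has two expressions as a linear combination of basis vectors, which must agree, hence $v\in\subspace{A\cap B}$.

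For the representability statement, suppose $\cM\approx\cM_\cc(\cC)$ for a right $\F_{q^m}$-linear code $\cC\leq\F^{n\times m}$. Write $\cC=\Psi(\cC')$ for an $\F_{q^m}$-linear code $\cC'\leq\F_{q^m}^n$ with generator matrix $G\in\F_{q^m}^{k\times n}$, so that $\cM_\cc(\cC)=\cM_G$. Let $\alpha\in\GL(\F^n)$ be the isomorphism witnessing $\cM\approx\cM_G$, so $\rho(V)=\rho_\cc(\alpha(V))$ for all $V\in\cV(\F^n)$. Set $P=\bigl(\alpha(v_1)\mid\cdots\mid\alpha(v_n)\bigr)\in\GL_n(\F)$, whose columns are indexed by $\cB$, and define
\[
   H = G P \in \F_{q^m}^{k\times n},
\]
with the columns of $H$ also indexed by $\cB$.

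To conclude, I need to show that for every $A\subseteq\cB$, the submatrix $H_A$ satisfies $\rk_{\F_{q^m}}(H_A)=r(A)$. Let $Y_A\in\F^{n\times|A|}$ be the matrix whose columns are $\alpha(v_i)$ for $v_i\in A$; then $H_A=GY_A$ and $\colsp(Y_A)=\alpha(\subspace{A})$. Applying \cref{L-GYT} to the space $\colsp(Y_A)$ yields
\[
   \rk_{\F_{q^m}}(H_A)=\rk_{\F_{q^m}}(GY_A)=\rho_\cc(\alpha(\subspace{A}))=\rho(\subspace{A})=r(A),
\]
exactly the representability of $M(\cM,\cB)$ over $\F_{q^m}$.

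I do not expect a serious obstacle. The only subtle step is the intersection identity $\subspace{A\cap B}=\subspace{A}\cap\subspace{B}$ for subsets of a basis, and even that is an elementary linear-algebra observation; everything else is bookkeeping together with one invocation of \cref{L-GYT}. A minor point worth noting in the write-up is that $\alpha$ being an $\F$-isomorphism guarantees $\alpha(A)$ is linearly independent so the map $A\mapsto\alpha(\subspace{A})=\colsp(Y_A)$ has the expected dimension, which is what makes Lemma \ref{L-GYT} applicable.
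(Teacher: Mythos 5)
Your proposal is correct and follows essentially the same route as the paper: axioms (R1)--(R3) transfer via the identities $\subspace{A\cup B}=\subspace{A}+\subspace{B}$ and $\subspace{A\cap B}=\subspace{A}\cap\subspace{B}$ for subsets of a basis, and representability is obtained by right-multiplying $G$ by the matrix of basis vectors and invoking \cref{L-GYT}. The only difference is that you carry the equivalence isomorphism $\alpha$ explicitly (via $P=(\alpha(v_1)\mid\cdots\mid\alpha(v_n))$), whereas the paper absorbs it by taking $\cM=\cM_G$ outright; your version is slightly more careful but not a different argument.
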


\begin{proof}
1) We show that $r$ is indeed a rank function. (R1) and (R2) are clear.
For (R3) let $A,B\subseteq \cB$.
By basic Linear Algebra $\subspace{A\cup B}=\subspace{A}+\subspace{B}$ and $\subspace{A\cap B}=\subspace{A}\cap\subspace{B}$, and therefore
\begin{align*}
   r(A\cup B)+r(A\cap B)&=\rho(\subspace{A\cup B})+\rho(\subspace{A\cap B})=\rho(\subspace{A}+\subspace{B})+\rho(\subspace{A}\cap \subspace{B})\\
   &\leq \rho(\subspace{A})+\rho(\subspace{B})=r(A)+r(B).
\end{align*}
2) We now turn to the statement about representability. Let $\cM=\cM_G$ for some $G\in\F_{q^m}^{k\times n}$.
Define the matrix $X=(v_1,\ldots,v_n)$, which is in $\GL_n(\F)$, and set $G':=GX$.
Index the columns of~$G'$ by $v_1,\ldots,v_n$, and for any $A\subseteq\cB$
denote by $G'_{A}$ the submatrix of~$G'$ with columns indexed by $A$.
Let now $A=\{v_{a_1},\ldots,v_{a_l}\}$ be any subset of~$\cB$ and $Y\in\F^{n\times l}$ be the matrix with columns
$v_{a_1},\ldots,v_{a_l}$.
Then $\subspace{A}=\colsp(Y)$ and thus
\[
    r(A)=\rho(\subspace{A})=\rk(GY)=\rk(G'X^{-1}Y)=\rk(G'(e_{a_1},\ldots,e_{a_l}))=\rk(G'_A).
\]
This shows that the matroid $M(\cB,r)$ is represented by the matrix~$G'\in\F_{q^m}^{k\times n}$.
\end{proof}

Having established this particular relation between $q$-matroids and induced matroids, we now turn to a class of matroids
that can be interpreted as such induced matroids.
They form a special case of paving matroids.
A matroid is called \emph{paving} if no circuit has cardinality less than the rank of the matroid.
The following result is well known; see \cite[1.3.10]{Ox11} for a more general statement.
We include an elementary proof which  generalizes immediately to $q$-matroids.

\begin{prop}\label{P-MatroidExcSet}
Let $\cB$ be a set with $|\cB|=n$ and let $k\in[n]$.
Let $\cA$ be a collection of $k$-subsets of~$\cB$ such that $|A\cap B|\leq k-2$ for all
distinct $A,\,B\in\cA$.
Define the map
\[
  r:2^{\cB}\longrightarrow\N_0,\ X\longmapsto \begin{cases}\quad k-1,&\text{if }X\in\cA,\\ \min\{|X|,k\},&\text{if }X\not\in\cA.\end{cases}
\]
Then $r$ is a rank function on~$\cB$. We denote the resulting matroid by $M_{\cB,\cA}$.
The circuits are given by the subsets in~$\cA$ and all $(k+1)$-subsets that do not contain a subset in~$\cA$.
\end{prop}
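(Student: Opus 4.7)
The plan is first to verify that $r$ satisfies the set-theoretic analogues of (R1)--(R3) from \cref{D-PMatroid} (with cardinality replacing dimension and union/intersection replacing sum/intersection of subspaces), and then to identify the circuits directly from~$r$.

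Condition (R1), $0\le r(X)\le|X|$, is immediate from the two defining clauses since $k-1<k=|A|$ for $A\in\cA$. Monotonicity (R2) for $X\subseteq Y$ is a short case check: if $|X|<k$ then $X\notin\cA$, so $r(X)=|X|$ and $r(Y)\ge|X|$ by inspection of both clauses; if $|X|\ge k$ then $r(X)\in\{k-1,k\}$, and the only way to obtain $r(Y)<r(X)$ would be $Y\in\cA$ with $X\notin\cA$, which forces $|X|=|Y|=k$ and $X=Y$, a contradiction.

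The core of the proof is submodularity (R3). Writing $a=|X|,\,b=|Y|,\,c=|X\cap Y|,\,d=|X\cup Y|$, I would partition by which of $X,Y,X\cap Y,X\cup Y$ lies in $\cA$. When none does, the inequality reduces to the standard truncation identity $\min\{a,k\}+\min\{b,k\}\ge\min\{c,k\}+\min\{d,k\}$. The hypothesis $|A\cap B|\le k-2$ for distinct $A,B\in\cA$ is invoked precisely in the case $X,Y\in\cA$, $X\ne Y$: it forces $c\le k-2$, so $X\cap Y\notin\cA$, and $d=2k-c\ge k+2$, so $X\cup Y\notin\cA$; the submodular defect then equals $2(k-1)-c-k=k-2-c\ge0$. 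The remaining exceptional cases (exactly one of $X,Y$ in $\cA$; or $X\cap Y\in\cA$; or $X\cup Y\in\cA$ with $X,Y\notin\cA$) force the $\cA$-member to have cardinality exactly $k$, pinning down several of $a,b,c,d$; in each such case a short computation shows that the single $-1$ drop from $\cA$-membership is absorbed by a strict cardinality gap such as $d>a$ (when $Y\not\subseteq X$) or $c<a$ (when $X\not\subseteq Y$).

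For the circuits, every $X$ with $|X|<k$ satisfies $r(X)=|X|$ and is independent, so no such set is a circuit. Each $A\in\cA$ has $|A|=k$ and $r(A)=k-1<k$, while every proper subset has cardinality $<k$ and is independent, so $A$ is a circuit. A $k$-subset not in $\cA$ has rank $k$ and is independent. A $(k+1)$-subset $X$ has $r(X)=k<k+1$, so it is dependent, and it is a circuit precisely when every $k$-subset $Y\subsetneq X$ is independent, i.e.\ $Y\notin\cA$, which is the stated condition. Finally, any set of cardinality $\ge k+2$ contains a dependent $(k+1)$-subset and so cannot itself be a circuit. The main obstacle is purely bookkeeping in the submodularity step: one must invoke the $|A\cap B|\le k-2$ hypothesis at exactly the right moment (the ``both in $\cA$'' case) and correctly recognize the forced cardinality equalities that arise whenever $X\cap Y$ or $X\cup Y$ accidentally lands in $\cA$.
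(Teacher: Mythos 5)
Your proposal is correct and follows essentially the same route as the paper: a direct case-analysis verification of (R1)--(R3), with the hypothesis $|A\cap B|\le k-2$ invoked exactly in the ``both sets in $\cA$'' case. The only differences are organizational --- the paper splits cases by the value of $r$ on $X$ and $Y$ and then absorbs $X\cap Y$ and $X\cup Y$ uniformly via the bounds $r(X\cap Y)\le|X\cap Y|$ and $r(X\cup Y)\le k$ (after reducing to $X\not\subseteq Y$, $Y\not\subseteq X$), whereas you sub-case on $\cA$-membership of all four sets --- and you additionally write out the circuit verification, which the paper states without proof.
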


\begin{proof}
(R1) and (R2) are clear.
For (R3) we have to show $r(A\cup B)\leq r(A)+r(B)-r(A\cap B)$ for all subsets $A,B\subseteq\cB$.
We may assume $A\not\subseteq B$ and $B\not\subseteq A$ for otherwise the statement is clear.
Note that $r(A\cup B)\leq k$ for all subsets.
Consider $S:=r(A)+r(B)-r(A\cap B)$ and note that $S\geq r(A)+r(B)-|A\cap B|$.
We proceed by cases.
\\
1) If $r(A)=r(B)=k$, then  $S=2k-r(A\cap B)\geq k$.
\\
2) If $r(A)=|A|$ and $r(B)=k$, then $S\geq |A|+k-|A\cap B|\geq |A|+k-(|A|-1)=k+1$.
\\
3) If $r(A)=|A|$ and $r(B)=|B|$, then $S\geq |A|+|B|-|A\cap B|=|A\cup B|\geq r(A\cup B)$.
\\
4) If $A,\,B\in\cA$, then $S\geq 2k-2-|A\cap B|\geq k$.
\\
5) If $A\in\cA$ and $r(B)=k$, then $S\geq 2k-1-|A\cap B|\geq 2k-1-(k-1)=k$.
\\
6) If $A\in\cA$ and $r(B)=|B|$, then $S\geq  k-1+|B|-|A\cap B|\geq k-1+|B|-(|B|-1)=k$.
\end{proof}

The $q$-analogue reads as follows.
The proof is entirely analogous to the previous  one: just replace cardinality and union of subsets by dimension
and sum of subspaces, respectively.

\begin{prop}\label{P-qMatroidExcSpace}
Let $n\in\N$ and fix an integer $k\in[n]$. Let $\cS$ be a collection of $k$-spaces in~$\F^n$ such that
$\dim(V\cap W)\leq k-2$ for all  distinct $V,\,W\in\cS$.
Define the map
\[
  \rho:\cV(\F^n)\longrightarrow\N_0,\ V\longmapsto \begin{cases}\quad k-1,&\text{if }V\in\cS,\\ \min\{\dim V,k\},&\text{if }V\not\in\cS.\end{cases}
\]
Then $(\F^n,\rho)$ is a $q$-matroid. We denote it by $\cM_{n,\F,\cS}$.
\end{prop}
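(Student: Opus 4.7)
The plan is to verify that the three axioms (R1), (R2), (R3) hold for the map $\rho$ defined above, mirroring the proof of \cref{P-MatroidExcSet} with cardinality/union replaced by dimension/sum. Axiom (R1) is immediate: if $V\in\cS$ then $\rho(V)=k-1\leq k=\dim V$, and otherwise $\rho(V)=\min\{\dim V,k\}\leq\dim V$. For (R2) suppose $V\leq W$. If neither is in~$\cS$, monotonicity is clear from the $\min$ formula. If both are in $\cS$, then $V\cap W=V$ has dimension $k$, forcing $V=W$ by the intersection hypothesis on~$\cS$. If only $V\in\cS$, then $\dim W\geq k$ and $V\notin\cS$ would give $V=W$, so $\rho(W)=k\geq k-1=\rho(V)$. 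If only $W\in\cS$, then $\dim V\leq k$, and $\dim V=k$ would force $V=W\in\cS$; hence $\dim V\leq k-1$ and $\rho(V)=\dim V\leq k-1=\rho(W)$.

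The substance of the proof is submodularity (R3), which I would handle by a six-case analysis exactly parallel to the classical one. Classify each of $V,W$ by type: (i) $V\notin\cS$ and $\dim V\geq k$, giving $\rho(V)=k$; (ii) $V\notin\cS$ and $\dim V\leq k$, giving $\rho(V)=\dim V$; (iii) $V\in\cS$, giving $\rho(V)=k-1$. Set $S=\rho(V)+\rho(W)-\rho(V\cap W)$; the goal is $S\geq\rho(V+W)$, using the global bounds $\rho(V+W)\leq k$ and $\rho(V\cap W)\leq\min\{\dim(V\cap W),k\}$, together with $\dim(V+W)=\dim V+\dim W-\dim(V\cap W)$. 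A key preliminary observation is that $V\cap W\in\cS$ forces $V=W\in\cS$, since the intersection hypothesis on $\cS$ would otherwise contradict $\dim(V\cap W)=k$; consequently in every mixed case we have $\rho(V\cap W)=\dim(V\cap W)$. The cases (i)+(i), (i)+(ii), (ii)+(ii), (iii)+(i), (iii)+(iii) then follow from short arithmetic: for example, (iii)+(iii) uses the intersection hypothesis $\dim(V\cap W)\leq k-2$ to give $S\geq 2(k-1)-(k-2)=k\geq\rho(V+W)$.

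The one subtle case—and hence the main (mild) obstacle—is (iii)+(ii): $V\in\cS$, $W\notin\cS$, $\dim W\leq k$. Here $\rho(V\cap W)=\dim(V\cap W)$, so $S=(k-1)+\dim W-\dim(V\cap W)=\dim(V+W)-1$. If $\dim(V+W)>k$ then $V+W\notin\cS$ and $\rho(V+W)=k\leq\dim(V+W)-1=S$. If $\dim(V+W)=k$, then $V\leq V+W$ and $\dim V=k$ force $V+W=V\in\cS$, so $\rho(V+W)=k-1=S$. This closes the case and finishes submodularity, so $\rho$ is a $q$-rank function and $(\F^n,\rho)$ is a $q$-matroid. \QED
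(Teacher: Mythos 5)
Your overall strategy is exactly the paper's: the published proof consists of the single remark that the argument is ``entirely analogous'' to that of \cref{P-MatroidExcSet}, i.e.\ the same six-case analysis with cardinality and union replaced by dimension and sum. Your verifications of (R1), (R2) and of the cases (i)+(i), (i)+(ii), (ii)+(ii), (iii)+(iii), (iii)+(ii) are correct; in particular your handling of (iii)+(ii) neatly sidesteps the reduction to non-nested $V,W$ that the paper performs at the start of the proof of \cref{P-MatroidExcSet}.

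However, your ``key preliminary observation'' is false, and the error lands precisely in the one case you left as ``short arithmetic.'' The claim that $V\cap W\in\cS$ forces $V=W$ fails whenever $W\in\cS$ and $W\lneq V$ (then $V\cap W=W\in\cS$ while $V\neq W$), so the blanket consequence ``$\rho(V\cap W)=\dim(V\cap W)$ in every mixed case'' is wrong in case (iii)+(i): for $V\in\cS$ with $V\leq W$ one has $V\cap W=V\in\cS$, hence $\rho(V\cap W)=k-1<k=\dim(V\cap W)$. Worse, if you \emph{use} the claimed equality there you get $S=(k-1)+k-\dim(V\cap W)=k-1$ when $V\leq W$, which is strictly less than $\rho(V+W)=\rho(W)=k$, so the case appears to fail. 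It does not actually fail, but for a different reason: since $V\cap W\leq V$ and $\dim V=k$, either $\dim(V\cap W)\leq k-1$ or $V\cap W=V\in\cS$, and in both situations $\rho(V\cap W)\leq k-1$, whence $S\geq 2k-1-(k-1)=k\geq\rho(V+W)$. Alternatively, follow the paper and first dispose of the nested cases $V\leq W$ and $W\leq V$, where submodularity holds with equality; after that reduction $\dim(V\cap W)\leq\dim V-1$ is available in case (iii)+(i). With that one repair the proof is complete.
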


We can easily link the two constructions.

\begin{prop}\label{P-LinkMatrqMatr}
Let $\cB$ be a basis of $\F^n$ and let $\cA$ be a collection of $k$-subsets of~$\cB$ such that $|A\cap B|\leq k-2$ for all
distinct $A,\,B\in\cA$.
Set $\cS=\{\subspace{A}\mid A\in\cA\}$. Then $\dim(V\cap W)\leq k-2$ for all $V,\,W\in\cS$ and
$M_{\cB,\cA}=M(\cM_{n,\F,\cS},\,\cB)$.
\end{prop}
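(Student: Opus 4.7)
The plan is to verify the two conclusions separately, both times exploiting the fact that for subsets $A,B$ of a basis $\cB$ of $\F^n$ the span-intersection identity $\subspace{A}\cap\subspace{B}=\subspace{A\cap B}$ holds, and that a subset of $\cB$ is determined by the subspace it spans (among subsets of $\cB$ of that cardinality).

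For the first claim, I would take distinct $A,B\in\cA$ and set $V=\subspace{A}$, $W=\subspace{B}$. Since $\cB$ is linearly independent, $A\neq B$ forces $V\neq W$ (otherwise, as both $A$ and $B$ are independent subsets of $\cB$ of equal size $k$ generating the same subspace, they must coincide). The span-intersection identity gives $\dim(V\cap W)=|A\cap B|$, which is at most $k-2$ by hypothesis on $\cA$.

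For the second claim, let $r_{\cB,\cA}$ denote the rank function of $M_{\cB,\cA}$ from \cref{P-MatroidExcSet} and let $r'(X)=\rho(\subspace{X})$ for $X\subseteq\cB$, where $\rho$ is the rank function of $\cM_{n,\F,\cS}$ from \cref{P-qMatroidExcSpace}; by \cref{T-qPMClassMatr} the latter is the rank function of $M(\cM_{n,\F,\cS},\cB)$. I would check $r_{\cB,\cA}(X)=r'(X)$ by cases. If $X\in\cA$, then $\subspace{X}\in\cS$, so $r'(X)=k-1=r_{\cB,\cA}(X)$. If $X\notin\cA$, then I need to show $\subspace{X}\notin\cS$, so that $r'(X)=\min\{\dim\subspace{X},k\}=\min\{|X|,k\}=r_{\cB,\cA}(X)$, using $\dim\subspace{X}=|X|$. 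Suppose for contradiction $\subspace{X}=\subspace{A}$ for some $A\in\cA$. Then $\dim\subspace{X}=k$, so $|X|=k$; but then, as above, $X$ and $A$ are two independent subsets of $\cB$ of size $k$ spanning the same subspace, forcing $X=A\in\cA$, a contradiction.

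No step presents a real obstacle; the entire proof is a short bookkeeping argument whose only substantive ingredient is the observation that subsets of a basis are recoverable from their span, which in turn underlies both the dimension bound in the first claim and the case distinction in the second.
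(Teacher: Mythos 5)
Your proof is correct and follows the same route as the paper's: compare the two rank functions case by case via the correspondence $X\in\cA\Leftrightarrow\subspace{X}\in\cS$ and the identity $|X|=\dim\subspace{X}$. The only difference is that you spell out the steps the paper labels ``easy to see''---namely that $\subspace{A}\cap\subspace{B}=\subspace{A\cap B}$ for subsets of a basis and that a subset of $\cB$ is recoverable from its span---which is a welcome but not essentially different elaboration.
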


\begin{proof}
It is easy to see that $\dim(V\cap W)\leq k-2$ for all $V,\,W\in\cS$.
For the second statement let $M_{\cB,\cA}=(\cB,r)$ and $\cM_{n,\F,\cS}=(\F^n,\rho)$.
Then we have for any subset $X\subseteq\cB$
\[
  r(X)=\Bigg\{\begin{array}{cl} k-1,&\text{if }X\in\cA,\\[.5ex] \min\{|X|,k\},&\text{if }X\not\in\cA,\end{array}\Bigg\}
  \ \text{ and }\
  \rho(\subspace{X})=\Bigg\{\begin{array}{cl} k-1,&\text{if }\subspace{X}\in\cS,\\[.5ex] \min\{\dim\subspace{X},k\},&\text{if }\subspace{X}\not\in\cS.\end{array}\Bigg\}
\]
This proves the desired statement.
\end{proof}

Now we are ready to present examples of $q$-matroids with ground space $\F^n$ that are not
$\F_{q^m}$-representable for any $m\in\N$.
In each case the non-representability follows from the non-representa\-bi\-li\-ty of the associated matroid with the aid of
\cref{T-qPMClassMatr}.
The following examples are universal in the sense that they apply to a large class of fields~$\F$.

\begin{exa}\label{E-NonRepr}
\begin{alphalist}
\item The Vamos Matroid~\cite[Ex.~2.1.25]{Ox11}: Let $n=8,\,k=4$ and
         \[
           \cA=\big\{\{1,2,3,4\},\,\{1,4,5,6\},\,\{2,3,5,6\},\,\{1,4,7,8\},\,\{2,3,7,8\}\big\}.
         \]
         The matroid $V_8:=M_{[8],\cA}$ is known as the Vamos matroid. It is not representable over any field
         \cite[p.~169, Ex.~7(e)]{Ox11}.
         In fact, it is the smallest such matroid.
         Choose any finite field~$\F$ and let $\xi$ be a bijection between $[n]$ and a fixed basis $\cB$ of $\F^n$.
         Then \cref{T-qPMClassMatr} and \cref{P-LinkMatrqMatr} tell us that the $q$-matroid $\cM_{8,\F_q,\cS}$,
         where $\cS=\{\subspace{\xi(A)}\mid A\in \cA\}$, is not $\F_{q^m}$-representable for any~$m\in\N$.
          A similar example can be constructed for $n=9,\,k=3$ and the non-Pappus matroid~\cite[Ex.~1.5.15]{Ox11}.
\item The Fano Matroid \cite[Ex.~1.5.7]{Ox11}: Let $n=7,\,k=3$ and
         \[
            \cA=\big\{\{1,2,4\},\,\{2,3,5\},\,\{3,4,7\},\,\{1,5,7\},\,\{2,6,7\},\,\{4,5,6\},\,\{1,3,6\}\big\},
         \]
         which is the set of lines in the Fano plane.
         The matroid $F_7:=M_{[7],\cA}$ is called the Fano matroid. It is not representable over any field of odd characteristic
         \cite[Prop.~6.4.8]{Ox11}.
         Thus, if~$q$ is odd then the corresponding $q$-matroid $\cM_{7,\F_q,\cS}$,
         where $\cS=\{\subspace{\xi(A)}\mid A\in \cA\}$, is not $\F_{q^m}$-representable for any~$m$ (here $\xi$ is a bijection between $[7]$ and a basis of $\F_q^7$).
\item The non-Fano Matroid: If one omits one of the sets in the collection~$\cA$ in~(b), one obtains the non-Fano matroid
        $F_7^-:=M_{[7],\cA'}$, which is not representable over any field of even characteristic~\cite[Prop.~6.4.8]{Ox11}.
        Thus we have the analogous conclusion as in~(b) for $q$-matroids with even~$q$.
\end{alphalist}
\end{exa}

It remains an open question whether any of the above $q$-matroids is $\F^{n\times m}$-representable.
However, we have the following smaller example of a $q$-matroid~$\cM$ that is not $\F^{n\times m}$-representable for any~$m$.
It has been shown very recently in \cite[Sec.~3.3]{CeJu21} that~$\cM$ is not $\F_{q^m}$-representable.
This shows that the converse of \cref{T-qPMClassMatr} is not true because every matroid over a groundset of cardinality~4
is representable over any field of size at least~$3$.
In fact, the $q$-matroid of the following theorem together with all possible choices of bases~$\cB$ in \cref{T-qPMClassMatr}
lead -- up to isomorphism -- to the uniform matroid of rank~$2$ or to the paving matroids  $M_{[4],\cA}$ with $\cA=\{\{1,2\}\}$ or $\cA=\{\{1,2\},\{3,4\}\}$ (see  the notation of \cref{P-MatroidExcSet}).
The latter two matroids are representable over every field.

\begin{theo}\label{E-NotRepr}
Let $\F=\F_2$ and consider
$\cS=\{V_0,V_1,V_2,V_3\}\subset\cV(\F^4)$, where
\[
    V_0=\subspace{1000,\,0100}, V_1=\subspace{0010,\,0001},  V_2=\subspace{1001,\,0111}, V_3=\subspace{1011,\,0110}.
\]
(We may also choose any other partial spread of size $4$).
Let $\cM=\cM_{4,\F,\cS}$ (see \cref{P-qMatroidExcSpace}), that is  $\cM=(\F^4,\rho)$, where
\[
  \rho(V)=1\text{ for }V\in\cS\ \text{ and }\ \rho(V)=\min\{2,\dim V\}\text{ otherwise}.
\]
Then $\cM$ is not $\F^{4\times m}$-representable for any $m\in\N$.
\end{theo}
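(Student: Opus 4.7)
The plan is to argue by contradiction: suppose there exists a rank-metric code $\cC\leq\F_2^{4\times m}$ with $\cM_\cc(\cC)\approx\cM$. By \cref{P-EquivCodeMatroid}(a), after replacing $\cC$ with an equivalent code I may assume $\cM_\cc(\cC)=\cM$ on the nose. Reading the formula $\rho_\cc(V)=\dim V-m^{-1}\dim\cC^\perp(V,\cc)$ against the prescribed rank function translates into the following constraints on the shortenings of $\cC^\perp$: $\dim\cC^\perp(V_i,\cc)=m$ for each $V_i\in\cS$, and crucially $\cC^\perp(V,\cc)=0$ for every $2$-dimensional subspace $V\notin\cS$ (in particular $\dim\cC=2m$). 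The preliminary observation that drives the argument is that $\cS$ extends uniquely to a $2$-spread of $\F_2^4$ by adjoining $V_4:=\subspace{1010,0101}$, and this extra $2$-space will be the source of the contradiction.

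Next I would exploit the transversal decomposition $\F_2^4=V_0\oplus V_1$ to split every matrix in $\F_2^{4\times m}$ into blocks $\Smalltwomat{A}{B}$ with $A,B\in\F_2^{2\times m}$. Setting $\cD_i:=\cC^\perp(V_i,\cc)$, the block form identifies $\cD_0$ and $\cD_1$ with $m$-dimensional subspaces $\cA_0,\cA_1\subseteq\F_2^{2\times m}$ (one in each block position). Since $V_0\cap V_1=0$ gives $\cD_0\cap\cD_1=0$, dimension counting yields the internal direct sum $\cC^\perp=\cD_0\oplus\cD_1$, so every element of $\cC^\perp$ is uniquely of the form $\Smalltwomat{A}{B}$ with $A\in\cA_0$ and $B\in\cA_1$ chosen independently.

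Each of $V_2,V_3,V_4$ is transverse to $V_1$, hence the graph $\{(u,Tu):u\in\F_2^2\}$ of a uniquely determined $T\in\F_2^{2\times2}$; a short calculation from the given generators yields $T_2=R:=\bigl(\begin{smallmatrix}0&1\\1&1\end{smallmatrix}\bigr)$, $T_3=S:=\bigl(\begin{smallmatrix}1&1\\1&0\end{smallmatrix}\bigr)$, and $T_4=I$. The inclusion $\cD_2\subseteq\cC^\perp$ forces every $\Smalltwomat{A}{RA}\in\cD_2$ to satisfy $A\in\cA_0$ and $RA\in\cA_1$; since $\dim\cD_2=m=\dim\cA_0$, this upgrades to the equality $R\cA_0=\cA_1$. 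The parallel step with $\cD_3$ gives $S\cA_0=\cA_1$, so $\cA_0$ is $R^{-1}S$-invariant. A direct computation in $\GL_2(\F_2)$ shows $R^{-1}=S$ and $S^2=R$, hence $R^{-1}S=R$; consequently $R\cA_0=\cA_0$ and $\cA_1=\cA_0$. But then $\cC^\perp(V_4,\cc)=\{\Smalltwomat{A}{A}:A\in\cA_0\}$ has dimension $m\geq 1$, contradicting the constraint $\cC^\perp(V_4,\cc)=0$ forced by $V_4\notin\cS$.

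The hard part is pinpointing the right algebraic coincidence that simultaneously rules out all $m$. Once one reduces to subspaces of $\F_2^{2\times m}$, the obstruction is the identity $R^{-1}S=R$ in $\GL_2(\F_2)$, reflecting the $\F_4$-subfield structure of $\{0,I,R,S\}\subseteq\F_2^{2\times 2}$. Without this identity one could conceivably arrange $\cA_0$ and $R\cA_0$ as distinct, trivially-intersecting $m$-dimensional subspaces for large $m$, so it is not a priori clear that the single spread-completing space $V_4$ suffices; verifying the coincidence $R^{-1}S=R$ is therefore the key computation.
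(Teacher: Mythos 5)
Your proof is correct and follows essentially the same route as the paper's: decompose along $V_0\oplus V_1$, use $V_2$ and $V_3$ to force the two block components to be related by the order-three matrices of $\GL_2(\F_2)$, and derive the contradiction from the spread-completing space $\subspace{1010,0101}$. The only (cosmetic) differences are that you phrase everything in terms of $\cC^\perp$ via~\eqref{e-rhoV} rather than $\cC$ via the perp-closedness of $\cS$, and your endgame is slightly more direct -- you conclude $\cA_1=\cA_0$ outright and get $\dim\cC^\perp(V_4,\cc)=m$, whereas the paper routes through the $\F_4$-module structure of $\cA$ to exhibit $m/2$ independent elements of the corresponding shortening.
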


\begin{proof}
Assume to the contrary that there exists $m\in\N$ and a rank-metric code $\cC\leq\F^{4\times m}$ such that
\[
    \rho(V)=\frac{\dim\cC-\dim\cC(V^\perp,\cc)}{m}\ \text{ for all } V\in\cV(\F^4).
\]
Using $V=\F^4$, we see that $\dim\cC=2m$.
Furthermore, the above values of $\rho(V)$ and the identity $\cS=\{V_0^\perp,\ldots,V_3^\perp\}$ lead to
\begin{equation}\label{e-CVc}
   \dim\cC(V,\cc)=\left\{\begin{array}{cl}0,&\text{if }\dim V\leq1 \text{ or $[\dim V=2$ and $V\not\in\cS]$},\\ m,&\text{if } V\in\cS\text{ or }\dim V=3.
   \end{array}\right.
\end{equation}
The conditions $\dim\cC(V_0,\cc)=\dim\cC(V_1,\cc)=m$ and $\dim\cC(V,\cc)=0$ if $\dim V=1$ imply that $\cC$ has a basis of the form
\begin{equation}\label{e-Basis}
   \begin{pmatrix}A_1\\0\end{pmatrix},\ldots,\begin{pmatrix}A_m\\0\end{pmatrix},
   \begin{pmatrix}0\\B_1\end{pmatrix},\ldots,\begin{pmatrix}0\\B_m\end{pmatrix},
\end{equation}
where $\cA=\subspace{A_1,\ldots,A_m}$ and $\cB=\subspace{B_1,\ldots,B_m}$ are MRD codes in $\F^{2\times m}$.
As for the spaces~$V_2$ and~$V_3$ note that
\[
   V_2=\rowsp\begin{pmatrix}1&0&0&1\\0&1&1&1\end{pmatrix}=\rowsp\big(I_2\mid S\T\big), \quad
   V_3=\rowsp\begin{pmatrix}1&0&1&1\\0&1&1&0\end{pmatrix}=\rowsp\big(I_2\mid T\T\big),
\]
where
\[
  S=\begin{pmatrix}0&1\\1&1\end{pmatrix},\  T=S^2=S^{-1}=\begin{pmatrix}1&1\\1&0\end{pmatrix}.
\]
Clearly, a matrix $M\in\cC$ is in $\cC(V_2,\cc)$ if and only if $M=\Smalltwomat{A}{SA}$ for some $A\in\cA$.
Hence there exist linearly independent matrices $\Smalltwomat{\tilde{A}_i}{S\tilde{A}_i},i=1,\ldots,m$, in~$\cC$.
But then $\tilde{A}_1,\ldots,\tilde{A}_m\in\cA$ must be linearly independent.
Thus $\cB=S\cA$. Using the space~$V_3$ we obtain similarly $\cB=T\cA$.
Hence $\cA=T^{-1}S\cA$, and the latter is $T\cA$. In other words,~$\cA$ is $T$-invariant.
Note that $\{0,I,T,T^2\}$ is the subfield~$\F_4$, and in particular, $T^2=I+T$.
All of this shows that $\cA$ is an $\F_4$-vector space (thus has even dimension over~$\F_2$) and
has an $\F_2$-basis of the form $A_1,\ldots,A_\ell,TA_1,\ldots, TA_\ell$, where $\ell=m/2$.
Using this basis of~$\cA$,~\eqref{e-Basis} reads as
\[
      \begin{pmatrix}A_1\\0\end{pmatrix},\ldots,\begin{pmatrix}A_\ell\\0\end{pmatrix},
      \begin{pmatrix}TA_1\\0\end{pmatrix},\ldots,\begin{pmatrix}TA_\ell\\0\end{pmatrix},
      \begin{pmatrix}0\\TA_1\end{pmatrix},\ldots,\begin{pmatrix}0\\TA_\ell\end{pmatrix},
      \begin{pmatrix}0\\(I+T)A_1\end{pmatrix},\ldots,\begin{pmatrix}0\\(I+T)A_\ell\end{pmatrix}.
\]
This shows that~$\cC$ contains the matrices $\Smalltwomat{A_i+TA_i}{A_i+TA_i}$ for $i=1,\ldots,\ell$, and therefore
$\dim\cC(V,\cc)\geq \ell$ for $V=\subspace{1010,0101}$. This contradicts~\eqref{e-CVc} and we conclude that
there is no code $\cC\leq\F^{4\times m}$ that represents the $q$-matroid $\cM$.
\end{proof}

It  is not yet clear whether there exists a $q$-matroid $\cM=(\F^n,\rho)$ and $m\in\N$ such that $\cM$ is not $\F_{q^m}$-representable  but
$\F^{n\times m}$-representable.

\section{Deletions and Contractions}\label{S-DelContr}
We define deletion and contraction for \qPM{}s and show that they correspond to puncturing and shortening of rank-metric codes.

\begin{defi}\label{D-Restrict}
Let $\cM=(E,\rho)$ be a \qPM{} and $X\in\cV(E)$.
\begin{alphalist}
\item Define $\rho|_X:\cV(X)\longrightarrow\Q_{\geq0},\ W\longmapsto \rho(W)$.
        Then $\cM|_X:=(X,\rho|_X)$ is a \qPM{} on~$X$ and is called the \emph{restriction} of~$\cM$ to~$X$.
\item Fix a non-degenerate symmetric bilinear form $\inner{\cdot}{\cdot}$ on~$E$. The restriction of~$\cM$ to $X^\perp$ is called the \emph{deletion} of~$X$ from~$\cM$ w.r.t.~$\inner{\cdot}{\cdot}$, and the resulting \qPM{}
        $(\cM|_{X^\perp},\rho|_{X^\perp})$ is denoted by $(\cM\setminus X)_{\langle\cdot\mid\cdot\rangle}$ or simply $\cM\setminus X$  if the
        bilinear form is clear from the context.
\end{alphalist}
\end{defi}

In~(a), it is clear that $\rho|_X$ satisfies (R1)--(R3) from \cref{D-PMatroid} and thus is a $q$-rank function.
In~(b), the deletion $(\cM\setminus X)_{\langle\cdot\mid\cdot\rangle}$ truly depends on $\inner{\cdot}{\cdot}$:
different choices of the bilinear form lead in general to non-equivalent \qPM{}s.
However, for any bilinear forms $\inner{\cdot}{\cdot}$ and $\langle\!\inner{\cdot}{\cdot}\!\rangle$ on~$E$
with orthogonal spaces $V^\perp$ and $V^{\pperp}$, respectively, we obviously have
$X^\perp=Y^{\pperp}$ for $Y=(X^\perp)^{\pperp}$ and thus $\cM|_{X^\perp}=\cM|_{Y^{\pperp}}$.

We now turn to contractions.

\begin{theo}\label{T-Restrict}
Let $\cM=(E,\rho)$ be a \qPM{} and $X\in\cV(E)$. Let $\pi:E\longrightarrow E/X$ be the canonical projection.
We define the map
\[
    \rho_{E/X}:\cV(E/X)\longrightarrow\Q_{\geq0},\quad V\longmapsto \rho(\pi^{-1}(V))-\rho(X).
\]
 Then $\cM/X:=(E/X,\rho_{E/X})$ is a \qPM{}, called the \emph{contraction} of~$X$ from~$\cM$.
\end{theo}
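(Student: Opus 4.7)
The plan is to verify the three axioms (R1)--(R3) from \cref{D-PMatroid} for the map $\rho_{E/X}$ on $\cV(E/X)$. The workhorse facts are that the canonical projection $\pi:E\to E/X$ satisfies
\[
   \pi^{-1}(V+V')=\pi^{-1}(V)+\pi^{-1}(V')\ \text{ and }\ \pi^{-1}(V\cap V')=\pi^{-1}(V)\cap\pi^{-1}(V'),
\]
and that every preimage contains $X$, so that $\dim\pi^{-1}(V)=\dim V+\dim X$. Rationality and the fact that $\rho_{E/X}$ takes values in~$\Q$ are immediate from $\rho$ doing so, so the only substantive content is the three axioms.

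For (R1), the lower bound $\rho_{E/X}(V)\ge 0$ follows from $X\le\pi^{-1}(V)$ combined with monotonicity~(R2) for~$\rho$. The upper bound $\rho_{E/X}(V)\le\dim V$ is the only point that requires a tiny idea: I would pick a subspace $Y\le\pi^{-1}(V)$ complementary to~$X$, so that $X+Y=\pi^{-1}(V)$, $X\cap Y=0$, and $\dim Y=\dim V$. Submodularity (R3) of~$\rho$ then gives
\[
   \rho(\pi^{-1}(V))+\rho(0)\le\rho(X)+\rho(Y).
\]
Since $\rho(0)=0$ by (R1) applied to~$\rho$, and $\rho(Y)\le\dim Y=\dim V$, rearranging yields $\rho(\pi^{-1}(V))-\rho(X)\le\dim V$, as required.

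For (R2), the inclusion $V\le V'$ implies $\pi^{-1}(V)\le\pi^{-1}(V')$, and monotonicity of $\rho$ yields $\rho_{E/X}(V)\le\rho_{E/X}(V')$ immediately. For (R3), I apply the two preimage identities above to $V+V'$ and $V\cap V'$, invoke submodularity of $\rho$ on $\pi^{-1}(V)$ and $\pi^{-1}(V')$, and subtract $2\rho(X)$ from both sides to conclude
\[
   \rho_{E/X}(V+V')+\rho_{E/X}(V\cap V')\le\rho_{E/X}(V)+\rho_{E/X}(V').
\]
No single step is difficult; the only place where one has to be careful is the upper bound in~(R1), where the choice of a linear complement to~$X$ together with submodularity of~$\rho$ is what makes things work.
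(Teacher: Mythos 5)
Your proposal is correct and follows essentially the same route as the paper's proof: monotonicity for the lower bound in (R1), a complement $Y$ of $X$ inside $\pi^{-1}(V)$ together with submodularity (and $\rho(0)=0$) for the upper bound, and the compatibility of $\pi^{-1}$ with sums and intersections for (R2) and (R3). No gaps.
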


\begin{proof}
We have to show that $\rho_{E/X}$ satisfies (R1)--(R3).
\\
(R1) Let $V\in\cV(E/X)$. Then $X\leq \pi^{-1}(V)$ and thus the monotonicity of~$\rho$ implies $\rho_{E/X}(V)\geq0$.
Next, $\pi^{-1}(V)=X\oplus A$ for some $A\in\cV(E)$. Then $\dim A=\dim V$ and
submodularity of~$\rho$ yields $\rho_{E/X}(V)=\rho(\pi^{-1}(V))-\rho(X)\leq\rho(A)\leq\dim A=\dim V$.
\\
(R2) $V_1\leq V_2\leq E/X$ implies $\pi^{-1}(V_1)\leq\pi^{-1}(V_2)$ and therefore $\rho_{E/X}(V_1)\leq \rho_{E/X}(V_2)$.
\\
(R3) Let $V,\,W\leq E/X$.
Then $\pi^{-1}(V+W)=\pi^{-1}(V)+\pi^{-1}(W)$ and $\pi^{-1}(V\cap W)=\pi^{-1}(V)\cap\pi^{-1}(W)$ and
therefore
\begin{align*}
  \rho_{E/X}(V\!+\!W)\!+\!\rho_{E/X}(V\cap W)&=\rho\big(\pi^{-1}(V)+\pi^{-1}(W)\big)-\rho(X)+\rho\big(\pi^{-1}(V)\!\cap\!\pi^{-1}(W)\big)-\rho(X)\\
      &\leq \rho(\pi^{-1}(V))+\rho(\pi^{-1}(W))-2\rho(X)= \rho_{E/X}(V)+ \rho_{E/X}(W).
      \qedhere
\end{align*}
\end{proof}

Deletion and contraction are mutually dual in the following sense.
Since duality is involved, we need to pay special attention to the choice of the non-degenerate symmetric bilinear form.
For $\dim X=1$, the following result also appears in~\cite[Thm.~60]{JuPe18}.
However, the proof given there does not apply if $X\leq X^\perp$ and yields a weaker form of equivalence between the matroids.
Also in \cite[Lem.~12]{BCIJ21} the result below is proven for  a weaker form of equivalence.
Recall from \cref{D-EquivMatroid} that in this paper equivalence of \qPM{}s is based on linear isomorphisms between the ground spaces.

\begin{theo}\label{T-DeletContra}
Let $\cM=(E,\rho)$ be a \qPM{} and $X\in\cV(E)$. Then
\[
      (\cM\setminus X)^*\approx\cM^*/X\ \text{ and }\ \cM\setminus X\approx(\cM^*/X)^*.
\]
\end{theo}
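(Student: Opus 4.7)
The plan is to prove the first equivalence $(\cM \setminus X)^* \approx \cM^*/X$ directly, and then to deduce the second from it by dualizing: using the identity $\cM^{**}=\cM$ from \cref{T-DualqPM} together with \cref{P-EquivDual}, one obtains $\cM \setminus X = (\cM \setminus X)^{**} \approx (\cM^*/X)^*$.

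Since, by \cref{T-DualqPM}, the equivalence class of a dual \qPM{} does not depend on the choice of non-degenerate symmetric bilinear form, I may choose convenient forms on the various ambient spaces. I would fix a basis $x_1, \ldots, x_n$ of $E$ with $x_1, \ldots, x_k$ a basis of~$X$, and endow~$E$ with the non-degenerate symmetric bilinear form $\inner{\cdot}{\cdot}$ whose Gram matrix in this basis is~$I_n$. Then $X^\perp = \subspace{x_{k+1}, \ldots, x_n}$, so $X \cap X^\perp = 0$, and the restriction of $\inner{\cdot}{\cdot}$ to $X^\perp$ is again non-degenerate and symmetric. I will use this restriction to define $(\cM \setminus X)^*$.

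With these choices, the canonical projection $\pi : E \to E/X$ restricts to an $\F$-iso\-mor\-phism $\alpha := \pi|_{X^\perp} : X^\perp \to E/X$ (surjectivity follows from $X + X^\perp = E$, injectivity from $X \cap X^\perp = 0$). For $V \in \cV(X^\perp)$, unraveling the definitions gives that the orthogonal $V^{\perp'}$ of $V$ inside $X^\perp$ w.r.t.\ the restricted form equals $V^\perp \cap X^\perp$, and $\pi^{-1}(\alpha(V)) = V + X$, so $\pi^{-1}(\alpha(V))^\perp = (V+X)^\perp = V^\perp \cap X^\perp$ as well. Plugging these into the formulas for $(\rho|_{X^\perp})^*$ and $\rho^*_{E/X}$ and noting that $\dim \alpha(V) = \dim V$ yields
\[
  (\rho|_{X^\perp})^*(V) \;=\; \dim V + \rho(V^\perp \cap X^\perp) - \rho(X^\perp) \;=\; \rho^*_{E/X}(\alpha(V))
\]
for all $V \in \cV(X^\perp)$, which establishes the desired equivalence via~$\alpha$.

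The main point requiring care is the choice of bilinear form. In general, the restriction of a non-degenerate form on $E$ to a subspace can be degenerate (for instance, when $X \subseteq X^\perp$, which can occur in any characteristic), and the duality would not drop out cleanly from a naive choice. The basis-adapted form above arranges $X \cap X^\perp = 0$ and makes the restriction to $X^\perp$ non-degenerate; thanks to the invariance statement in \cref{T-DualqPM}, specializing to this form entails no loss of generality.
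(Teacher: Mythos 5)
Your proposal is correct and follows essentially the same route as the paper: prove the first equivalence and deduce the second by biduality, exploit the form-independence of the dual from \cref{T-DualqPM} to pick a bilinear form for which $X\oplus X^\perp=E$ and the restriction to $X^\perp$ is non-degenerate, identify $E/X$ with $X^\perp$, and compute both rank functions down to $\dim V+\rho(V^\perp\cap X^\perp)-\rho(X^\perp)$. The paper realizes the adapted form as an orthogonal direct sum of arbitrary non-degenerate symmetric forms on $X$ and a complement $Y$, of which your identity-Gram-matrix choice is a special case; otherwise the arguments coincide.
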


\begin{proof}
We will show the first equivalence. The second one follows from from biduality; see \cref{T-DualqPM} and
\cref{P-EquivDual}.
We need some preparation.
Recall from \cref{T-DualqPM} that the dual of a \qPM{} depends on the choice of the non-degenerate symmetric bilinear
form (NSBF) and that different choices lead to equivalent dual \qPM{}s.
Hence for the first equivalence we need NSBFs on~$E$ and on~$X^\perp$, the latter being the ground space of
$\cM\setminus X$.
Note that if $\inner{\cdot}{\cdot}$ is an NSBF on~$E$,
then the restriction of $\inner{\cdot}{\cdot}$ to the resulting orthogonal~$X^\perp$ is in general degenerate.
For this reason we proceed as follows.
Choose a subspace $Y\in\cV(E)$ such that $X\oplus Y=E$.
Choose NSBFs $\inner{\cdot}{\cdot}_X$ on~$X$ and $\inner{\cdot}{\cdot}_Y$ on~$Y$ and define
\[
   \inner{x_1+y_1}{x_2+y_2}:=\inner{x_1}{x_2}_X+\inner{y_1}{y_2}_Y \ \text{ for all }\ x_1,x_2\in X,\,y_1,y_2\in Y.
\]
It is easy to verify that $\inner{\cdot}{\cdot}$ is a NSBF on~$E$.
Denoting the resulting orthogonal of a subspace $V\in\cV(E)$ by~$V^\perp$, we observe $X^\perp=Y$ and thus
$X\oplus X^\perp=E$.
Furthermore, for any subspace $Z\leq X^\perp$ we have
\begin{equation}\label{e-ZperpY}
    Z^{\pperp}=Z^\perp\cap X^\perp,
\end{equation}
where $Z^{\pperp}$ denotes the orthogonal of~$Z$ in~$X^\perp$ w.r.t.\ $\inner{\cdot}{\cdot}_Y$.
Now we have compatible NSBFs on~$E$ and~$X^\perp$ and can turn to the stated equivalence
$(\cM\setminus X)^*\approx\cM^*/X$.

Note that we have a well-defined isomorphism
\[
   \xi: E/X\longrightarrow X^\perp,\quad v+X\longmapsto \hat{x},
\]
where $v=x+\hat{x}$ is the unique decomposition of~$v$ into $x\in X$ and $\hat{x}\in X^{\perp}$.
We show that
\begin{equation}\label{e-rho*rho}
      (\rho^*)_{E/X}(V)= (\rho|_{X^\perp})^*(\xi(V))\ \text{ for all }\ V\in\cV(E/X).
\end{equation}
Let $\pi:E\longrightarrow E/X$ be the canonical projection and $V\in\cV(E/X)$.
Then  $\pi^{-1}(V)=\xi(V)\oplus X$ and thus $\pi^{-1}(V)^\perp=\xi(V)^\perp\cap X^\perp$.
Now we compute
\begin{align*}
      (\rho^*)_{E/X}(V)&=\rho^*(\pi^{-1}(V))-\rho^*(X)\nonumber \\
         &=\dim \pi^{-1}(V)+\rho(\pi^{-1}(V)^\perp)-\rho(E)-\big(\dim X+\rho(X^\perp)-\rho(E)\big)\nonumber \\
         &=\dim \pi^{-1}(V)-\dim X+\rho(\pi^{-1}(V)^\perp)-\rho(X^\perp)\nonumber \\
         &=\dim\xi(V)+\rho(\xi(V)^\perp\cap X^\perp)-\rho(X^\perp)\\
         &=\dim\xi(V)+\rho|_{X^\perp}(\xi(V)^\perp\cap X^\perp)-\rho|_{X^\perp}(X^\perp)\\
         &=\dim\xi(V)+\rho|_{X^\perp}(\xi(V)^{\pperp})-\rho|_{X^\perp}(X^\perp)\\
         &=(\rho|_{X^\perp})^*(\xi(V)),
\end{align*}
where the penultimate step follows from~\eqref{e-ZperpY} and the last step is the very definition of $(\rho|_{X^\perp})^*(\xi(V))$
for the chosen NSBF on~$X^\perp$.
\end{proof}

The rest of this section is devoted to the relation between deletion and contraction of \qPM{}s induced by rank-metric codes and puncturing and shortening of the codes.
We focus on row puncturing and shortening, which will correspond to deletion and contraction of the associated column \qPM{}.
The following terminology is from \cite[Sec.~3]{ByRa17}.
For the rest of this section,~$X^\perp$ denotes the orthogonal of~$X\leq\F^n$ w.r.t.\ the standard dot product.

\begin{defi}\label{D-Shorten}
Let $u\in[n]$ and $\pi_u:\Fnm\longrightarrow \F^{(n-u)\times m}$
be the projection onto the last $n-u$ rows.
Let  $\cC\leq\F^{n\times m}$ be a rank-metric code.
We define
\[
   \cC_u=\{M\in\cC\mid M_{ij}=0\text{ for }i\leq u\},
\]
that is,~$\cC_u$ is the subcode consisting of all matrices of~$\cC$ whose first~$u$ rows are zero.
Let $A\in\GL_n(\F)$.
\begin{alphalist}
\item The \emph{puncturing} of~$\cC$ w.r.t.~$A$ and~$u$ is defined as the code
         \[
                \Pi(\cC,A,u)=\pi_u(A\cC)\leq\F^{(n-u)\times m}.
         \]
\item The \emph{shortening} of~$\cC$ w.r.t.~$A$ and~$u$ is defined as the code
        \[
           \Sigma(\cC,A,u)=\pi_u((A\cC)_u)\leq\F^{(n-u)\times m}.
        \]
\end{alphalist}
\end{defi}

In \cref{D-RMCBasics}(d) we called the space $\cC(V,\cc)$ a shortening of~$\cC$.
This is consistent because $\cC(V,\cc)$ is isomorphic (even isometric w.r.t.\ the rank metric) to $\Sigma(\cC,A,u)$, where
$u=\dim V$ and $A=(A_1\mid A_2)\T\in\GL_n(\F)$ such that $\colsp(A_1)=V^\perp$.

Now we are ready to relate deletion and contraction of \qPM{}s induced by rank-metric codes to puncturing and shortening of the codes, respectively.
The following result is a direct $q$-analogue of the well-known relation between matroids and linear block codes; see
\cite[Sec.~1.6.4]{JuPe13}.

\begin{theo}\label{T-DelContrCode}
Let  $\cC\leq\F^{n\times m}$ be a rank-metric code and $\cM=\cM_\cc(\cC)$ be the associated column \qPM{}.
Let $X\in\cV(\F^n)$ and set $\dim X=u$.
Choose any matrix $A\in\GL_n(\F)$ of the form $A=\left(\begin{smallmatrix}B\\[.3ex]D\end{smallmatrix}\right)$, where $D\in\F^{(n-u)\times n}$ is such that
$\rowsp(D)=X^\perp$.
Then
  \[
             \cM\setminus X\approx\cM_\cc(\Pi(\cC,A,u))\ \text{ and }\
             \cM/X\approx \cM_\cc(\Sigma(\cC,(A^{-1})\T,u)).
  \]
\end{theo}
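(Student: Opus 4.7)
My plan is to prove the deletion equivalence directly by explicit computation and then derive the contraction equivalence from it by dualising both sides, invoking \cref{T-DeletContra}, \cref{P-EquivDual}, and \cref{T-TraceDual}; along the way I will have to show that puncturings and shortenings of rank-metric codes are mutually dual.

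For the deletion statement, the decomposition $A=\bigl(\begin{smallmatrix}B\\ D\end{smallmatrix}\bigr)$ gives $\pi_u(AM)=DM$, hence $\cC':=\Pi(\cC,A,u)=D\cC\leq\F^{(n-u)\times m}$. Since $\rowsp(D)=X^\perp$ and $\rk D=n-u$, the map $D^\T:\F^{n-u}\longrightarrow X^\perp$ is an isomorphism; set $\alpha:=(D^\T)^{-1}$, so that for $V\leq X^\perp$ and $V':=\alpha(V)$ one has $V=D^\T V'$ in~$\F^n$. A direct check gives that, for every $M\in\cC$, $DM\in\cC'((V')^\perp,\cc)$ iff $M\in\cC(V^\perp,\cc)$, whence $\cC'((V')^\perp,\cc)=D\cdot\cC(V^\perp,\cc)$. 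Taking $V=0$ and using $\ker D=X$ yields $\dim\cC'=\dim\cC-\dim\cC(X,\cc)$; in general, since $V\leq X^\perp$ forces $V^\perp\cap X=X$, the restriction of $D$ to $\cC(V^\perp,\cc)$ has kernel $\cC(X,\cc)$, and therefore $\dim\cC'((V')^\perp,\cc)=\dim\cC(V^\perp,\cc)-\dim\cC(X,\cc)$. Subtracting and dividing by~$m$ shows that for every $V\leq X^\perp$ the rank function of $\cM_\cc(\cC')$ evaluated at $\alpha(V)$ equals $\rho_\cc(V)=(\rho|_{X^\perp})(V)$, which is the required equivalence $\cM\setminus X\approx\cM_\cc(\cC')$.

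For the contraction statement, set $\tilde A=(A^{-1})^\T$ and write $A^{-1}=(P\mid Q)$ in column blocks; the identity $AA^{-1}=I_n$ yields $BP=I_u$, $BQ=0$, $DP=0$, $DQ=I_{n-u}$, and in particular $\colsp P=\ker D=X$. The top $u$ rows of $\tilde A M$ equal $P^\T M$, which vanishes iff $\colsp M\leq(\colsp P)^\perp=X^\perp$, so $\Sigma(\cC,\tilde A,u)=Q^\T\cdot\cC(X^\perp,\cc)$. On the other hand, $N\in\Pi(\cC^\perp,A,u)^\perp$ iff $\tr((D^\T N)M^\T)=0$ for all $M\in\cC^\perp$, i.e.\ $D^\T N\in\cC$, and since $\colsp(D^\T N)\leq X^\perp$ automatically, this amounts to $D^\T N\in\cC(X^\perp,\cc)$. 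The mutually inverse maps $N\mapsto D^\T N$ and $M\mapsto Q^\T M$ (using $DQ=I_{n-u}$ to check $Q^\T D^\T N=N$) identify the two sets, yielding $\Pi(\cC^\perp,A,u)^\perp=\Sigma(\cC,\tilde A,u)$. Applying the first equivalence to~$\cC^\perp$ and invoking \cref{T-DeletContra}, \cref{P-EquivDual}, and \cref{T-TraceDual} then gives
$$
   \cM/X\approx(\cM^*\setminus X)^*=(\cM_\cc(\cC^\perp)\setminus X)^*\approx\cM_\cc(\Pi(\cC^\perp,A,u))^*=\cM_\cc(\Pi(\cC^\perp,A,u)^\perp)=\cM_\cc(\Sigma(\cC,\tilde A,u)).
$$

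The delicate point is the choice of bilinear form for duality: \cref{T-DeletContra} uses a specific NSBF adapted to~$X$, whereas code-theoretic duality is tied to the standard dot product on~$\F^n$. The invariance clause of \cref{T-DualqPM} together with \cref{P-EquivDual} absorbs this mismatch into the ``$\approx$'' symbol, which is exactly what the theorem asserts.
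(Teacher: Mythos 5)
Your proposal is correct and follows essentially the same route as the paper: a direct dimension count via the surjection $M\mapsto DM$ (with kernel $\cC(X,\cc)$) for the deletion statement, and then dualization via \cref{T-DeletContra}, \cref{P-EquivDual}, and \cref{T-TraceDual} for the contraction statement. The only difference is that you prove the identity $\Pi(\cC^\perp,A,u)^\perp=\Sigma(\cC,(A^{-1})\T,u)$ from scratch (correctly, via the block relations $PB+QD=I_n$, $DQ=I_{n-u}$), whereas the paper simply cites it from Byrne--Ravagnani \cite[Thm.~3.5]{ByRa17}.
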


\begin{proof}
We start with the first equivalence.
Let $\rho$ be the rank function of~$\cM_\cc(\cC)$ and $\rho'$ be the rank function of $\cM_\cc(\cC')$, where $\cC'=\Pi(\cC,A,u)$.
Thus
\[
    \rho':\cV(\F^{n-u})\longrightarrow\Q_{\geq0},\ V\longmapsto \frac{\dim\cC'-\dim\cC'(V^\perp,\rr)}{m}.
\]
Consider the isomorphism
\[
    \psi:\F^{n-u}\longrightarrow X^\perp,\ v\longmapsto vD.
\]
Then $\psi(V)\leq X^\perp$ for all $V\in\cV(\F^{n-u})$ and therefore $\rho_{\cM\setminus X}(\psi(V))=\rho(\psi(V))$ by the very definition of deletion.
Now the desired equivalence $\cM\setminus X\approx\cM_\cc(\Pi(\cC,A,u))$ follows if we can show that
\begin{equation}\label{e-rhorho'}
    \rho'(V)=\rho(\psi(V))\text{ for all }V\in\cV(\F^{n-u}).
\end{equation}
In order to do so, let $V\in\cV(\F^{n-u})$ and let $Z\in\F^{t\times(n-u)}$ be such that $V=\rowsp(Z)$.
From the very definition of the matrices involved we obtain
\begin{align*}
     \cC(X,\cc)&=\{M\in\cC\mid \colsp(M)\leq X\}=\{M\in\cC\mid DM=0\},\\[.5ex]
     \cC(\psi(V)^\perp,\cc)&=\{M\in\cC\mid \colsp(M)\leq\psi(V)^\perp\}=\{M\in\cC\mid ZDM=0\},\\[.5ex]
     \cC'(V^\perp,\cc)&=\{N\in\cC'\mid \colsp(N)\leq V^\perp\}=\{N\in\cC'\mid ZN=0\}.
\end{align*}
Furthermore, we have the surjective linear map
\[
   \xi_D:\cC\longrightarrow \cC',\ M\longmapsto DM
\]
and observe that $\cC(X,\cc)=\ker\xi_D\leq \cC(\psi(V)^\perp,\cc)$ as well as
$\xi_D\big(\cC(\psi(V)^\perp,\cc)\big)=\cC'(V^\perp,\cc)$. Hence we conclude
\[
   \dim\cC-\dim\cC'=\dim\cC(X,\cc)=
   \dim\cC(\psi(V)^\perp,\cc)-\dim\cC'(V^\perp,\cc).
\]
As a consequence,
\[
    \frac{\dim\cC'-\dim\cC'(V^\perp,\cc)}{m}=\frac{\dim\cC-\dim\cC(\psi(V)^\perp,\cc)}{m},
\]
and this establishes~\eqref{e-rhorho'}.
\\
We now turn to the second equivalence.
Applying the first equivalence to the code~$\cC^\perp$ we obtain
$\cM_\cc(\cC^\perp)\setminus X\approx\cM_\cc(\Pi(\cC^\perp,A,u))$.
With the aid of Theorems~\ref{T-DeletContra} and \ref{T-TraceDual} this leads to
\[
   \cM_\cc(\cC)/X\approx (\cM_\cc(\cC^\perp)\setminus X)^*\approx\cM_\cc(\Pi(\cC^\perp,A,u))^*=\cM_\cc(\Pi(\cC^\perp,A,u)^\perp).
\]
Now the desired equivalence follows from $\Pi(\cC^\perp,A,u)^\perp=\Sigma(\cC,(A^{-1})\T,u)$, which has been proven in
\cite[Thm.~3.5]{ByRa17} (and is true for $n\leq m$ and $n>m$).
\end{proof}

\section{Flats of a $q$-Polymatroid}\label{S-Flats}

In this section we return to general \qPM{}s and introduce flats.
As in (classical) matroid theory, a flat is, by definition, an inclusion-maximal space for a given rank.
Flats naturally come with a closure operator.
We derive some basis properties of the closure operator and flats and will
illustrate that -- just like for classical polymatroids -- the lattice of flats is not semimodular and (without the associated rank values) does not fully determine the \qPM.
In fact, a \qPM{} may have the same flats as a $q$-matroid without being a $q$-matroid itself.
On the positive side, we show that the generalized weights of a rank-metric code are fully determined by the flats of the
associated \qPM{}, and the rank-distance is determined by the hyperplanes.

Throughout let $\cM=(E,\rho)$ be a \qPM.

\begin{defi}\label{D-Flat}
A space $F\in\cV(E)$ is called a \emph{flat} of~$\cM$ if
\[
    \rho(F+\subspace{x})>\rho(F)\text{ for all  }x\in E\setminus F.
\]
We denote the collection of flats by $\cF(\cM)$, or simply~$\cF$.
A flat~$H$ is called a \emph{hyperplane} if there is no flat strictly between~$H$ and~$E$.
Furthermore, we define the \emph{closure operator} of~$\cM$ as
\[
   \cl:\cV(E)\longrightarrow\cV(E),\quad V\longmapsto
   \sum_{   \genfrac{}{}{0pt}{1}{\dim X=1}{\rho(V+X)=\rho(V)}     }\hspace*{-.7em}X.
\]

\end{defi}

Clearly, $E\in\cF(\cM)$, and $0\in\cF(\cM)$ if and only if  $\cM$ has no loops (by definition, a loop is a $1$-space of rank zero).
With the aid of \cref{P-RankVx}(a) we obtain immediately
\begin{equation}\label{e-ClosureRho}
   \rho(V)=\rho(\cl(V)) \text{ for all }V\in\cV(E).
\end{equation}

Flats can be regarded as `rank-closed' subspaces.

\begin{prop}\label{P-FlatClosureA}
A subspace $F\in\cV(E)$ is a flat of~$\cM$ if and only if $F=\cl(F)$.
\end{prop}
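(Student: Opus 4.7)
The plan is to prove both implications by directly unpacking the definitions of flat and closure, using only monotonicity (R2). I would first record the easy observation that $V \leq \cl(V)$ for every $V \in \cV(E)$: any $1$-space $X \leq V$ trivially satisfies $V + X = V$ and hence $\rho(V+X) = \rho(V)$, so $X$ is one of the summands in the definition of $\cl(V)$. Thus proving $F = \cl(F)$ always reduces to verifying the reverse inclusion $\cl(F) \leq F$.

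For the direction ($\Rightarrow$), assume $F$ is a flat. To show $\cl(F) \leq F$, I would take an arbitrary $1$-space $X$ in the sum defining $\cl(F)$, i.e. with $\rho(F+X) = \rho(F)$, and argue $X \leq F$. If not, then $X \cap F = 0$, so any nonzero $x \in X$ lies in $E \setminus F$; the flat property then forces $\rho(F + \langle x \rangle) > \rho(F)$, but $F + \langle x \rangle = F + X$, contradicting $\rho(F+X) = \rho(F)$.

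For the direction ($\Leftarrow$), assume $F = \cl(F)$ and pick any $x \in E \setminus F$. Monotonicity gives $\rho(F + \langle x \rangle) \geq \rho(F)$. If equality held, then the $1$-space $\langle x \rangle$ (which is nonzero since $x \notin F$ implies $x \neq 0$) would appear as a summand in the definition of $\cl(F)$, yielding $\langle x \rangle \leq \cl(F) = F$ and contradicting $x \notin F$. Therefore $\rho(F + \langle x \rangle) > \rho(F)$, so $F$ is a flat.

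There is no real obstacle here; the proposition amounts to rewriting the flat condition in closure form. The only subtlety worth flagging in the write-up is the implicit use of (R2) in the ($\Leftarrow$) direction to convert the inequality $\rho(F+\langle x\rangle) \neq \rho(F)$ into the strict inequality $>$.
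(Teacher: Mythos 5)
Your proof is correct and takes essentially the same definition-unpacking approach as the paper. The only (harmless) difference is in the direction ``flat $\Rightarrow F=\cl(F)$'': the paper takes an arbitrary $x\in\cl(F)$ and uses the identity $\rho(F)=\rho(\cl(F))$ from \eqref{e-ClosureRho} to conclude $\rho(F+\langle x\rangle)=\rho(F)$, whereas you argue directly on the generating $1$-spaces in the sum defining $\cl(F)$, which makes that step self-contained without appealing to \cref{P-RankVx}.
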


\begin{proof}
If~$F=\cl(F)$ and $x\not\in F$, then $\rho(F+\subspace{x})>\rho(F)$. Hence $F$ is a flat.
Conversely, let~$F$ be a flat and let $x\in\cl(F)$.
Then $\rho(F+\subspace{x})=\rho(F)$, and thus~$x\in F$. Hence $F=\cl(F)$.
\end{proof}

The closure operator satisfies the following properties.

\begin{theo}\label{T-Closure}
Let $V,W\in\cV(E)$. Then
\begin{mylist3}
\item[(CL1)\hfill] $V\leq\cl(V)$.
\item[(CL2)\hfill] If $V\leq W$, then $\cl(V)\leq \cl(W)$.
\item[(CL3)\hfill] $\cl(V)=\cl(\cl(V))$.
\end{mylist3}
\end{theo}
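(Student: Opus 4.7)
The plan is to exploit Proposition~\ref{P-RankVx}(a) together with the already-stated identity~\eqref{e-ClosureRho}, namely $\rho(\cl(V))=\rho(V)$, and deduce the three properties with elementary monotonicity/submodularity arguments. The one subtlety to watch throughout is that $\cl(V)$ is defined as the sum of those \emph{specific} $1$-spaces $X$ with $\rho(V+X)=\rho(V)$, so whenever I need to compare closures I must pass back to arbitrary $1$-spaces of $\cl(V)$. Equation~\eqref{e-ClosureRho} is precisely what makes this bridge: once it is in hand, any $1$-space $Y\leq\cl(V)$ automatically satisfies $\rho(V+Y)=\rho(V)$ by the monotonicity sandwich $\rho(V)\leq\rho(V+Y)\leq\rho(\cl(V))=\rho(V)$.

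For (CL1), I would simply note that any $1$-space $\langle v\rangle$ with $v\in V$ satisfies $V+\langle v\rangle=V$, hence $\rho(V+\langle v\rangle)=\rho(V)$. Such $\langle v\rangle$ therefore appears in the defining sum of $\cl(V)$, which gives $V\leq\cl(V)$.

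For (CL2), assume $V\leq W$ and take any $1$-space $Y\leq\cl(V)$. By the observation above, $\rho(V+Y)=\rho(V)$. I would then apply submodularity to the pair $V+Y$ and $W$, using $V\leq W$ to compute $(V+Y)+W=W+Y$ and $(V+Y)\cap W\geq V$:
\[
   \rho(W+Y)+\rho(V)\;\leq\;\rho((V+Y)+W)+\rho((V+Y)\cap W)\;\leq\;\rho(V+Y)+\rho(W)\;=\;\rho(V)+\rho(W).
\]
Canceling $\rho(V)$ and combining with monotonicity gives $\rho(W+Y)=\rho(W)$, so $Y\leq\cl(W)$. Since $\cl(V)$ is spanned by its $1$-spaces, this yields $\cl(V)\leq\cl(W)$.

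For (CL3), one direction $\cl(V)\leq\cl(\cl(V))$ is just (CL1) applied to $\cl(V)$. For the reverse inclusion, I would take any $1$-space $X\leq\cl(\cl(V))$ and apply~\eqref{e-ClosureRho} to $\cl(V)$ to get $\rho(\cl(V)+X)=\rho(\cl(V))=\rho(V)$. Since $V+X\leq\cl(V)+X$, monotonicity forces $\rho(V+X)=\rho(V)$, so $X\leq\cl(V)$, and hence $\cl(\cl(V))\leq\cl(V)$. I do not foresee a hard step; the only point requiring care is the one already highlighted, namely reducing statements about all of $\cl(V)$ to statements about its $1$-spaces, and this is handled uniformly via~\eqref{e-ClosureRho}.
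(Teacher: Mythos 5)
Your proof is correct and follows essentially the same route as the paper: (CL1) from the defining sum, (CL2) by submodularity applied to a pair with sum $W+Y$ and intersection containing $V$ (the paper writes $W=V\oplus U$ and uses the pair $(V\oplus U,\,V\oplus\subspace{x})$, which is the same computation), and (CL3) via the monotonicity chain through $\cl(V)+X$ together with~\eqref{e-ClosureRho}. Your explicit remark that~\eqref{e-ClosureRho} is what lets you pass from the defining $1$-spaces to arbitrary $1$-spaces of $\cl(V)$ is a point the paper leaves implicit.
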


\begin{proof}
(CL1) is obvious.
\\
(CL2) Let $V\leq W$ and let $x\in\cl(V)$. We want to show that $x\in\cl(W)$.
This is clear if $x\in V$, and thus we assume that $x\not\in V$.
Choose $U\in\cV(E)$ such that $W=V\oplus U$.
Then
\begin{align*}
   \rho(W+\subspace{x})&=\rho((V\oplus U)+(V\oplus\subspace{x}))\\
      &\leq\rho(V\oplus U)+\rho(V\oplus\subspace{x})-\rho((V\oplus U)\cap(V\oplus\subspace{x}))\\
      &=\rho(W)+\rho(V)-\rho((V\oplus U)\cap(V\oplus\subspace{x}))\\
      &= \rho(W)+\rho(V)-\rho(V)=\rho(W),
\end{align*}
where the penultimate step follows from $V\leq (V\oplus U)\cap(V\oplus\subspace{x})\leq V\oplus\subspace{x}$ together with $\rho(V\oplus\subspace{x})=\rho(V)$.
All of this shows that $\rho(W+\subspace{x})=\rho(W)$ and therefore $x\in\cl(W)$.
\\
(CL3) Let $V\in\cV(E)$. Then $\cl(V)\leq\cl(\cl(V))$ follows from (CL1) since $\cl(V)$ is a subspace.
For the converse let $x\in\cl(\cl(V))$.
With the aid of (CL1) and \eqref{e-ClosureRho} we obtain
\[
  \rho(V)\leq\rho(V+\subspace{x})\leq \rho(\cl(V)+\subspace{x})=\rho(\cl(V))=\rho(V).
\]
Thus we have equality across, and this shows that $x\in\cl(V)$.
\end{proof}

Flats satisfy the following simple properties.

\begin{theo}\label{T-FlatProperties}
Let $\cF$ be the collection of flats of~$\cM$. Then
\begin{mylist2}
\item[(F1)\hfill]  $E\in\cF$.
\item[(F2)\hfill] If $F_1,F_2\in\cF$, then $F_1\cap F_2\in\cF$.
\end{mylist2}
\end{theo}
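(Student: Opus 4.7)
The plan is to deduce both statements from the closure characterization of flats, \cref{P-FlatClosureA}, together with the closure properties (CL1)--(CL3) just proved in \cref{T-Closure}.

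For (F1), I would simply note that the defining condition for $E$ to be a flat is vacuously satisfied since $E\setminus E=\emptyset$. Equivalently, (CL1) gives $E\leq\cl(E)$, while $\cl(E)\leq E$ is automatic, so $\cl(E)=E$ and $E\in\cF$ by \cref{P-FlatClosureA}.

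For (F2), the strategy is to show $\cl(F_1\cap F_2)=F_1\cap F_2$ and invoke \cref{P-FlatClosureA} once more. The inclusion $F_1\cap F_2\leq\cl(F_1\cap F_2)$ is (CL1). For the reverse inclusion, I apply (CL2) to the inclusions $F_1\cap F_2\leq F_i$ for $i=1,2$, obtaining $\cl(F_1\cap F_2)\leq\cl(F_i)$. Since $F_1,F_2\in\cF$, \cref{P-FlatClosureA} gives $\cl(F_i)=F_i$, so $\cl(F_1\cap F_2)\leq F_1\cap F_2$. Combining the two inclusions yields equality, hence $F_1\cap F_2$ is a flat.

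There is no real obstacle here: once the closure operator has been set up and shown to satisfy (CL1)--(CL3), the flat axioms (F1) and (F2) follow formally, exactly as in the classical matroid setting. The only thing to be slightly careful about is that (F2) is stated for the intersection of just two flats, but the argument clearly extends to any (even infinite) intersection, since (CL2) applies to each term simultaneously.
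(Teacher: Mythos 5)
Your proposal is correct and matches the paper's own argument: (F1) follows directly from the definition (or equivalently from $\cl(E)=E$), and (F2) is obtained by applying (CL2) to $F_1\cap F_2\leq F_i$ to get $\cl(F_1\cap F_2)\leq \cl(F_1)\cap\cl(F_2)=F_1\cap F_2$ and then invoking \cref{P-FlatClosureA}. No differences worth noting.
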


\begin{proof}
(F1) is clear from \cref{D-Flat}.
For (F2) let $F_1,F_2\in\cF$.
Then (CL2) yields
$\cl(F_1\cap F_2)\subseteq\cl(F_1)\cap\cl(F_2)=F_1\cap F_2$.
Thus $F_1\cap F_2\in\cF$ thanks to \cref{P-FlatClosureA}.
\end{proof}

From (CL2), \cref{P-FlatClosureA}, and (F2) we immediately obtain

\begin{prop}\label{P-FlatClosure}
Any $V\in\cV(E)$ satisfies $\cl(V)=\bigcap_{\genfrac{}{}{0pt}{2}{F\in\cF}{V\leq F}}F$.
\end{prop}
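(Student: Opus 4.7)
The plan is to prove the two containments separately, each of which is nearly immediate from the machinery already set up in \cref{T-Closure}, \cref{P-FlatClosureA}, and \cref{T-FlatProperties}.

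First I would observe that $\cl(V)$ is itself a flat containing $V$. The containment $V \leq \cl(V)$ is (CL1), and by (CL3) we have $\cl(\cl(V)) = \cl(V)$, so \cref{P-FlatClosureA} tells us $\cl(V) \in \cF$. Since $\cl(V)$ thus appears among the flats in the intersection on the right-hand side, we immediately get
\[
    \bigcap_{\genfrac{}{}{0pt}{2}{F \in \cF}{V \leq F}} F \leq \cl(V).
\]

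For the reverse containment, I would fix an arbitrary flat $F \in \cF$ with $V \leq F$. By the monotonicity (CL2), $\cl(V) \leq \cl(F)$, and since $F$ is a flat, \cref{P-FlatClosureA} yields $\cl(F) = F$. Hence $\cl(V) \leq F$ for every such $F$, which gives $\cl(V) \leq \bigcap_{F \in \cF,\, V \leq F} F$ and completes the proof.

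There is essentially no obstacle here: both directions use only the properties already proved. The content of the statement is really just that the closure operator admits the familiar description as the intersection of all flats containing the given space, and this is automatic once we know $\cl(V)$ is a flat (via CL3 plus \cref{P-FlatClosureA}) and that the closure is monotone (CL2).
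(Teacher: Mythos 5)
Your proof is correct and matches the paper's (unstated) argument: the paper simply asserts the result follows immediately from (CL2), \cref{P-FlatClosureA}, and (F2), and your two containments supply exactly the missing details. The only cosmetic difference is that you certify $\cl(V)$ is a flat via (CL3) together with \cref{P-FlatClosureA} rather than invoking (F2), which is perfectly valid and arguably cleaner.
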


For $q$-matroids,  the closure operator and the flats satisfy further properties, which we briefly discuss.
Note first that for any \qPM{}~$\cM$ the collection~$\cF$ of flats forms a lattice under inclusion:
the \emph{meet} and \emph{join} of two flats $F_1,\,F_2\in\cF$ are $F_1\wedge F_2:=F_1\cap F_2$ and
$F_1\vee F_2:=\cl(F_1+F_2)$, respectively.
The flat $\cl(0)=\bigcap_{F\in\cF}F$ is the unique minimal element of the lattice~$\cF$.
Furthermore, we say that $F_2$ \emph{covers} $F_1$ if $F_1\lneq F_2$ and there exists no $F\in\cF$ such that
$F_1\lneq F\lneq F_2$.
For instance, the hyperplanes of $\cF $ are the flats covered by~$E$.

The first statement below has been proven in \cite[Thm.~68]{JuPe18} whereas the second one can be found in
\cite[Prop.~26, Thm.~28, Cor.~29, Thm.~31]{BCIJS20} and \cite[Cor.~70]{BCJ22}.
Note that these result generalize the according properties for matroids; see \cite[pp.~26 and 32]{Ox11}.

\begin{theo}
\label{T-qMatroids}
Let $\cM=(E,\rho)$ be a $q$-matroid with closure operator~$\cl$ and collection of flats~$\cF$.
Then:
\begin{mylist3}
\item[(CL4)\hfill] MacLane-Steinitz exchange axiom:
For all $V\in\cV(E)$ and all vectors $x,y\in E\setminus\cl(V)$ we have
       $\subspace{y}\leq\cl(V+\subspace{x})\Longleftrightarrow \subspace{x}\leq\cl(V+\subspace{y})$.
\item[(F3)\hfill] For all $F\in\cF$ and vectors $x\in E\setminus F$ there exists a unique cover $F'\in\cF$ of~$F$ such that $\subspace{x}\leq F'$.
\end{mylist3}
Furthermore,~$\cF$ is a semimodular lattice (that is, for all $F_1,F_2\in\cF$ such that
$F_1$ covers $F_1\wedge F_2$, the flat $F_1\vee F_2$ covers~$F_2$).
As a consequence, all maximal chains between any two fixed elements in~$\cF$ have the same length.
Finally, if~$\cF$ is a collection of subspaces of~$E$ satisfying (F1)--(F3), then~$\cF$ is the collection of flats of the $q$-matroid
$(E,\rho)$ defined by $\rho(V)=\text{h}(\cl(V))$, where the latter is the length of a maximal chain from~$\cl(0)$
to $\cl(V)$.
It follows that all hyperplanes have the same rank.
\end{theo}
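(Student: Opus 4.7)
The crucial ingredient throughout is a unit--jump lemma: if $\rho$ is integer-valued, then (R1)--(R3) force $\rho(V+\subspace{x})-\rho(V)\in\{0,1\}$ for every $1$-space $\subspace{x}$. Indeed, submodularity with $W=\subspace{x}$ gives $\rho(V+\subspace{x})\leq\rho(V)+\rho(\subspace{x})\leq\rho(V)+1$, while monotonicity gives the lower bound. I would prove this once and use it in every clause of the theorem.

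For the MacLane--Steinitz axiom (CL4), let $x,y\in E\setminus\cl(V)$, so the unit-jump lemma yields $\rho(V+\subspace{x})=\rho(V+\subspace{y})=\rho(V)+1$. Assuming $\subspace{y}\leq\cl(V+\subspace{x})$ gives $\rho(V+\subspace{x}+\subspace{y})=\rho(V)+1$ by \eqref{e-ClosureRho}. But $V+\subspace{y}\leq V+\subspace{x}+\subspace{y}$ and adding $\subspace{x}$ can push the rank up by at most $1$, so monotonicity forces equality: $\rho(V+\subspace{y}+\subspace{x})=\rho(V+\subspace{y})$, i.e.\ $\subspace{x}\leq\cl(V+\subspace{y})$, and the implication is symmetric. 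For (F3), set $F':=\cl(F+\subspace{x})$, which is a flat by (CL3) and \cref{P-FlatClosureA}. If $G\in\cF$ satisfies $F\lneq G\leq F'$, the unit-jump lemma forces $\rho(G)=\rho(F)+1=\rho(F')$; picking any $z\in G\setminus F$ and applying (CL4) to $x$ and $z$ gives $\subspace{x}\leq\cl(F+\subspace{z})\leq G$, so $F'\leq G$ and hence $G=F'$. Uniqueness of the cover containing $\subspace{x}$ is then immediate: any such $F''$ contains $F+\subspace{x}$, hence $F'=\cl(F+\subspace{x})\leq F''$, and the covering property forces equality.

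Next, covering in $\cF$ is equivalent to rank jumping by exactly $1$: if $F_1$ covers $F_2$ then by (F3) applied to any $x\in F_1\setminus F_2$ we have $F_1=\cl(F_2+\subspace{x})$, and the unit-jump lemma yields $\rho(F_1)=\rho(F_2)+1$; the converse uses the unit-jump lemma together with monotonicity. For semimodularity, suppose $F_1$ covers $F_1\wedge F_2$. Then $\rho(F_1)=\rho(F_1\wedge F_2)+1$, and submodularity gives
\[
   \rho(F_1\vee F_2)\leq\rho(F_1)+\rho(F_2)-\rho(F_1\wedge F_2)=\rho(F_2)+1,
\]
while $F_2\lneq F_1\vee F_2$ forces $\rho(F_1\vee F_2)=\rho(F_2)+1$, so $F_1\vee F_2$ covers $F_2$. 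Equal length of all maximal chains between two fixed flats then follows from semimodularity by the standard Jordan--Dedekind argument (induction on the length of one chain, using the rank characterization of covers).

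For the converse direction, given $\cF$ satisfying (F1)--(F3), I would define $\cl(V)=\bigcap_{F\in\cF,\,V\leq F}F$, set $\rho(V):=\text{h}(\cl(V))$ where $\text{h}$ is the length of a maximal chain from $\cl(0)$ to $\cl(V)$ in $\cF$ (well defined by chain uniqueness, which itself must be bootstrapped from (F1)--(F3) via semimodularity of the abstract lattice), and verify (R1)--(R3). The main obstacle is submodularity, which I plan to prove by inducting on $\dim W$ and using (F3) to add one generator of $W$ at a time to $V$, each step contributing at most one to the rank on either side. One then checks directly that the flats of the resulting $q$-matroid are exactly the members of~$\cF$. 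The final clause is a corollary: hyperplanes are precisely the flats covered by $E$, so by equal chain lengths from $\cl(0)$ to $E$ each hyperplane has rank $\rho(E)-1$.
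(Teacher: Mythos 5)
The paper does not actually prove this theorem: it is quoted with citations to \cite[Thm.~68]{JuPe18} for (CL4) and to \cite{BCIJS20,BCJ22} for (F3), the lattice statements and the cryptomorphism. So your submission is a self-contained replacement rather than a variant of an in-paper argument. The parts you actually carry out are correct. The unit-jump lemma is the right engine (note that for $x\notin V$ one needs $V\cap\subspace{x}=0$ and $\rho(0)=0$ to get $\rho(V+\subspace{x})\le\rho(V)+1$ from (R3)); your derivation of (CL4) from it is sound, using \eqref{e-ClosureRho} and monotonicity to sandwich $\rho(V+\subspace{x}+\subspace{y})$. The proof of (F3) via $F'=\cl(F+\subspace{x})$, the equivalence ``cover $\Leftrightarrow$ rank jump $1$'' (which tacitly uses that strict inclusion of flats forces strict rank increase --- worth stating explicitly), the semimodularity computation, and the Jordan--Dedekind consequence are all standard and correctly executed.

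The one genuine shortfall is the final clause, the cryptomorphism from (F1)--(F3) back to a $q$-matroid: this is the substantial half of the theorem and you only outline it. Two points there need real work. First, well-definedness of $\mathrm{h}$ requires showing that the \emph{abstract} lattice determined by (F1)--(F3) is semimodular \emph{before} any rank function exists; this is where (F3) is used in an essential way and it is not a formality. Second, for submodularity of $\rho=\mathrm{h}\circ\cl$ the cleaner route is not an induction on $\dim W$ but the standard fact that the height function of a semimodular lattice of finite length is submodular, combined with $\cl(V+W)=\cl(V)\vee\cl(W)$ and $\cl(V\cap W)\le\cl(V)\wedge\cl(W)$ (the latter inequality, not equality, is exactly what makes the argument go through with the correct direction of the inequality). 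One must also check (R1), i.e.\ $\mathrm{h}(\cl(V))\le\dim V$, by building a chain of closures of a flag of $V$ and invoking (F3) at each step. As written, the converse direction is a plan rather than a proof, so the theorem is not yet fully established by your text.
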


The first of the following examples illustrates that neither (F3) nor (CL4) is true for \qPM{}s.
This comes to no surprise, since (F3) and (CL4) do not hold for classical polymatroids either.
The second example shows that the lattice of flats (without  rank values) carries insufficient information to
distinguish \qPM{}s from $q$-matroids.
Of course, thanks to \cref{P-FlatClosure} and~\eqref{e-ClosureRho} the flats  together with their
rank values uniquely determine the \qPM{}.

\begin{exa}\label{E-FlatsNotF3}
\begin{alphalist}
\item Consider the code $\cC_1=\subspace{A_1,\dots,A_5}\leq\F_2^{5\times2}$ from \cref{E-RowMatMRD}.
       It can be verified that $\cM_\cc(\cC)$ has 81 flats and 29 hyperplanes.
       Furthermore, the collection of flats does not satisfy (CL4) and (F3).
       Unsurprisingly, the hyperplane axioms known for $q$-matroids (see \cite[Def.~12]{BCJ22}) do not hold either, and in particular,
       the hyperplanes do not all have the same rank.

\item Let $\F=\F_2$ and $\cC=\subspace{E_{12}+E_{23}+E_{33},\,E_{12}+E_{13}+E_{33},\, E_{12}+E_{13}+E_{21}+E_{32}}\leq \F^{3\times 3}$.
         It is straightforward to verify that the associated column \qPM{} $\cM=(\F^3,\,\rho_\cc)$ is not a $q$-matroid but has the same collection of flats as the $q$-matroid
         $\cM_G$, where
          \[
                G=\begin{pmatrix}1&0&\alpha\\0&1&\alpha\end{pmatrix}\in\F_4^{2\times3}.
          \]
          As a consequence, $\cF(\cM)$ and the closure operator satisfy (F1)--(F3) and (CL1)--(CL4).
          \end{alphalist}
\end{exa}

The rest of this section is devoted to showing that the generalized weights of a rank-metric code in $\Fnm$ can be determined by
the flats of the associated column \qPM{} if $m> n$. For $m<n$, the corresponding result is true for the row
\qPM{}, and for $m=n$ both \qPM{}s are needed.
In order to be aligned with most of the literature on rank-metric codes, we will assume $m\geq n$.
For further details on  generalized weights and anticodes we refer to \cite{Ra16b}, from which the next definition is taken.

\newpage
\begin{defi}[\mbox{\cite[Def.~22 and~23]{Ra16b}}]\label{D-GenWeights}
Let $m\geq n$ and $\cC\leq\Fnm$ be a  code.
\begin{alphalist}
\item We define $\maxrk(\cC)=\max\{\rk(M)\mid M\in\cC\}$.
         We call the code~$\cC$ an \emph{optimal anticode} if $\dim\cC=m\,\maxrk(\cC)$.
\item For $i=1,\ldots,\dim\cC$ the \emph{$i$-th generalized weight} of~$\cC$ is defined as
        \[
            a_i(\cC)=\frac{1}{m}\min\{\dim(\cA)\mid \cA\leq\Fnm\text{ is an optimal anticode}, \dim(\cC\cap\cA)\geq i\}.
        \]
\end{alphalist}
\end{defi}

In \cite{GJLR19} it has been shown that the generalized weights can be computed from the associated \qPM{}s of the code.
We will derive the generalized weights in a slightly different form, and, since we need the ideas of the proof later on, we also briefly
sketch the proof.
In \cite[Thm.~7.2]{GJLR19} the result has been used to characterize optimal anticodes with the aid of \qPM{}s.

\begin{theo}[\mbox{\cite[Thm.~7.1]{GJLR19}}]\label{T-GenWeightqPM}
Let $m\geq n$ and $\cC\leq\Fnm$ be a  code.
Define
\begin{align*}
    a_i^\cc(\cC)&=n-\max\{\dim V\mid V\in\cV(\F^n),\,\rho_\cc(V)\leq\rho_\cc(\F^n)-i/m\},\\[.6ex]
    a_i^\rr(\cC)&=m-\max\{\dim W\mid W\in\cV(\F^m),\,\rho_\rr(W)\leq\rho_\rr(\F^m)-i/n\}.
\end{align*}
Then for all $i=1,\ldots,\dim\cC$
\[
    a_i(\cC)=\begin{cases} \qquad a_i^\cc(\cC),&\text{if }m>n,\\[.6ex]\ \min\{a_i^{\cc}(\cC),a_i^{\rr}(\cC)\},&\text{if }m=n.\end{cases}
\]
\end{theo}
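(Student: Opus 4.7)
The plan is to reduce the optimisation in the definition of $a_i(\cC)$ to one over a single explicit family of anticodes by invoking the classification of optimal anticodes. Recall (Meshulam/Ravagnani, as stated in \cite{Ra16b}) that for $m\ge n$ every optimal anticode $\cA\leq\Fnm$ has, up to $\GL_n(\F)\times\GL_m(\F)$-action, the shape $\Fnm(V,\cc)=\{M\mid\colsp(M)\leq V\}$ for some $V\in\cV(\F^n)$, and, in the square case $m=n$, also the shape $\Fnm(W,\rr)=\{M\mid\rowsp(M)\leq W\}$ for some $W\in\cV(\F^m)$. These have $\F$-dimensions $m\dim V$ and $n\dim W=m\dim W$, respectively, so after dividing by $m$ the outer minimisation in \cref{D-GenWeights}(b) splits along the same two families, and only column-type anticodes are available when $m>n$.

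Next I would translate the intersection condition into a condition on the rank functions. Since $\cC\cap\Fnm(V,\cc)=\cC(V,\cc)$, applying the defining identity of $\rho_\cc$ in \cref{P-RMCMatroid} with $V^\perp$ in place of $V$ gives the clean formula
\[
   \dim\cC(V,\cc)=\dim\cC-m\,\rho_\cc(V^\perp).
\]
Using $\dim\cC=m\rho_\cc(\F^n)$, the constraint $\dim(\cC\cap\Fnm(V,\cc))\geq i$ becomes $\rho_\cc(V^\perp)\leq\rho_\cc(\F^n)-i/m$. Analogously, $\dim\cC(W,\rr)=\dim\cC-n\rho_\rr(W^\perp)$ turns the row-type constraint into $\rho_\rr(W^\perp)\leq\rho_\rr(\F^m)-i/n$.

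With these translations, the minimisation of $\frac{1}{m}\dim\cA$ over column-type anticodes equals
\[
  \min\{\dim V\mid \rho_\cc(V^\perp)\leq\rho_\cc(\F^n)-i/m\}=n-\max\{\dim U\mid\rho_\cc(U)\leq\rho_\cc(\F^n)-i/m\}=a_i^\cc(\cC),
\]
after the substitution $U=V^\perp$. The analogous computation with row-type anticodes yields $a_i^\rr(\cC)$. Feasibility of either optimisation for $1\leq i\leq\dim\cC$ is immediate, since the whole space $\Fnm=\Fnm(\F^n,\cc)=\Fnm(\F^m,\rr)$ is itself an optimal anticode and satisfies both constraints. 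Combining: if $m>n$ only the column family occurs, giving $a_i(\cC)=a_i^\cc(\cC)$; if $m=n$ both families are available and we take the minimum, giving $a_i(\cC)=\min\{a_i^\cc(\cC),a_i^\rr(\cC)\}$.

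The main obstacle is the classification step: one must know that the two explicit families exhaust all optimal anticodes (up to equivalence), and one must also check that equivalent anticodes suffice — this is fine because $\rho_\cc$ and $\rho_\rr$ are equivariant under $\GL_n\times\GL_m$ in the sense of \cref{P-EquivCodeMatroid}, so applying a $(X,Y)$-equivalence to $\cA=\Fnm(V,\cc)$ merely replaces $V$ by $(X\T)^{-1}V$ without changing the optimal value of the minimisation. Everything else is bookkeeping using $\dim\cC=m\rho_\cc(\F^n)=n\rho_\rr(\F^m)$ and a perp-substitution.
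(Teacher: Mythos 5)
Your proposal is correct and follows essentially the same route as the paper: invoke the Meshulam/Ravagnani classification of optimal anticodes (column-type only for $m>n$, both column- and row-type for $m=n$), translate the condition $\dim(\cC\cap\Fnm(V,\cc))\geq i$ into $\rho_\cc(V^\perp)\leq\rho_\cc(\F^n)-i/m$ via the defining formula for $\rho_\cc$, and finish with the perp-substitution; your extra remarks on feasibility and on the family being closed under the $\GL_n\times\GL_m$-action are harmless additions (the classification is in fact exact, not merely up to equivalence).
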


\begin{proof}
Let first $m> n$. It is well known \cite[Thm.~3]{Mes85} that the optimal anticodes in $\Fnm$ are the spaces of the form $\Fnm(V,\cc)$ for any
$V\in\cV(\F^n)$.
Furthermore $\dim\Fnm(V,\cc)=m\dim V$. Since $\cC\cap\Fnm(V^\perp,\cc)=\cC(V^\perp,\cc)$ and
$m^{-1}\dim\cC=\rho_\cc(\F^n)$, the very definition of~$\rho_\cc$ leads to
\begin{equation}\label{e-cCcA}
   \dim(\cC\cap\Fnm(V^\perp,\cc))\geq i\Longleftrightarrow \rho_\cc(V)\leq\rho_\cc(\F^n)-i/m.
\end{equation}
Now the statement follows from $n-\dim V=\dim V^\perp=m^{-1}\dim \Fnm(V^\perp,\cc)$.
In the case $m=n$, the optimal anticodes are of the form $\F^{m\times m}(V,\cc)$ and $\F^{m\times m}(V,\rr)$, where $V\in\cV(\F^m)$.
This implies that we have to take the row and column \qPM{} into account, and again the statement follows.
\end{proof}

Let us briefly relate~$a_i^\cc(\cC)$ to certain invariants introduced in the literature.

\begin{rem}\label{R-TauIneq}
\begin{alphalist}
\item In \cite[Def.~10]{GhJo20} the authors introduce generalized weights of \qPM{}s.
Since their definition of \qPM{}s differs from ours, we need to be careful when comparing the two notions of generalized weights.
Let us start with a \qPM{} $\cM=(\F^n,\rho)$ in our sense and let~$\mu$ be a denominator of~$\cM$.
Set  $\tau:=\mu\rho$.
Then $\cM':=(\F^n,\tau)$ is a $(q,\mu)$-polymatroid in the sense of \cite[Def.~1]{GhJo20}.
In  \cite[Def.~10]{GhJo20} the authors define the $i$-th generalized weight of~$\cM'$ as
$d_i(\cM')=\min\{\dim X\mid \tau(\F^n)-\tau(X^\perp)\geq i\}$.
Suppose now that $\cM=\cM_\cc(\cC)$ for some rank-metric code $\cC\leq\Fnm$.
Then $\mu:=m$ is a denominator and  the equivalence $\rho(V)\leq\rho(\F^n)-i/m\Longleftrightarrow\tau(\F^n)-\tau(V)\geq i$
shows that $d_i(\cM')=a^{i}_\cc(\cC)$ for all~$i$.
\item In \cite[Thm.~8]{BMS20} the authors associate a $q$-demimatroid to a rank-metric code.
        A $q$-demimatroid is a generalization of a \qPM{} that does not require submodularity, but captures a certain duality relation;
        see \cite[p.~1505]{BMS20}.
        More precisely, to the code $\cC\leq\F^{n\times m}$  they associate the $q$-demimatroid $(\F^n,s,t)$ defined by
         $s(V)=\dim\cC(V,\cc)$ and $t(V)=\dim\cC^{\perp}(V,\cc)$.
        Neither of these functions is submodular, but they satisfy the required duality relation
        thanks to \eqref{e-DimCV}.
        In \cite[p.~1507]{BMS20} the authors introduce various combinatorial invariants of such $q$-demimatroids,
        and it is straightforward to verify that $a_i^\cc(\cC)$ equals their $\sigma_i$ for all $i$.
       The power of their approach comes to light in particular in \cite[Sec.~4.2 and~4.3]{BMS20}, where the authors provide a very
       nice and short proof of the Wei duality for the generalized weights of a rank-metric code with the aid of the associated $q$-demimatroid.
       \end{alphalist}
 \end{rem}

We return to \cref{T-GenWeightqPM} and consider the case $m=n$ in further detail.
As we show next, for  $i=1$ we have $a_1(\cC)=a_1^\cc(\cC)=a_1^\rr(\cC)$.

\begin{cor}\label{C-drkAntiCode}
Let $\cC\leq\F^{m\times m}$ be a  code. Then $a_1^\cc(\cC)=a_1^\rr(\cC)$ and therefore
\begin{align*}
  \drk(\cC)&=m-\max\{\dim V\mid V\in\cV(\F^m),\,\rho_\cc(V)\leq\rho_\cc(\F^m)-1/m\}\\
      &=m-\max\{\dim V\mid V\in\cV(\F^m),\,\rho_\rr(V)\leq\rho_\rr(\F^m)-1/m\}.
\end{align*}
\end{cor}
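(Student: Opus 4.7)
The plan is to show directly that both $a_1^\cc(\cC)$ and $a_1^\rr(\cC)$ equal $\drk(\cC)$, from which the corollary is immediate: the equality $a_1^\cc(\cC)=a_1^\rr(\cC)$ is then trivial, and the two displayed identities are just the defining formulas for $a_1^\cc$ and $a_1^\rr$ in \cref{T-GenWeightqPM}. The key input is equivalence~\eqref{e-cCcA} from the proof of \cref{T-GenWeightqPM}: for $i=1$, the condition $\rho_\cc(V)\leq\rho_\cc(\F^m)-1/m$ is equivalent to $\cC(V^\perp,\cc)\neq 0$, i.e.\ to the existence of a nonzero $M\in\cC$ with $\colsp(M)\leq V^\perp$.

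First I would rewrite the max in the definition of $a_1^\cc(\cC)$ via the substitution $W=V^\perp$ (so $\dim V=m-\dim W$), obtaining
\[
    a_1^\cc(\cC)=\min\{\dim W\mid W\in\cV(\F^m),\ \cC(W,\cc)\neq 0\}.
\]
The right-hand side is bounded below by $\drk(\cC)$, since any nonzero $M\in\cC$ with $\colsp(M)\leq W$ satisfies $\rk(M)=\dim\colsp(M)\leq\dim W$, and it is attained by taking $W=\colsp(M^*)$ for any codeword $M^*$ of minimum rank. Hence $a_1^\cc(\cC)=\drk(\cC)$.

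Next, the row-shortening analogue of \eqref{e-cCcA} (proved identically using $\rho_\rr$ in place of $\rho_\cc$ and the fact that optimal anticodes of row type in $\F^{m\times m}$ are the spaces $\F^{m\times m}(W,\rr)$) gives
\[
   a_1^\rr(\cC)=\min\{\dim W\mid W\in\cV(\F^m),\ \cC(W,\rr)\neq 0\},
\]
and the same argument, now using $\dim\rowsp(M)=\rk(M)$ for any nonzero $M\in\cC$, yields $a_1^\rr(\cC)=\drk(\cC)$. Combining the two equalities proves $a_1^\cc(\cC)=a_1^\rr(\cC)$, and substituting back into the defining expressions from \cref{T-GenWeightqPM} yields the two displayed formulas for $\drk(\cC)$.

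There is essentially no obstacle here beyond careful bookkeeping: one must be mindful of the complementation $W\leftrightarrow V^\perp$ that swaps the ``max over $\dim V$'' into a ``min over $\dim W$'', and one must invoke the row-analogue of \eqref{e-cCcA}, which holds by exactly the same reasoning as in the proof of \cref{T-GenWeightqPM} but applied to row shortenings. Notably, the corollary does not actually use the $m=n$ part of \cref{T-GenWeightqPM} in a deep way; rather, it shows that for the first generalized weight the column and row points of view already agree individually, reflecting the basic fact that rank equals column-rank equals row-rank.
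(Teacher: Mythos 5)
Your proof is correct and follows essentially the same route as the paper's: both arguments reduce to a minimum-rank codeword $M$ and observe that its column space and its row space each realize the minimum dimension of a space $W$ with $\cC(W,\cc)\neq 0$ (resp.\ $\cC(W,\rr)\neq 0$), via the equivalence~\eqref{e-cCcA}. The only cosmetic difference is that you establish $a_1^\cc(\cC)=a_1^\rr(\cC)=\drk(\cC)$ directly, whereas the paper first cites $\drk(\cC)=a_1(\cC)$ from \cite[Thm.~30]{Ra16b} and then only proves $a_1^\cc(\cC)=a_1^\rr(\cC)$; your version is marginally more self-contained.
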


Recall from \cref{P-RMCMatroid} that $\rho_\cc(\F^m)=\rho_\rr(\F^m)=m^{-1}\dim\cC$ for any code $\cC\leq\F^{m\times m}$.
Thus the above inequalities can be written as $\rho_\cc(V)\leq(\dim\cC-1)/m$ and $\rho_\rr(V)\leq(\dim\cC-1)/m$.

\begin{proof}
It is well known \cite[Thm.~30]{Ra16b} that $\drk(\cC)=a_1(\cC)$, the first generalized weight.
Hence it suffices to show $a_1^\cc(\cC)=a_1^\rr(\cC)$.
Let $d=\drk(\cC)$ and $M\in\cC$ be such that $\rk(M)=d$.
Let $V=\colsp(M)$ and $W=\rowsp(M)$ and set $\cA=\F^{m\times m}(V,\cc)$ and $\cA'=\F^{m\times m}(W,\rr)$.
Then $\cA$ and~$\cA'$ are optimal anticodes of dimension~$md$ and $\cC\cap\cA\neq0\neq\cC\cap\cA'$.
Thus~\eqref{e-cCcA} yields $\rho_\cc(V^\perp)\leq\rho_\cc(\F^m)-1/m$ and, similarly, $\rho_\rr(W^\perp)\leq\rho_\rr(\F^m)-1/m$.
Since~$V$ and~$W$ are clearly of minimal dimension satisfying $\cC\cap\cA\neq0\neq\cC\cap\cA'$, all of
this shows that $a_1^\cc(\cC)=a_1^\rr(\cC)$.
\end{proof}

For $i>1$ and $\cC\leq\F^{m\times m}$ we have in general $a_i^{\cc}(\cC)\neq a_i^{\rr}(\cC)$, and it depends on~$i$ which of the two is the minimum.
Our next example shows that this is even the case for \emph{vector rank-metric codes} (that is, $\F_{q^m}$-linear codes in $\F_{q^m}^m$).
As a consequence, the generalized weights in \cref{D-GenWeights} do not coincide with the generalized weights introduced in \cite{JuPe17}.
The example also addresses a small oversight in \cite{Ra16b} and shows that \cite[Thm.~28]{Ra16b} is only true for codes in $\F_{q^m}^n$ with $n<m$.

\begin{exa}\label{E-FqmLinearGenWeights}
Let $q=2$ and $m=6$. Let $\omega\in\F_{2^6}$ be a primitive element with minimal polynomial $f=x^6+x^4+x^3+x+1$.
Consider the generator matrix
\[
   G=\begin{pmatrix}1&0&0&\omega^9&0&0\\0&1&0&0&\omega^{18}&0\\0&0&1&0&0&\omega^{18}\end{pmatrix}
   \in\F_{2^6}^{3\times 6}.
\]
Note that, in fact, the entries of~$G$ are in the subfield~$\F_{2^3}$.
Using the isomorphism~$\Psi$ from~\eqref{e-CoordMap}, we obtain the rank-metric code
$\Psi(\rowsp_{\F_{2^6}}(G))=\cC:=\subspace{A_j\Delta_f^i\mid j=1,2,3,\,i=0,\ldots,m-1}$,
where~$\Delta_f$ is as in \eqref{e-Deltaf} and $A_j$ is the image of the $j$-th row of~$G$ under~$\Psi$.
Thus $\dim_{\F_2}\cC=18$.
By construction, the row spaces of the matrices $A_1,A_2,A_3$ are contained in the $3$-dimensional subspace
$\hat{V}:=\Psi(\F_{2^3})\leq\F_2^6$.
It turns out that $\dim\cC(\hat{V},\rr)=9$ (which implies $a_9^\rr\leq 3$).
SageMath computations lead to the following data
\[
   \begin{array}{|c||c|c|c|c|c|c|c|c|c|c|c|c|c|c|c|c|c|c|}
   \hline
      i &1&2&3&4&5&6&7&8&9&10&11&12&13&14&15&16&17&18\\\hline\hline
     a_i^\cc &2&2&2&2&2&2&4&4&4&4&4&4&6&6&6&6&6&6 \\\hline
     a_i^\rr &2&2&2&3&3&3&3&3&3&5&5&5&6&6&6&6&6&6  \\\hline
    \end{array}
\]
We see that, for instance $a_9^\rr<a_9^\cc$, whereas $a_{10}^\rr>a_{10}^\cc$.
This  implies that the generalized weights $a_i(\cC)$  in general
do not coincide with $a_i^\cc(\cC)$.
In  \cite{JuPe17} the latter are defined to be the generalized weights of~$\cC$
(see \cite[Cor.~4.4, Thm.~5.4]{JuPe17} and \cite[Thm.~18]{Ra16b}).
This shows that the definitions of generalized weights in \cite[Def.~23]{Ra16b} and \cite{JuPe17}
do not agree for $\F_{q^m}$-linear codes in $\F_{q^m}^m$.
\end{exa}

We return to general codes in $\F^{n\times m}$.
Our next result shows that the generalized weights of a rank-metric code are determined by the flats of the \qPM.
It is straightforward to check that this is the analogue of \cite[Thm.~3]{Bri07}, where the generalized weights of a
linear block code are determined via cocircuits of suitable truncations of the associated matroid.
The complements of these cocircuits are, by definition, flats of a certain rank in the original matroid.
The only difference to our result below is the inequality (rather than equality) in $\rho_\cc(V)\leq\rho_\cc(\F^n)-i/m$
below. This is needed because -- in contrast to matroids and $q$-matroids -- for \qPM{}s equality may not be attained.

\begin{theo}\label{T-GenWeightqPMFlats}
Let $m\geq n$ and $\cC\leq\Fnm$ be a  code.
Let $\cF_\cc$ and $\cF_\rr$ be the sets of flats of the \qPM{}s $\cM_\cc(\cC)=(\F^n,\rho_\cc)$ and
$\cM_\rr(\cC)=(\F^m,\rho_\rr)$, respectively.
Define
\begin{align*}
    b_i^\cc(\cC)&=n-\max\{\dim V\mid V\in\cF_\cc,\,\rho_\cc(V)\leq\rho_\cc(\F^n)-i/m\},\\[.6ex]
    b_i^\rr(\cC)&=m-\max\{\dim W\mid W\in\cF_\rr,\,\rho_\rr(W)\leq\rho_\rr(\F^m)-i/n\}.
\end{align*}
Then for all $i=1,\ldots,\dim\cC$
\[
    a_i(\cC)=\begin{cases} \qquad b_i^\cc(\cC),&\text{if }m>n,\\[.6ex] \min\{b_i^{\cc}(\cC),b_i^{\rr}(\cC)\},&\text{if }m=n.\end{cases}
\]
Furthermore, if $n=m$ then $b_1^{\cc}(\cC)=b_1^{\rr}(\cC)$.
\end{theo}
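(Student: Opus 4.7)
The plan is to derive the statement directly from \cref{T-GenWeightqPM} (together with \cref{C-drkAntiCode} for the final claim) by showing that one loses nothing when restricting the maximizations in $a_i^\cc(\cC)$ and $a_i^\rr(\cC)$ to flats of the corresponding $q$-polymatroid. Concretely, I would prove
\[
    a_i^\cc(\cC)=b_i^\cc(\cC)\ \text{ and }\ a_i^\rr(\cC)=b_i^\rr(\cC)
\]
for every admissible~$i$, and then quote \cref{T-GenWeightqPM} to obtain the case distinction between $m>n$ and $m=n$, and \cref{C-drkAntiCode} to obtain $b_1^\cc(\cC)=b_1^\rr(\cC)$ when $n=m$.

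The key step is the column-side reduction; the row case is identical after replacing $\cM_\cc(\cC)$ by $\cM_\rr(\cC)$. I would start with any $V\in\cV(\F^n)$ satisfying $\rho_\cc(V)\leq\rho_\cc(\F^n)-i/m$ and replace it by its closure $\cl(V)$ in $\cM_\cc(\cC)$. By~(CL1) we have $\dim\cl(V)\geq\dim V$, and by~\eqref{e-ClosureRho} we have $\rho_\cc(\cl(V))=\rho_\cc(V)$, so $\cl(V)$ satisfies the same rank inequality while having at least the same dimension. Since $\cl(V)\in\cF_\cc$ by \cref{P-FlatClosureA}, this shows that the maximum defining $a_i^\cc(\cC)$ is attained on a flat, giving the inequality $a_i^\cc(\cC)\geq b_i^\cc(\cC)$. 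The reverse inequality is trivial because every flat is a subspace, so equality holds. The analogous argument in $\cV(\F^m)$ using the closure operator of $\cM_\rr(\cC)$ yields $a_i^\rr(\cC)=b_i^\rr(\cC)$.

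There is no real obstacle: the argument is essentially bookkeeping and relies only on the most basic properties of the closure operator, namely that it preserves the rank function and weakly enlarges subspaces. Once $a_i^\cc=b_i^\cc$ and $a_i^\rr=b_i^\rr$ are in place, the main formula for $a_i(\cC)$ is a verbatim restatement of \cref{T-GenWeightqPM}, and the final identity is obtained as
\[
   b_1^\cc(\cC)=a_1^\cc(\cC)=a_1^\rr(\cC)=b_1^\rr(\cC),
\]
where the middle equality is precisely \cref{C-drkAntiCode}.
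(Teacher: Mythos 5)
Your proposal is correct and follows essentially the same route as the paper's proof: reduce to showing $a_i^\cc(\cC)=b_i^\cc(\cC)$ (and analogously for rows) by replacing an arbitrary $V$ with $\cl(V)$, using (CL1), \eqref{e-ClosureRho}, and the fact that closures are flats, then invoke \cref{T-GenWeightqPM} and \cref{C-drkAntiCode}. No gaps.
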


\begin{proof}
We will show $b_i^\cc(\cC)=a_i^\cc(\cC)$ for all~$i$. The proof for $b_i^\rr(\cC)=a_i^\rr(\cC)$ is analogous.
Clearly, $a_i^\cc(\cC)\leq b_i^\cc(\cC)$ for all~$i$.
For the converse inequality, let $V\in\cV(\F^n)$ such that $\rho_\cc(V)\leq\rho_\cc(\F^n)-i/m$.
Consider $\cl(V)$, the closure of~$V$.
Then $\dim V\leq \dim\cl(V)$ by~(CL1) and $\rho_\cc(\cl(V))\leq\rho_\cc(\F^n)-i/m$ thanks to \eqref{e-ClosureRho}.
All of this implies the desired inequality $b_i^\cc(\cC)\leq a_i^\cc(\cC)$ for all~$i$.
The last statement for $i=1$ follows from \cref{C-drkAntiCode}.
\end{proof}

Our final result allows us to determine the rank distance of a rank-metric code via the hyperplanes of the associated \qPM.

\begin{cor}
Let $\cC\leq\Fnm$ be a  code.
\begin{alphalist}
\item If $m>n$, then $\drk(\cC)=n-\max\{\dim H\mid H\text{ is a hyperplane in }\cM_\cc(\cC)\}$.
\item If $m=n$, then $\drk(\cC)=n-w$, where
\[
      w=\max\{\dim H\mmid H\text{ is a hyperplane in }\cM_\cc(\cC)\}
                    =\max\{\dim H\mmid H\text{ is a hyperplane in }\cM_\rr(\cC)\}.
\]
\end{alphalist}
\end{cor}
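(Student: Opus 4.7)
The plan is to reduce everything to \cref{T-GenWeightqPMFlats} for the case $i=1$, combined with the identification $\drk(\cC)=a_1(\cC)$ from \cite[Thm.~30]{Ra16b}. Once this is in place, the only substantive task is to show that the maximum appearing in the definition of $b_1^\cc(\cC)$ (and, in the square case, $b_1^\rr(\cC)$) is actually attained on a hyperplane of the associated \qPM{}, rather than on an arbitrary flat of the right rank.

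For part~(a), I would fix the column rank function $\rho_\cc$ and argue in two directions. First, every hyperplane $H$ of $\cM_\cc(\cC)$ satisfies $\rho_\cc(H)\leq\rho_\cc(\F^n)-1/m$: if $\rho_\cc(H)=\rho_\cc(\F^n)$, then by monotonicity $\rho_\cc(H+\langle x\rangle)=\rho_\cc(H)$ for every $x\in\F^n\setminus H$, contradicting the defining property of a flat; and since $m$ is a denominator of $\cM_\cc(\cC)$ by \cref{P-RMCMatroid}, the strict inequality upgrades to $\rho_\cc(H)\leq\rho_\cc(\F^n)-1/m$. Second, if $V\in\cF_\cc$ is of maximal dimension among flats with $\rho_\cc(V)\leq\rho_\cc(\F^n)-1/m$, then $V$ is a hyperplane: any strictly intermediate flat $V\lneq V'\lneq\F^n$ would force $\rho_\cc(V')\geq \rho_\cc(\F^n)-1/m+1/m=\rho_\cc(\F^n)$ by maximality of $\dim V$, which is impossible because a flat of full rank must equal $\F^n$ (otherwise adjoining any vector outside it would again contradict the flat axiom). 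These two facts together show that $b_1^\cc(\cC)=n-\max\{\dim H\mid H\text{ a hyperplane in }\cM_\cc(\cC)\}$, and the first formula then follows from \cref{T-GenWeightqPMFlats}.

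For part~(b) with $n=m$, the same argument applied to the row \qPM{} gives the analogous identity for $b_1^\rr(\cC)$ in terms of hyperplanes of $\cM_\rr(\cC)$. The equality of the two maxima is then exactly the content of the last sentence of \cref{T-GenWeightqPMFlats} (namely $b_1^\cc(\cC)=b_1^\rr(\cC)$), and so both agree with $w$ and with $n-\drk(\cC)$ via \cref{T-GenWeightqPMFlats}.

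The only mildly delicate point is the hyperplane-maximality argument, specifically the claim that no proper flat of $\F^n$ can have rank $\rho_\cc(\F^n)$; but this is immediate from the definition of a flat together with monotonicity, so I do not expect any real obstacle. Everything else is a direct invocation of already-proven results.
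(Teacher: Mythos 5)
Your proposal is correct and follows essentially the same route as the paper: reduce to \cref{T-GenWeightqPMFlats} with $i=1$ (via $\drk(\cC)=a_1(\cC)$ and, for $n=m$, the equality $b_1^\cc=b_1^\rr$), and then show the dimension-maximal flat satisfying $\rho_\cc(V)\leq\rho_\cc(\F^n)-1/m$ is a hyperplane using the denominator $m$ and the strict rank increase between nested flats. If anything, you are slightly more complete than the paper, which leaves implicit the converse direction that every hyperplane satisfies the rank bound (so that no hyperplane can exceed the maximizing dimension).
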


\begin{proof}
It is well known \cite[Thm.~30]{Ra16b} that $\drk(\cC)=a_1(\cC)$.
Suppose first that $m>n$. Denote by~$\cF_\cc$ the collection of flats of the \qPM{} $\cM_\cc(\cC)$.
\cref{T-GenWeightqPMFlats} implies $\drk(\cC)=n-v$, where
$v=\max\{\dim V\mid V\in\cF_\cc,\,\rho_\cc(V)\leq\rho_\cc(\F^n)-1/m\}$.
Let $V\in\cF_\cc$ be such that $\dim V=v$ and $\rho_\cc(V)\leq\rho_\cc(\F^n)-1/m$.
Suppose~$V$ is not a hyperplane.
Then there exists $V'\in\cF_\cc$ such that $V\lneq V'\lneq \F^n$.
Since all these spaces are flats we have $\rho_\cc(V)<\rho_\cc(V')<\rho_\cc(\F^n)$.
Using that~$m$ is a denominator of $\cM_\cc(\cC)$, this yields $\rho_\cc(V')\leq \rho_\cc(\F^n)-1/m$.
Now $\dim V'>v$ leads to a contradiction. Thus~$V$ is a hyperplane.
This concludes the case $m>n$.
For $m=n$ we know from \cref{C-drkAntiCode} that $\drk(\cC)=a_1^\cc(\cC)=a_1^\rr(\cC)$.
Thus the result follows from the first part of this proof.
\end{proof}

\section*{Further Questions}
Our study gives rise to plenty of further question. We list a few of them.
\begin{arabiclist}
\item In~\cite{GLJ21I} we derive a cryptomorphic definition of \qPM{}s
        in terms of independent spaces along with their rank values.
        It remains an open question which other cryptomorphic definitions of $q$-matroids~\cite{BCJ22} can be extended to \qPM{}s.
        For instance, is there a cryptomorphism for \qPM{}s based on flats?
        Our examples show that the rank values of the flats need to be taken into account.
        In this context one may also ask whether there is a meaningful notion of cyclic flats.

\item For classical matroids the Tutte polynomial is a crucial invariant.
        It is equivalent to the rank-generating function and, for a representable matroid,
         carries a lot of information about linear block codes; see for instance \cite{Bri10}.
         A rank-generating function for \qPM{}s has been introduced in \cite{Shi19}, where also its relation
         to rank-metric codes is discussed.
         Furthermore, a Tutte polynomial for $q$-matroids has been introduced much earlier in \cite{BCJ17}, but it is not yet fully
         understood which information about an $\F_{q^m}$-linear code~$\cC$ can be extracted from it.
         As of now, it is therefore not clear what the most appropriate notion of a $q$-Tutte polynomial is.
\item In \cref{P-qMatroidExcSpace} we introduced the $q$-analogue of a special case of paving matroid.  While we only made use of it in the context of representability, this generalization gives rise to the question of a meaningful $q$-analogue of paving matroids.

\end{arabiclist}

\section*{Acknowledgement}
We are very grateful to the reviewers for their close reading and many constructive comments. They improved readability of the paper significantly.


\end{document}